\tikzstyle{small}=[font=\footnotesize]
\tikzset{
    every picture/.style={>=stealth,auto,node distance=2cm},
}
\newcommand{\N}{\mathbb{N}}
\newcommand{\abs}[1]{\lvert#1\rvert}
\newcommand{\problemx}[3]{
\par\noindent\underline{\sc#1}\par\nobreak\vskip.2\baselineskip
\begingroup\clubpenalty10000\widowpenalty10000
\setbox0\hbox{\bf INPUT:\ }\setbox1\hbox{\bf QUESTION:\ }
\dimen0=\wd0\ifnum\wd1>\dimen0\dimen0=\wd1\fi
\vskip-\parskip\noindent
\hbox to\dimen0{\box0\hfil}\hangindent\dimen0\hangafter1\ignorespaces#2\par
\vskip-\parskip\noindent
\hbox to\dimen0{\box1\hfil}\hangindent\dimen0\hangafter1\ignorespaces#3\par
\endgroup}
\newcommand{\dist}{\mathcal{D}}
\newcommand{\supp}{{\sf supp}}
\newcommand{\cParity}[1]{{#1}\text{-}\mathtt{Parity}}
\newcommand{\Parity}[1]{\mathtt{Parity}(#1)}
\newcommand{\reach}[1]{\mathtt{Reach}(#1)}
\newcommand{\safety}[1]{\mathtt{Safety}(#1)}
\newcommand{\always}{{\sf G}}
\newcommand{\eventually}{{\sf F}}
\renewcommand{\next}{{\sf X}}
\newtheorem{theorem}{Theorem}
\newtheorem{lemma}[theorem]{Lemma}
\newtheorem{proposition}[theorem]{Proposition}
\newtheorem{definition}{Definition}
\newtheorem{remark}{Remark}
\newenvironment{qtheorem}[1]{%
{\medskip\noindent\bfseries Theorem #1.\hspace{0mm}}
\begin{itshape}%
}{%
\end{itshape}%
}
\newenvironment{qlemma}[1]{%
{\medskip\noindent\bfseries Lemma #1.\hspace{0mm}}
\begin{itshape}%
}{%
\end{itshape}%
}
\newenvironment{qproposition}[1]{%
{\medskip\noindent\bfseries Proposition #1.\hspace{0mm}}%
\begin{itshape}%
}{%
\end{itshape}%
}
\newenvironment{qremark}[1]{%
{\medskip\noindent\bfseries Remark #1.\hspace{0mm}}
\begin{itshape}%
}{%
\end{itshape}%
}
\newcommand{\hide}[1]{}
\newcommand{\lrc}[1]{(#1)}
\newcommand{\lrd}[1]{\{#1\}}
\newcommand{\ignore}[1]{}
\newcommand{\emptyword}{\varepsilon}
\newcommand{\nat}{\mathbb N}
\newcommand{\setcomp}[2]{\lrd{{#1}\mid{#2}}}
\newcommand{\tuple}[1]{\lrc{#1}}
\newcommand{\bigdenotationof}[2]{\Big\llbracket #1\Big\rrbracket^{#2}}
\newcommand{\denotationof}[2]{\llbracket #1\rrbracket^{#2}}
\newcommand{\mdp}{{\mathcal M}}
\newcommand{\mdptuple}{\tuple{\states,\zstates,\rstates,\transition,\probp}}
\newcommand{\mdpclass}{{\mathcal C}}
\newcommand{\states}{S}
\newcommand{\stateset}{Q}
\newcommand{\state}{s}
\newcommand{\zstates}{\states_\zsymbol}
\newcommand{\rstates}{\states_\rsymbol}
\newcommand{\zsymbol}{\Box}
\newcommand{\rsymbol}{{\scriptscriptstyle\bigcirc}}
\newcommand{\transition}{{\longrightarrow}}
\newcommand{\probp}{P}
\newcommand{\play}{w}
\newcommand{\playset}{{\mathfrak R}}
\newcommand{\partialplay}{w}
\newcommand{\zstrat}{\sigma}
\newcommand{\zstratset}{\Sigma}
\newcommand{\memory}{{\sf M}}
\newcommand{\memconf}{{\sf m}}
\newcommand{\memsuc}{\pi_s}
\newcommand{\memup}{\pi_u}
\newcommand{\memstrattuple}{\tuple{\memory,\initmem,\memup,\memsuc}}
\newcommand{\initmem}{\memconf_0}
\newcommand{\memstrat}[1]{{\sf T}^{#1}}
\newcommand{\memstratn}{\memstrat{}}
\newcommand{\probm}{{\mathcal P}}
\newcommand{\formula}{{\varphi}}
\newcommand{\colorset}[3]{[#1]^{\coloring#2#3}}
\newcommand{\coloring}{{\mathit{C}ol}}
\newcommand{\colorof}[1]{\coloring\lrc{{#1}}}
\newcommand{\zwinset}[2]{\big [#1 \big ]^{{#2}}}
\newcommand{\reachset}{T}
\newcommand{\cset}{{\mathcal C}}
\newcommand{\valueof}[2]{{\mathtt{val}_{#1}(#2)}}
\newcommand{\constraint}{\rhd}
\newcommand{\const}{c}
\newcommand{\safe}[1]{{\it Safe}(#1)}
\newcommand{\safesub}[2]{{\it Safe_{#1}}(#2)}
\mathchardef\mhyphen="2D 
\newcommand{\optav}{\zstrat_{\mathit{opt\mhyphen av}}}
\begin{document}
\pagestyle{plain}
\title{Parity Objectives in Countable MDPs}

\author{
\IEEEauthorblockN{Stefan Kiefer\IEEEauthorrefmark{1},
Richard Mayr\IEEEauthorrefmark{2},
Mahsa Shirmohammadi\IEEEauthorrefmark{1},
Dominik Wojtczak\IEEEauthorrefmark{3}}
\IEEEauthorblockA{\IEEEauthorrefmark{1}University of Oxford, UK}
\IEEEauthorblockA{\IEEEauthorrefmark{2}University of Edinburgh, UK}
\IEEEauthorblockA{\IEEEauthorrefmark{3}University of Liverpool, UK}
}

\maketitle

\begin{abstract}
We study countably infinite MDPs with parity objectives, 
and special cases with a bounded number of colors in the Mostowski hierarchy
(including reachability, safety, B\"uchi and co-B\"uchi).

In finite MDPs there always exist optimal 
memoryless deterministic (MD) strategies 
for parity objectives,
but this does not generally hold for countably infinite MDPs.
In particular, optimal strategies need not exist.

For countable infinite MDPs, we provide a 
complete picture of the memory requirements of 
optimal (resp., $\epsilon$-optimal) strategies for all objectives 
in the Mostowski hierarchy.

In particular, there is a strong dichotomy between two different types of
objectives. For the first type, optimal strategies, if they exist, can be
chosen MD, while for the second type optimal strategies require infinite
memory.
(I.e., for all objectives in the Mostowski hierarchy, if finite-memory randomized strategies suffice then also MD-strategies suffice.)
Similarly, some objectives admit $\epsilon$-optimal MD-strategies,
while for others $\epsilon$-optimal strategies require infinite memory.
Such a dichotomy also holds for the subclass of countably infinite MDPs that are 
finitely branching, though more objectives admit MD-strategies here.
\end{abstract}

\begin{IEEEkeywords}
countable MDPs, parity objectives, strategies, memory requirement
\end{IEEEkeywords}

\section{Introduction}\label{sec:introduction}

Markov decision processes (MDPs) are a standard model for dynamic systems that
exhibit both stochastic and controlled behavior \cite{Puterman:book}.
The system starts in the initial state and makes a sequence of transitions between states. Depending on the type of the current state, either the controller gets to
choose an enabled transition (or a distribution over transitions), or the next
transition is chosen randomly according to a defined distribution.
By fixing a strategy for the controller, one obtains a probability space
of plays of the MDP.
The goal of the controller is to optimize the expected value of
some objective function on the plays of the MDP.
The fundamental questions are ``what is the optimal value that the controller
can achieve?'', ``does there exist an optimal strategy, or only
$\epsilon$-optimal approximations?'', and ``which types of strategies are
optimal or $\epsilon$-optimal?''.

Such questions have been studied extensively for finite MDPs (see
e.g.~\cite{chatterjee2012survey} for a survey) and also for certain types of
countably infinite MDPs \cite{Puterman:book,Ornstein:AMS1969}.
However, the literature on countable MDPs is mainly focused on objective functions defined w.r.t.~numeric
costs (or rewards) that are assigned to transitions, e.g.~(discounted) expected total reward
or limit-average reward. In contrast, we study qualitative objectives that are
expressed by Parity conditions and which are motived by formal verification questions.

There are works that studied particular classes
of countably infinite, but finitely branching, 
MDPs that arise from models in automata theory
\cite{Etessami:Yannakakis:ICALP05,BBS:ACM2007,brazdil2008reachability,BBEKW10,ACMSS:FOSSACS2016}.
In each of these papers, a crucial part of the analysis is establishing the existence of optimal strategies of particular structure and memory requirements, but none of them looked at proving
such properties for general countable MDPs.
Countable MDPs also naturally occur in the analysis of queueing systems \cite{kitaev1995controlled}, gambling \cite{BergerKSV08}, and branching processes \cite{pliska1976optimization}, which have multiple applications.
They also show up in the analysis of finite-state models, e.g.~in two-player stochastic games \cite{shapley1953stochastic,condon1992complexity} when reasoning about an op\-ti\-mal strategy against a fixed (randomised and memory-full) strategy of the opponent. 

\begin{figure*}[t]
    \centering
    \begin{minipage}{.35\textwidth}
        \begin{center}				
        \scalebox{.9}{\centering
\tikzstyle{dashdotted}=[dash pattern=on 3pt off 2pt on \the\pgflinewidth off 2pt]
\tikzstyle{densely dashdotted}=      [dash pattern=on 3pt off 1pt on \the\pgflinewidth off 1pt]
\tikzstyle{loosely dashdotted}=      [dash pattern=on 3pt off 4pt on \the\pgflinewidth off 4pt]
\begin{tikzpicture}[>=latex',shorten >=1pt,node distance=1.9cm,on grid,auto]

\node (safe)  at(.8,-1) [draw=none] {$\mathtt{Safety}$};
\node (reach) at(4.2,-1)[draw=none] {$\mathtt{Reach}$};
\node (s01)   at(.8,.5) [draw=none] {$\cParity{\{0,1\}}$};
\node (s12)   at(4.2,.5)[draw=none] {$\cParity{\{1,2\}}$};
\node (s012)  at(.8,2) [draw=none] {$\cParity{\{0,1,2\}}$};
\node (s123)  at(4.2,2) [draw=none] {$\cParity{\{1,2,3\}}$};
\node (s012)  at(.8,3.5) [draw=none] {$\cParity{\{0,1,2,3\}}$};
\node (s123)  at(4.2,3.5) [draw=none] {$\cParity{\{1,2,3,4\}}$};

\draw[thick,blue] (2.5,-1.5) --(2.5,-.7)  -- (5.5,.1)--(5.5,-1.5)--(2.5,-1.5);

\draw[thick,magenta] (-.7,-.5)  -- (5.7,1.45)--(5.7,-1.7)--(-.7,-1.7)--(-.7,-.5);
\draw[dashed] (-.7,4)  -- (-.7,-.5);
\draw[dashed](5.7,1.45)--(5.7,4);

\node[label=below:\rotatebox{15}{\textbf{\textcolor{blue}{$\epsilon$-optimal}}}] at (3.3,-.2) {};
\node[label=below:\rotatebox{15}{\textbf{\textcolor{magenta}{optimal}}}] at (1.2,.3) {};

\end{tikzpicture}}
				\end{center}
				~~\textbf{a)}~Infinitely branching MDPs
    \end{minipage}%
		\quad \quad \quad \quad
    \begin{minipage}{0.35\textwidth}
        \begin{center}
        \scalebox{.9}{\centering
\tikzstyle{dashdotted}=[dash pattern=on 3pt off 2pt on \the\pgflinewidth off 2pt]
\tikzstyle{densely dashdotted}=      [dash pattern=on 3pt off 1pt on \the\pgflinewidth off 1pt]
\tikzstyle{loosely dashdotted}=      [dash pattern=on 3pt off 4pt on \the\pgflinewidth off 4pt]
\begin{tikzpicture}[>=latex',shorten >=1pt,node distance=1.9cm,on grid,auto]

\node (safe)  at(.8,-1) [draw=none] {$\mathtt{Safety}$};
\node (reach) at(4.2,-1)[draw=none] {$\mathtt{Reach}$};
\node (s01)   at(.8,.5) [draw=none] {$\cParity{\{0,1\}}$};
\node (s12)   at(4.2,.5)[draw=none] {$\cParity{\{1,2\}}$};
\node (s012)  at(.8,2) [draw=none] {$\cParity{\{0,1,2\}}$};
\node (s123)  at(4.2,2) [draw=none] {$\cParity{\{1,2,3\}}$};
\node (s012)  at(.8,3.5) [draw=none] {$\cParity{\{0,1,2,3\}}$};
\node (s123)  at(4.2,3.5) [draw=none] {$\cParity{\{1,2,3,4\}}$};

\draw[thick,blue] (-.5,1.35)  -- (5.5,-.4)--(5.5,-1.5)--(-.5,-1.5)--(-.5,1.35);

\draw[thick,magenta] (-.7,3.1)  -- (5.7,0.9)--(5.7,-1.7)--(-.7,-1.7)--(-.7,3.1);
\draw[dashed] (-.7,4)  -- (-.7,3.1);
\draw[dashed] (5.7,.9)--(5.7,4);

\node[label=below:\rotatebox{-15}{\textbf{\textcolor{blue}{$\epsilon$-optimal}}}] at (3.7,0.3) {};
\node[label=below:\rotatebox{-15}{\textbf{\textcolor{magenta}{optimal}}}] at (3.2,1.9) {};

\end{tikzpicture}}
				\end{center}
				~~\textbf{b)}~Finitely branching MDPs
    \end{minipage}
		\caption{For countable MDPs, these diagrams show the memory requirements of optimal and $\epsilon$-optimal 
		strategies for  objectives in the Mostowski hierarchy. 
		An objective in a level of the hierarchy subsumes all objectives in lower levels, e.g.,
		$\cParity{\{0,1,2\}}$ subsumes~$\cParity{\{1,2\}}$. 
		We have extended the Mostowski hierarchy to include
                reachability and safety. 
		The magenta (resp., blue) regions enclose objectives where memoryless deterministic (MD) strategies are sufficient for
		optimal (resp., $\epsilon$-optimal) strategies; for objectives
                outside the regions, infinite-memory
		strategies are necessary. 	
		The left  diagram is for infinitely branching MDPs; 
		e.g., $\epsilon$-optimal strategies for all but reachability objectives 
		 require infinite memory, 
		whereas MD-strategies are sufficient for reachability. 
		The right  diagram is for finitely branching MDPs; 
		e.g., optimal strategies (if they exist) can be chosen MD 
                for all objectives subsumed by~$\cParity{\{0,1,2\}}$.
		}
\label{fig:results}
\end{figure*}
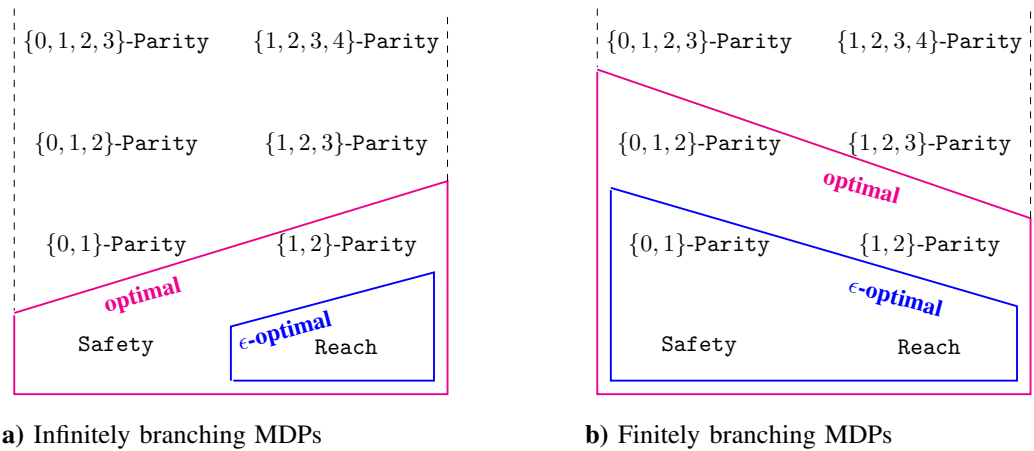

\medskip
{\bf\noindent Finite MDPs vs. Infinite MDPs:}
It should be noted that many standard properties (and proof techniques)
of finite MDPs do \emph{not}
carry over to infinite MDPs. 

E.g., given some objective, consider the set of all states in an MDP
that have nonzero value.
If the MDP is finite then this set is finite and thus there exists some
minimal nonzero value.
This property does \emph{not carry over} to infinite MDPs. 
Here the set of states is infinite and the infimum over the nonzero values can be zero.
As a consequence, even for a reachability objective, it is possible that all states have 
value $>0$, but still the value of some states is $<1$. Such phenomena appear
already in infinite-state Markov chains like the classic Gambler's ruin problem with unfair
coin tosses in the player's favor (0.6 win, 0.4 lose). The value,
i.e., the probability of ruin, is always $>0$, but still $<1$ in every state
except the ruin state itself; cf.~\cite{Feller:book} (Chapt.~14). 
Another difference is that optimal strategies need not exist, even for 
qualitative objectives like reachability or parity. Even if some state
has value $1$, there might not be any single strategy that attains the value $1$, but 
only an infinite family of $\epsilon$-optimal strategies for every $\epsilon >0$.

\medskip
{\bf\noindent Parity objectives:}
We study general countably infinite MDPs with parity objectives.
Parity conditions are widely used in temporal
logic and formal verification, e.g., they can express $\omega$-regular
languages and modal $\mu$-calculus \cite{GTW:2002}.
Every state has a \emph{color}, out of a finite set of colors encoded as natural
numbers. An infinite play is winning iff the highest color that is seen
infinitely often in the play is even. The controller wants to maximize 
the probability of winning plays.
Subclasses of parity objectives are defined by restricting the set of used
colors; these are classified in the Mostowski hierarchy \cite{Mostowski:84}
which includes, e.g., B\"uchi and co-B\"uchi objectives.
Such prefix-independent infinitary objectives cannot generally be encoded by numeric
transition rewards as in \cite{Puterman:book}, though both types subsume 
the simpler reachability and safety objectives.

There are different types of strategies, depending on whether one can take
the whole history of the play into account (history-dependent; (H)),
or whether one is limited to a finite amount of memory (finite memory; (F))
or whether decisions are based only on the current state (memoryless; (M)).
Moreover, the strategy type depends on whether the controller can
randomize (R) or is limited to deterministic choices (D).
The simplest type MD refers to memoryless deterministic strategies.

The type of strategy needed for an optimal (resp.\ $\epsilon$-optimal)
strategy for some objective is also called the \emph{strategy complexity} of the objective.
For finite MDPs, MD-strategies are sufficient for all types of qualitative and
quantitative parity objectives \cite{CAH:QEST2004,Chatterjee:2004:QSP:982792.982808}, but the picture is
more complex for countably infinite MDPs.

Since optimal strategies need not exist in general, we consider both the
strategy complexity of $\epsilon$-optimal strategies, and 
the strategy complexity of optimal strategies under the assumption that they
exist.
E.g., if an optimal strategy exists, can it be chosen MD?

We provide a complete picture of the memory requirements for objectives in the
Mostowski hierarchy, which is summarized in Figure~\ref{fig:results}.

In particular, our results show that there is a strong dichotomy between two different classes of
objectives. For objectives of the first class, optimal strategies, 
where they exist, can be chosen MD.
For objectives of the second class, optimal strategies require infinite memory
in general, in the sense that all FR-strategies achieve the objective
only with probability zero.
A similar dichotomy applies to $\epsilon$-optimal strategies.
For certain objectives, $\epsilon$-optimal MD-strategies exist, while
for all others even $\epsilon$-optimal strategies require infinite memory
in general.
This is a strong dichotomy because there are no objectives in the Mostowski
hierarchy for which other types of strategies (MR, FD, or FR) are both
necessary and sufficient.
Put differently, for all objectives in the Mostowski hierarchy, if FR-strategies suffice then MD-strategies suffice as well.

We also consider the subclass of countable MDPs that are finitely branching.
(Note that these generally still have an infinite number of states.)
The above mentioned dichotomies apply here as well, though the classes of
objectives where optimal (resp.\ $\epsilon$-optimal) strategies can be chosen MD
are larger than for general countable MDPs.

\medskip
{\bf\noindent Outline of the results:}
In Section~\ref{sec-preliminaries} we define countably infinite MDPs, strategies and parity
objectives.
In Section~\ref{sec-parity} we show examples that demonstrate that certain objectives
require infinite memory. For some of these we refer to previous work.
The main new result in this section is Theorem~\ref{thm:parity123} that shows 
that even almost-sure $\cParity{\{1,2,3\}}$ on finitely branching MDPs requires
infinite memory.
These negative results highlight the questions which other objectives still
allow MD-strategies. Apart from the case of reachability objectives, 
these questions were open. We provide complete answers in several steps.
First, in Section~\ref{sec-almost-to-optimal}, we prove a general result
(Theorem~\ref{thm:reduction-to-as})
that relates the strategy complexity of almost-sure winning strategies and
optimal strategies. The complexity of the proof is due to the fact that
we consider \emph{infinite} MDPs (which do not satisfy basic properties of 
finite MDPs in general; see above). We then use this theorem to establish
MD-strategies for B\"uchi, co-B\"uchi and $\cParity{\{0,1,2\}}$ objectives in
the following sections.
In Section~\ref{sec-inf-MDPs} we show that optimal strategies for
B\"uchi objectives, where they exist, can be chosen MD, 
even for infinitely branching MDPs.
In Section~\ref{sec-finite-branch} we consider finitely branching MDPs.
We show that optimal strategies for $\cParity{\{0,1,2\}}$, where they exist,
can be chosen MD (Theorem~\ref{thm:012quant}). This is a very general result. E.g., this question
had been open (and is non-trivial) even for almost-sure co-B\"uchi objectives.
Moreover, we show that $\epsilon$-optimal strategies for co-B\"uchi objectives
can be chosen MD (Theorem~\ref{thm:coBuchi}).
We conclude the paper with a discussion of how some results change when 
one considers uncountable MDPs.

\section{Preliminaries}\label{sec-preliminaries}

%
A \emph{probability distribution} over a countable (not necessarily finite) set~$S$ is a function
$f:\states\to[0,1]$ s.t.~$\sum_{\state\in \states}f(\state)=1$.
We use~$\supp(f) = \{\state \in \states \mid f(\state) > 0\}$ to denote the \emph{support} of~$f$.
Let $\dist(\states)$ be the set of all probability distributions over~$\states$.

We consider countably infinite Markov decision processes (MDPs)~$\mdp=\mdptuple$ where
the countable set~$\states$ of \emph{states}  is partitioned into
the  set~$\zstates$ of states  of the player  and
\emph{random states} $\rstates$.
The relation $\mathord{\transition}\subseteq\states\times\states$ is the transition relation.
We write $\state\transition{}\state'$ if $\tuple{\state,\state'}\in \transition$, and we assume that each state~$\state$ has a \emph{successor} state $\state'$ with $\state \transition \state'$.
The probability function $\probp:\rstates \to \dist(\states)$ assigns to each random 
state~$\state \in \rstates$ a probability distribution over its 
successor states. 
A set $\reachset \subseteq \states$ is a \emph{sink} in~$\mdp$ if for all $\state \in \reachset$ all successors of~$\state$ are in~$\reachset$.
The MDP~$\mdp$ is called \emph{finitely branching} if each state has only finitely many successors;
otherwise, it is \emph{infinitely branching}.
A Markov chain is an MDP where~$\zstates=\emptyset$, i.e., all states are random states.

We describe the behavior of an MDP as a one-player stochastic game played
for infinitely many rounds. The game
starts  in a given  initial state~$\state_0$. In each round,  if the game is in state $\state \in \zstates$ 
then the player (or controller) chooses a successor state~$s'$ with~$s\transition{}s'$;
otherwise the game is in a random state~$s\in \rstates$ and proceeds randomly to~$s'$ with probability~$\probp(s)(s')$.

\medskip
\noindent{\bf Strategies.}
A \emph{play}~$\play$  is an infinite sequence
$\state_0\state_1\cdots$ of states 
such that  $\state_i\transition{}\state_{i+1}$ for all $i\geq 0$;
let  $\play(i)=\state_i$ denote the $i$-th state along~$\play$.
A \emph{partial play} is a finite prefix of a play.
We say that (partial) play $\play$ \emph{visits} $\state$ if
$\state=\play(i)$ for some $i$, and that~$\play$ starts in~$s$ if $\state=\play(0)$. 
A \emph{strategy} 
is a function $\zstrat:\states^*\zstates \to \dist(S)$ that assigns to partial plays 
$\partialplay\state \in \states^*\zstates$ 
a distribution over the successors~$\{\state'\in \states\mid \state \transition{} \state'\}$. 
The set of all strategies  in 
$\mdp$ is denoted by $\zstratset_\mdp$ (we omit the subscript and write~$\zstratset$ if $\mdp$ is clear).
A (partial) play~$\state_0\state_1\cdots$ is induced by strategy~$\zstrat$
if~$\state_{i+1}\in \supp(\zstrat(\state_0\state_1\cdots\state_i))$ for all~$i$ with~$\state_i \in \zstates$, and
$\state_{i+1}\in \supp(\probp(\state_i))$ for all~$i$ with ~$\state_i \in \rstates$.
%

Since this paper focuses on the memory requirements of strategies,
we present an equivalent formulation of strategies, emphasizing 
the amount of memory required to implement a strategy.
Strategies  can
be implemented by probabilistic transducers~$\memstratn=\memstrattuple$
where $\memory$ is a  countable set (the memory of the strategy),
$\memconf_0\in\memory$ is the \emph{initial memory mode}
and  $\states$ is the input and output alphabet.
The probabilistic transition function~$\memup : \memory \times \states\to \dist(\memory)$ updates the 
memory mode of transducer. 
The probabilistic successor function~$\memsuc : \memory \times \zstates \to \dist(\states)$ 
outputs the next successor, where $\state' \in \supp(\memsuc(\memconf,\state))$ implies 
$\state\transition{}\state'$.
We extend $\memup$ to $\dist(\memory) \times \states\to \dist(\memory)$ and  
$\memsuc$ to $\dist(\memory) \times \zstates\to \dist(\states)$, in the natural way.
Moreover, we extend $\memup$
to paths  by
$\memup(\memconf,\emptyword)=\memconf$ and
$\memup(\memconf,\state_0\cdots\state_n) =
\memup(\memup(\state_0\cdots\state_{n-1},\memconf),\state_n)$.
The strategy $\zstrat_{\memstratn}:\states^*\zstates\to\dist(\states)$
induced by the transducer~$\memstratn$ is given by 
$\zstrat_{\memstratn}(\state_0\cdots\state_{n}):=\memsuc(\state_{n},\memup(\state_0\cdots\state_{n-1},\initmem))$.
Note that such strategies allow for randomized memory updates and probabilistic successor functions.


Strategies are in general \emph{history dependent}~(H) and \emph{randomized}~(R).
An H-strategy~$\zstrat$ is \emph{finite memory}~(F) if there exists some transducer $\memstratn$ with memory~$\memory$
such that  $\zstrat_{\memstratn}=\zstrat$ and $\abs{\memory}<\infty$; otherwise we say $\zstrat$ \emph{requires infinite memory}.
An F-strategy is   \emph{memoryless}~(M) (also called \emph{positional}) 
if~$\abs{\memory}=1$. 
We may view 
M-strategies as functions $\zstrat: \zstates \to \dist(\states)$.
An R-strategy $\zstrat$ is \emph{deterministic}~(D)
if $\memup$ and $\memsuc$ map to Dirac distributions;
it implies that $\zstrat(\partialplay)$ is a Dirac distribution for all partial plays~$\partialplay$.
All combinations of the properties in $\{\text{M},\text{F},\text{H}\}\times \{\text{D},\text{R}\}$ are possible, e.g., MD stands for
memoryless deterministic. HR strategies are the most general type.
 
\medskip
\noindent {\bf Probability Measures.}
An MDP $\mdp=\mdptuple$, an initial state~$\state_0$, and a strategy~$\zstrat$ induce a standard probability measure on sets of infinite plays.
We write $\probm_{\mdp,\state_0,\zstrat}({\playset})$ for the probability of a measurable set $\playset \subseteq \state_0 \states^\omega$ of plays starting from~$\state_0$.
It is defined, as usual, by first defining it on the \emph{cylinders} $s_0 s_1 \ldots s_n \states^\omega$, where $s_1, \ldots, s_n \in \states$:
if $s_0 s_1 \ldots s_n$ is not a partial play induced by~$\zstrat$ then set $\probm_{\mdp,\state_0,\zstrat}(s_0 s_1 \ldots s_n \states^\omega) = 0$; 
otherwise set $\probm_{\mdp,\state_0,\zstrat}(s_0 s_1 \ldots s_n \states^\omega) = \prod_{i=0}^{n-1} \bar{\zstrat}(s_0 s_1 \ldots s_i)(s_{i+1})$, where $\bar{\zstrat}$ is the map that extends~$\zstrat$ by $\bar{\zstrat}(w s) = \probp(s)$ for any $w s \in \states^* \rstates$.
Using Carath\'eodory's extension theorem~\cite{billingsley-1995-probability}, this defines a unique probability measure~$\probm_{\mdp,\state_0,\zstrat}$ on measurable subsets of~$s_0 \states^\omega$.

\medskip
\noindent {\bf Objectives.} 
Let $\mdp = \mdptuple$ be an MDP.
The objective of the player is determined by a predicate on
infinite plays.
We assume familiarity with the syntax and semantics of the temporal
logic LTL \cite{CGP:book}.
%
Formulas are interpreted on the structure $(\states,\transition)$.
We use 
$\denotationof{\formula}{\state} \subseteq \state \states^\omega$ to denote the set of plays starting from
$\state$ that satisfy the LTL formula $\formula$.
This set is measurable~\cite{Vardi:probabilistic},
and we just write
$\probm_{\mdp,\state,\zstrat}(\formula)$
instead of 
$\probm_{\mdp,\state,\zstrat}(\denotationof\formula\state)$.
We also write $\denotationof{\formula}{}$ for $\bigcup_{s \in S} \denotationof{\formula}{s}$.

Given a target set $\reachset \subseteq \states$, the \emph{reachability objective} is defined by $\reach{\reachset}=\denotationof{\eventually \,\reachset}{}$, i.e., $\state_0\state_1\cdots\in \reach{\reachset} \,\Leftrightarrow\, \exists i.\,
\state_i \in \reachset$.
The \emph{safety objective} is defined by $\safety{\reachset}=\denotationof{\always \,\neg \reachset}{}$, i.e., $\state_0\state_1\cdots\in \safety{\reachset} \,\Leftrightarrow\, \forall i.\,
\state_i \not\in \reachset$.
Given a reachability or a safety objective, we can assume without loss of generality that $\reachset$ is a sink in~$\mdp$.

Let $\cset \subseteq \nat$ be a finite set of colors.
A \emph{color function} $\coloring:\states\to \cset$ assigns to each state~$\state$ its color~$\colorof\state$.
For $n\in\nat$, $\mathord{\constraint}\in\{\mathord{<},\mathord{\le},\mathord{=},\mathord{\geq},\mathord{>}\}$ and $\stateset\subseteq\states$, let $\colorset
\stateset\constraint n:=\setcomp{\state\in \stateset}{\colorof\state\constraint n}$
be the set of states in $\stateset$ with color $\constraint n$.
The \emph{parity objective} is defined by
\[
 \Parity{\coloring} := \bigdenotationof{\bigvee_{i\in \cset}\left(\always\eventually \colorset{\states}{=2\cdot i}{} \wedge
 \eventually\always \colorset{\states}{\leq 2\cdot i}{}\right)
}{},
\]
i.e., $\Parity{\coloring}$ is the set of infinite plays such that the largest color that occurs infinitely often along the play is even.

The Mostowski hierarchy \cite{Mostowski:84} classifies parity objectives
by restricting the range of the function~$\coloring$ to a set of colors $\cset \subseteq \nat$.
We write $\cParity{\cset}$ for such restricted parity objectives.
In particular,  B\"uchi objectives correspond to
$\cParity{\{1,2\}}$, and co-B\"uchi objectives correspond to
$\cParity{\{0,1\}}$.
The objectives $\cParity{\{0,1,2\}}$ and $\cParity{\{1,2,3\}}$ are incomparable,
but they both subsume (modulo renaming of colors) B\"uchi and co-B\"uchi objectives.
Moreover, both $\cParity{\{0,1\}}$ and $\cParity{\{1,2\}}$ subsume the reachability objective $\reach{\reachset}$ (for MDPs with a sink~$\reachset$), by defining the color function so that $\coloring(s) = 1 \,\Leftrightarrow\, s \not\in \reachset$.
Similarly, both $\cParity{\{0,1\}}$ and $\cParity{\{1,2\}}$ subsume $\safety{\reachset}$, by defining $\coloring(s) = 1 \,\Leftrightarrow\, s \in \reachset$.

\medskip
\noindent {\bf Optimal and $\epsilon$-Optimal Strategies.}
Given an objective~$\formula$,
the value of state $\state$ in an MDP $\mdp$,
denoted by $\valueof{\mdp}{\state}$, is  
the supremum probability of achieving~$\formula$, i.e., 
$\valueof{\mdp}{\state}:=\sup_{\zstrat\in\zstratset}\probm_{\mdp,\state,\zstrat}(\formula)$. 
For $\epsilon \ge 0$ and $\state \in \states$, we say that
a strategy~$\zstrat$ is \emph{$\epsilon$-optimal}  iff
$\probm_{\mdp,\state,\zstrat}(\formula) \ge \valueof{\mdp}{\state}
-\epsilon$.
A $0$-optimal strategy is called \emph{optimal}. 
An optimal strategy is \emph{almost-surely winning} if $\valueof{\mdp}{\state}=1$.
Unlike in finite-state MDPs, optimal strategies need not exist
in countable MDPs, not even for reachability objectives in finitely branching MDPs.
However, by the definition of the value, for all 
$\epsilon>0$, an $\epsilon$-optimal strategy exists.

For an objective~$\formula$ and $\constraint\in\{\geq,>\}$ and $\const \in [0,1]$,
we define $\zwinset{\formula}{\constraint\const}$ as the set of states~$\state$ for which there exists a strategy~$\zstrat$ with $\probm_{\mdp,\state,\zstrat}(\formula) \constraint \const$.
We call a state~$\state$ \emph{almost-surely winning} if $\state \in \zwinset{\formula}{\ge 1}$,
and we call $\state$ \emph{limit-surely winning} if $\state \in \zwinset{\formula}{\ge \const}$ for every constant $\const < 1$ (which is iff $\valueof{\mdp}{\state}=1$).
On infinite arenas, limit-surely winning states are not necessarily almost-surely winning.

\section{Objectives that require infinite memory}\label{sec-parity}

In this section we consider those objectives in the Mostowski hierarchy
where optimal (resp., $\epsilon$-optimal) strategies require infinite memory.
In each such case we construct an MDP that witnesses this requirement.
In these MDPs, all FR-strategies achieve the objective only with
probability~$0$, while some HD-strategy achieves the objective 
almost-surely (resp., with arbitrarily high probability).

 \begin{figure*}[t]
    \centering
    \begin{minipage}{.35\textwidth}
        \begin{center}				
        \centering
\begin{tikzpicture}[>=latex',shorten >=1pt,node distance=1.9cm,on grid,auto,
roundnode/.style={circle, draw,minimum size=1.5mm},
squarenode/.style={rectangle, draw,minimum size=2mm},
diamonddnode/.style={diamond, draw,minimum size=2mm}]

\node [squarenode,initial,initial text={}] (s0) at(0,0) [draw]{$s_0$};
\node [squarenode] (s1) [draw,right=1.7cm of s0]{$s_1$};
\node [squarenode] (s3) [draw=none,right=1.5cm of s1]{$\cdots$}; 
\node [squarenode] (s4) [draw,right=1.5cm of s3]{$s_i$}; 
\node [squarenode] (s5) [draw=none,right=1.5cm of s4]{$\cdots$}; 
 
\node[roundnode,double] (r0)  [below=1.4cm of s0] {$r_0$};
\node[roundnode,double] (r1)  [below=1.4cm of s1] {$r_1$};
\node[roundnode] (r3)  [draw=none,below=1.4 of s3] {$\cdots$};
\node[roundnode,double] (r4)  [below=1.4cm of s4] {$r_i$};
\node[roundnode,double] (r5)  [draw=none,below=1.4cm of s5] {$\cdots$};

\node [squarenode,double,inner sep = 4pt] (t)  [below=1.7cmof r1] {$ t $};
\node [rectangle, draw,inner sep = 2.4pt] (dum) [below=1.7cmof r1]{$ t $};

\path[->] (s0) edge (s1);
\path[->] (s1) edge (s3);
\path[->] (s3) edge (s4);
\path[->] (s4) edge (s5);

\path[->] (s0) edge (r0);
\path[->] (s1) edge (r1);
\path[->] (s4) edge (r4);

\path[->] (r0) edge node [near start,left=.05cm] {$1$} (t);
\path[->] (r1) edge node [near start,right=.05cm] {$\frac{1}{2}$} (t);
\path[->] (r4) edge node [near start,right=.25cm] {$\frac{1}{2^{i}}$} (t);

\path[->] (r1) edge [bend left=10] node[pos=0.2,below] {$\frac{1}{2}$} (s0);   
\path[->] (r4) edge [bend left=1] node [pos=0.1,above] {$1-\frac{1}{2^{i}}$} (s0);  

\draw [->] (t) -- ++(-2.2,0) --++(0,3.1)-- (s0);
\end{tikzpicture} 
				\end{center}
					\textbf{a)}~Almost-sure $\cParity{\{1,2,3\}}$
    \end{minipage}%
			\quad 
		\quad 
		\quad 
		\quad
		\quad 
		\quad
    \begin{minipage}{0.35\textwidth}
        \begin{center}
       \centering
\begin{tikzpicture}[>=latex',shorten >=1pt,node distance=1.9cm,on grid,auto,
roundnode/.style={circle, draw,minimum size=1.5mm},
squarenode/.style={rectangle, draw,minimum size=2mm},
diamonddnode/.style={diamond, draw,minimum size=2mm}]

\node [squarenode,double,initial,initial text={}] (s0) at(0,0) [draw]{$s_0$};
\node [squarenode] (s1) [draw,right=1.7cm of s0]{$s_1$};
\node [squarenode] (s3) [draw=none,right=1.5cm of s1]{$\cdots$}; 
\node [squarenode] (s4) [draw,right=1.5cm of s3]{$s_i$}; 
\node [squarenode] (s5) [draw=none,right=1.5cm of s4]{$\cdots$}; 
 
\node[roundnode] (r0)  [below=1.4cm of s0] {$r_0$};
\node[roundnode] (r1)  [below=1.4cm of s1] {$r_1$};
\node[roundnode] (r3)  [draw=none,below=1.4 of s3] {$\cdots$};
\node[roundnode] (r4)  [below=1.4cm of s4] {$r_i$};
\node[roundnode] (r5)  [draw=none,below=1.4cm of s5] {$\cdots$};

\node [squarenode] (t)  [below=1.7cmof r1] {$ b $};

\path[->] (s0) edge (s1);
\path[->] (s1) edge (s3);
\path[->] (s3) edge (s4);
\path[->] (s4) edge (s5);

\path[->] (s0) edge (r0);
\path[->] (s1) edge (r1);
\path[->] (s4) edge (r4);

\path[->] (r0) edge node [near start,left=.05cm] {$1$} (t);
\path[->] (r1) edge node [near start,right=.05cm] {$\frac{1}{2}$} (t);
\path[->] (r4) edge node [near start,right=.25cm] {$\frac{1}{2^{i}}$} (t);

\path[->] (r1) edge [bend left=10] node[pos=0.2,below] {$\frac{1}{2}$} (s0);   
\path[->] (r4) edge [bend left=1] node [pos=0.1,above] {$1-\frac{1}{2^{i}}$} (s0);

\path[->] (t) edge [loop right]   ();	
\end{tikzpicture}
				\end{center}
				\textbf{b)}~Limit-sure B\"uchi
    \end{minipage}
		\caption{Two finitely branching MDPs where the states~$s\in \zstates$  of the player 
				are drawn as squares and random states~$s\in \rstates$ as circles.
				The color~$\colorof\state$ of~$\state$ is indicated with the number of boundaries; 
         for example, a double boundary for color~$2$.
				State~$s_0$ in the MDP on the left is almost-surely winning for $\cParity{\{1,2,3\}}$, but 
				all almost-surely winning strategies require infinite memory.
				The MDP on the right is such that, for all $c>0$,  strategies that achieve B\"uchi with probability at least~$c$  require infinite memory.}
\label{fig:123MotwRequiresInfMemBuchiInfMem}
\end{figure*}
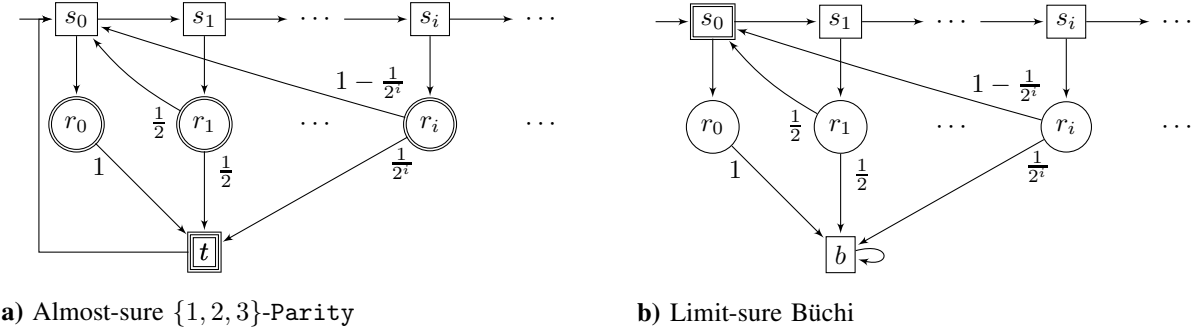



\newcommand{\thmparityonetwothree}{
Let $\formula=\cParity{\{1,2,3\}}$. There exists a finitely branching MDP~$\mdp$ with initial state $\state_0$ such that 
\begin{itemize}
	\item for all FR-strategies~$\zstrat$, we have $\probm_{\mdp,\state_0,\zstrat}(\formula)=0$,
	\item  there exists an HD-strategy $\zstrat$ such that  $\probm_{\mdp,\state_0,\zstrat}(\formula)=1$. 
\end{itemize}
Hence, optimal  (and even almost-surely winning)  and $\epsilon$-optimal strategies   require infinite memory for~$\cParity{\{1,2,3\}}$, 
even in finitely branching MDPs.
}

\begin{theorem}\label{thm:parity123}
\thmparityonetwothree
\end{theorem}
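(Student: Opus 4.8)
The plan is to analyse the finitely branching MDP $\mdp$ shown in Figure~\ref{fig:123MotwRequiresInfMemBuchiInfMem}a, where $\colorof{s_i}=1$, $\colorof{r_i}=2$ and $\colorof{t}=3$. From a player state $s_i$ the controller either advances to $s_{i+1}$ or \emph{drops} to the random state $r_i$; from $r_i$ the play moves to $t$ with probability $2^{-i}$ and back to $\state_0$ with probability $1-2^{-i}$, while $t$ returns to $\state_0$ deterministically. Since $2$ is the only even color in $\{1,2,3\}$, a play satisfies $\formula$ iff color~$2$ occurs infinitely often and color~$3$ occurs finitely often; equivalently, iff it drops infinitely often (the only way to see color~$2$) while visiting $t$ only finitely often. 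I call each excursion from $\state_0$ back to $\state_0$ a \emph{round}; a round either ends in a drop at some finite index $J$ or walks along the chain forever.

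For the positive direction I exhibit an HD-strategy $\zstrat$ that is almost-surely winning. In its $k$-th round it deterministically walks $s_0 s_1 \cdots s_k$ and then drops to $r_k$. Every round returns to $\state_0$ (through $t$ or directly), so all rounds are entered and color~$2$ is seen in each round, hence infinitely often. The event ``$t$ is visited in round~$k$'' has probability $2^{-k}$, and $\sum_{k} 2^{-k}<\infty$, so by the first Borel--Cantelli lemma $t$ is visited only finitely often almost surely. Thus $\probm_{\mdp,\state_0,\zstrat}(\formula)=1$. The strategy is deterministic but needs unbounded memory, since it must count the current round number.

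For the negative direction, fix any FR-strategy, implemented by a transducer with finite memory $\memory$, and consider the induced measure $\probm$. The key observation is that every round begins in the fixed configuration $(\state_0,\memconf_k)$, where $\memconf_k\in\memory$ is the memory mode at the start of round $k$; hence, although the transducer may read the chain index, the joint law of the drop index $J_k$ and of the next starting mode $\memconf_{k+1}$ depends only on $\memconf_k$. There are therefore only finitely many drop-index laws $\mu_\memconf$ ($\memconf\in\memory$), and for any mode $\memconf$ from which a drop occurs with positive probability we have $c_\memconf := \sum_{j} \mu_\memconf(j)\,2^{-j}>0$. Let $T_k$ be the event that round~$k$ reaches $t$, and let $\mathcal F_{k-1}$ record the history up to the start of round~$k$; then $\probm(T_k \mid \mathcal F_{k-1}) = c_{\memconf_k}$, which is $\mathcal F_{k-1}$-measurable. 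I split $F=\{\text{drops infinitely often}\}$ as $\bigcup_{\memconf\in\memory} F_\memconf$, where on $F_\memconf$ the mode $\memconf$ starts infinitely many (dropping) rounds. If $\mu_\memconf$ never drops then no round starting at $\memconf$ ever drops almost surely, so $F_\memconf$ is null; otherwise $c_\memconf>0$, hence $\sum_k \probm(T_k\mid\mathcal F_{k-1})\ge\sum_{k:\memconf_k=\memconf} c_\memconf=\infty$ on $F_\memconf$, and the conditional (L\'evy) Borel--Cantelli lemma forces $t$ to be visited infinitely often almost surely on $F_\memconf$. Thus $F\cap\{t\text{ finitely often}\}$ is null, and since every play satisfying $\formula$ lies in $F\cap\{t\text{ finitely often}\}$, we get $\probm_{\mdp,\state_0,\zstrat}(\formula)=0$.

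The crux, and the main obstacle, is the structural claim that drop-index laws depend only on the starting memory mode: it is exactly what rules out finite-memory strategies pushing the drop indices $J_k$ to infinity, which is what the infinite-memory strategy does. Making it precise requires defining the rounds as genuine stopping times, verifying the measurability and value of $\probm(T_k\mid\mathcal F_{k-1})$, and correctly treating modes that both drop with positive probability and walk forever with positive probability. Given those, the quantitative heart is simply that finiteness of $\memory$ keeps $\inf\{c_\memconf : c_\memconf>0\}$ bounded away from $0$, so a recurring memory mode makes $\sum_k 2^{-J_k}$ diverge and the conditional Borel--Cantelli lemma finishes the argument.
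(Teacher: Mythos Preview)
Your proposal is correct, and the positive direction (the HD-strategy that drops at level $k$ in round $k$, analysed via the first Borel--Cantelli lemma) is essentially identical to the paper's.

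For the negative direction the paper and you exploit the same structural fact---finite memory means only finitely many possible configurations $(\memconf,\state_0)$ at the start of a round, so one recurs infinitely often---but the technical packaging differs. The paper works explicitly in the product Markov chain $\mdp^{\memstratn}$ and invokes the textbook ``strong fairness of probabilistic choices'' lemma (Baier--Katoen, Thm.~10.25): if a state $(\memconf,\state_0)$ is visited infinitely often, then any fixed positive-probability finite path out of it, in particular one through some $r_j$ and then $t$, is taken infinitely often almost surely. This immediately gives $\probm(\always\eventually(\memconf,\state_0)\wedge\eventually\always\,\colorset{\memory\times\states}{}{\neq 3})=0$ for each $\memconf$, and a finite union bound over $\memory$ finishes. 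Your route instead isolates the per-mode drop law $\mu_\memconf$, forms $c_\memconf=\sum_j\mu_\memconf(j)2^{-j}$, and appeals to the L\'evy conditional Borel--Cantelli lemma to force $t$ to recur on the event that a mode with $c_\memconf>0$ starts infinitely many rounds. The paper's argument is lighter (strong fairness is a one-line Markov-chain fact, itself a consequence of conditional Borel--Cantelli in the special case of a recurring state), whereas your framing makes the quantitative obstruction more transparent: finiteness of $\memory$ keeps $\min\{c_\memconf:c_\memconf>0\}$ strictly positive, which is precisely what the infinite-memory strategy circumvents by letting $J_k\to\infty$. Both routes are sound; the stopping-time and filtration technicalities you flag are exactly what the product-chain viewpoint absorbs into the strong Markov property.
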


The MDP in Theorem~\ref{thm:parity123} is depicted 
in Figure~\ref{fig:123MotwRequiresInfMemBuchiInfMem} (left),
where $\coloring(s_i)=1$ and $\coloring(r_i)=2$ for all $i\in \nat$, and $\coloring(t)=3$.
For every FR-strategy there is a uniform lower bound on the probability of 
visiting~$t$ between consecutive visits to~$s_0$.
Hence, unless the strategy with positive probability eventually
always stays in states $s_i$ (and
thus also loses the almost-sure parity objective),
in the long-run, the probability of visiting $t$ (with color three) tends to~$1$, 
and the parity condition is satisfied with probability~$0$.
Although the player cannot win by  any FR-strategy,  
we construct an HD-strategy $\zstrat$ such that $\probm_{\mdp,\state_0,\zstrat}(\formula)=1$.
This strategy is such that 
upon the $i^{\text{th}}$ visit to~$\state_0$,  
the ladder $\state_0 \state_1 \cdots \state_{i}$ is traversed and  the 
transition~$\state_{i} \transition r_{i}$ is chosen.
Moving further along the ladder  $\state_0 \state_1 \state_2 \cdots$ 
decreases the probability of visiting~$t$ between the previous and successive visits to~$s_0$.
Hence, the probability of visiting 
color three infinitely often is~$0$.

\newcommand{\remarkparityonecounter}{
A strict subclass of finitely branching MDPs are $1$-counter MDPs,
where a finite-state MDP is augmented with an integer counter
\cite{BBEKW10}.
The MDP in Theorem~\ref{thm:parity123} (plus some auxiliary states)
is implementable by a $1$-counter MDP.
}
\begin{remark}\label{rem:parityonecounter}
\remarkparityonecounter
\end{remark}

\newcommand{\remarkparityRabinStreet}{
The classical Rabin and Streett conditions can encode $\cParity{\{1,2,3\}}$.
Thus, optimal and $\epsilon$-optimal strategies for Rabin/Streett  require infinite memory, 
even in finitely branching countable MDPs.

On finite MDPs, optimal strategies can be chosen MD for parity and Rabin
objectives, but not for Streett objectives.
Optimal strategies for Streett objectives can be chosen MR or FD \cite{CAH:QEST2004}. 
}
\begin{remark}\label{rem:parityRabinStreet}
\remarkparityRabinStreet
\end{remark}
\begin{proof}

For an infinite play~$\pi^{\infty}$, let ${\sf Inf}(\pi^{\infty})$ be the set of states that~$\pi^{\infty}$ visits infinitely often.
Let us recall the Rabin and Streett conditions.

Given a Rabin condition~$\{(E_1,F_1), (E_2,F_2),\cdots, (E_n,F_n)\}$ with~$n$ pairs (or $n$ disjunctions),
an infinite play~$\pi^{\infty}$ satisfies the Rabin condition if there exists a pair~$(E_i,F_i)$ such that 
${\sf Inf}(\pi^{\infty})\cap E_i= \emptyset$ and ${\sf Inf}(\pi^{\infty})\cap F_i\neq \emptyset$.
The Rabin condition 
$$\{(\colorset{\states}{}{=3},\colorset{\states}{}{=2})\}$$
 encodes~$\cParity{\{1,2,3\}}$, since 
all satisfying runs must visit states with color~$2$ infinitely often
and states with color~$3$ only finitely often.
Note that~$\cParity{\{1,2,3\}}$ is encoded in a Rabin condition with only one disjunction.

Given a Streett condition~$\{(E_1,F_1), (E_2,F_2),\cdots, (E_n,F_n)\}$ with~$n$ pairs (or $n$ conjunctions),
an infinite play~$\pi^{\infty}$ satisfies the Streett condition if 
${\sf Inf}(\pi^{\infty})\cap E_i=\emptyset$ implies ${\sf Inf}(\pi^{\infty})\cap F_i= \emptyset$ for all pairs~$(E_i,F_i)$.
The Streett condition 
$$\{(\colorset{\states}{}{=2},\states),(\emptyset,\colorset{\states}{}{=3})\}$$
encodes $\cParity{\{1,2,3\}}$, since
all satisfying runs must visit states with color~$2$ infinitely often
and states with color~$3$ only finitely often.

Note that a conjunction of \emph{two} Streett pairs are needed to encode
$\cParity{\{1,2,3\}}$. A single Streett pair $\{(X,Y)\}$ means
``infinitely often $X$ or only finitely often $Y$'', which can be encoded
as a $\cParity{\{0,1,2\}}$ condition by assigning color $2$ to $X$ and color
$1$ to $Y$. Unlike for $\cParity{\{1,2,3\}}$, optimal strategies for
$\cParity{\{0,1,2\}}$ (and thus also for a single Streett pair)
can be chosen MD in finitely branching MDPs 
(Theorem~\ref{thm:012quant}).
\end{proof}

\smallskip
\noindent
It was known that quantitative B\"uchi objectives require infinite memory \cite{Krcal:Thesis:2009,BBS:ACM2007}. 
For the sake of completeness, we present an example MDP for Proposition~\ref{thm:wpp-Buchi}
in Figure~\ref{fig:123MotwRequiresInfMemBuchiInfMem} (right).

\newcommand{\thmwppBuchi}{
Let $\formula=\cParity{\{1,2\}}$ be the B\"uchi  objective. 
There exists a finitely branching MDP~$\mdp$ with initial state $\state_0$ such that 
\begin{itemize}
	\item for all FR-strategies~$\zstrat$, we have $\probm_{\mdp,\state_0,\zstrat}(\formula)=0$,
	\item   for every~$c\in [0,1)$, there exists an HD-strategy $\zstrat$ such that  $\probm_{\mdp,\state_0,\zstrat}(\formula)\geq c$. 
\end{itemize}
Hence, $\epsilon$-optimal strategies for B\"uchi objectives require infinite memory.  
}

\begin{proposition}[\cite{Krcal:Thesis:2009}]\label{thm:wpp-Buchi}
\thmwppBuchi
\end{proposition}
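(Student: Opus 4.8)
The plan is to verify directly that the MDP $\mdp$ of Figure~\ref{fig:123MotwRequiresInfMemBuchiInfMem} (right) has the two claimed properties, with colours $\colorof{\state_0}=2$ and $\colorof{\state_i}=\colorof{r_i}=\colorof{b}=1$ for all $i\ge 1$ (and $\colorof{r_0}=1$). Since $\state_0$ is the only state of colour $2$ and $b$ is an absorbing sink of colour $1$, the B\"uchi objective $\formula=\cParity{\{1,2\}}$ is won exactly on the plays that visit $\state_0$ infinitely often. I would organise every play into \emph{excursions}: the maximal segments between consecutive visits to $\state_0$. In each excursion the play climbs the ladder $\state_0\state_1\cdots$ and at some state $\state_\ell$ descends to $r_\ell$, from where it is absorbed in $b$ with probability $2^{-\ell}$ and returns to $\state_0$ with probability $1-2^{-\ell}$; alternatively it may climb the ladder forever. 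Thus winning is equivalent to completing infinitely many excursions, each returning to $\state_0$, without ever being absorbed in $b$ and without climbing forever.

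For the HD-strategy attaining value $\ge c$, fix $m\in\nat$ with $2^{-m}\le 1-c$ and let $\zstrat$ descend only after climbing to $\state_{m+i}$ on the $i$-th visit to $\state_0$, i.e.\ it plays $\state_0\state_1\cdots\state_{m+i}\,r_{m+i}$ and then repeats. This strategy is deterministic but must remember the visit count $i$, hence uses infinite memory. The probability of being absorbed in $b$ during the $i$-th excursion is at most $2^{-(m+i)}$, so by a union bound $\probm_{\mdp,\state_0,\zstrat}(\reach{\{b\}})\le\sum_{i\ge 1}2^{-(m+i)}=2^{-m}\le 1-c$. On the complementary event, every excursion descends at level $m+i$ and, not being absorbed, returns to $\state_0$; hence $\state_0$ is visited infinitely often and $\formula$ holds. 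Therefore $\probm_{\mdp,\state_0,\zstrat}(\formula)\ge c$.

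For the FR-lower bound, let $\zstrat$ be induced by a transducer with finite memory $\memory$. The behaviour during an excursion depends only on the memory mode present upon entry to $\state_0$, and only finitely many modes are reachable there. For each such mode $\memconf$ let $e(\memconf)$ be the probability that an excursion started from $(\state_0,\memconf)$ does \emph{not} return to $\state_0$ (being absorbed in $b$ or climbing forever). The crucial observation is that $e(\memconf)>0$ for every $\memconf$: returning to $\state_0$ forces a descent at some finite level $\ell$, which is absorbed with probability $2^{-\ell}>0$, so the return probability is strictly below $1$. Setting $\delta:=\min_\memconf e(\memconf)>0$ (a minimum over a finite set), the successive visits to $\state_0$ form, by regeneration at $\state_0$, a substochastic Markov chain on the reachable modes whose row sums are at most $1-\delta$; hence the probability of at least $n$ returns to $\state_0$ is at most $(1-\delta)^n$, and the probability of infinitely many visits to $\state_0$ is $0$. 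Thus $\probm_{\mdp,\state_0,\zstrat}(\formula)=0$ for every FR-strategy. Combining the two parts, $\valueof{\mdp}{\state_0}=1$ while every FR-strategy has value $0$, so no FR-strategy is $\epsilon$-optimal for any $\epsilon<1$.

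The step I expect to be the main obstacle is the FR-lower bound, and specifically resisting the temptation to argue via a uniform bound on the \emph{level} at which the strategy descends. A randomised finite-memory strategy (already an MR-strategy that descends with a fixed probability at each ladder state) may descend at arbitrarily high levels, so no such level bound exists, and the per-excursion absorption probability may even tend to $0$ as the level grows. The correct invariant is a uniform lower bound $\delta$ on the per-excursion \emph{escape} probability, obtained by quantifying over the finite set of memory modes at $\state_0$ rather than over ladder levels; justifying that this reduces the long-run behaviour to a finite substochastic chain, and that every reachable mode has strictly positive escape probability, is the technical heart of the argument.
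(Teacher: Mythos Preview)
Your proof is correct. Note, however, that the paper does not actually prove this proposition: it is attributed to~\cite{Krcal:Thesis:2009}, and the paper merely presents the witnessing MDP in Figure~\ref{fig:123MotwRequiresInfMemBuchiInfMem}(b) ``for the sake of completeness'' without giving an argument. So there is no paper proof to compare against directly.

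That said, your approach differs from the style the paper uses for the analogous results it \emph{does} prove (Theorems~\ref{thm:parity123}, \ref{thm:infBranchSafety}, \ref{thm:infBranchCoBuchi}). There the paper forms the product Markov chain $\mdp^{\memstratn}$, observes that $\always\eventually(\memory\times\{s_0\}) = \bigvee_{\memconf\in\memory}\always\eventually(\memconf,s_0)$ by finiteness of~$\memory$, and then invokes \emph{strong fairness of probabilistic choices}: if $(\memconf,s_0)$ is visited infinitely often, then any fixed positive-probability path from it (in particular one reaching~$b$) is taken infinitely often almost surely. Your argument instead extracts a uniform per-excursion escape probability $\delta=\min_{\memconf} e(\memconf)>0$ and bounds the survival probability geometrically via the induced substochastic chain on memory modes. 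Both arguments rest on the same finiteness of~$\memory$; yours is slightly more elementary (no appeal to a cited fairness theorem), while the paper's pattern is reusable verbatim across its three examples. Your final paragraph correctly identifies the real pitfall: a randomised finite-memory strategy can descend at unbounded levels, so one must bound the escape probability per memory mode, not per ladder level.
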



\newcommand{\thminfBranchSafety}{
Let $\formula=\safety{\reachset}$. 
There exists an infinitely branching MDP~$\mdp$ with initial state $\state$ such that
\begin{itemize}
	\item for all FR-strategies~$\zstrat$, we have $\probm_{\mdp,\state,\zstrat}(\formula)=0$,
	\item  for every~$c\in [0,1)$, there exists an HD-strategy $\zstrat$ such that  $\probm_{\mdp,\state,\zstrat}(\formula)\geq c$. 
\end{itemize}
Hence, $\epsilon$-optimal  strategies for safety require infinite memory. 
}

\begin{theorem}\label{thm:infBranchSafety}
\thminfBranchSafety
\end{theorem}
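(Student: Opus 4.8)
The plan is to build an MDP with a single infinitely branching player state together with a family of random ``trap'' gadgets whose failure probability the player can drive to $0$, but only by keeping track of how many rounds have elapsed. Concretely, I would take $\zstates=\{\state\}$, random states $\rstates=\{r_n\mid n\geq 1\}$, and the sink $\reachset=\{t\}$ (with $t$ looping to itself). From $\state$ the player may move to any $r_n$, which is what makes $\mdp$ infinitely branching; from $r_n$ the play moves to $t$ with probability $p_n:=2^{-n}$ and back to $\state$ with probability $1-p_n$. The objective is $\formula=\safety{\reachset}$, so the value at $\state$ is exactly the probability of returning to $\state$ forever without ever entering $t$.

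First I would establish the FR lower bound. Note that any run either reaches $t$ or visits $\state$ infinitely often, since the only way to leave an $r_n$ without dying is to return to $\state$. Fix an FR-strategy $\zstrat$ realised by a transducer with finite memory~$\memory$. The distribution played at $\state$ depends only on the current memory mode, so $\zstrat$ uses only finitely many distributions $\mu_1,\dots,\mu_k\in\dist(\rstates)$ there. For each~$j$ the one-round death probability is $q_j:=\sum_{n\geq 1}\mu_j(r_n)\,p_n$, and because $\mu_j$ is a distribution while every $p_n>0$, we get $q_j>0$; hence $q_{\min}:=\min_j q_j>0$. Regardless of the memory mode, each visit to~$\state$ therefore reaches $t$ within one round with probability at least $q_{\min}$, so conditioning on the mode at each of the first $N$ visits to~$\state$ shows that surviving all of them has probability at most $(1-q_{\min})^N\to 0$. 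Thus $\probm_{\mdp,\state,\zstrat}(\formula)=0$ for every FR-strategy.

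Next, for the positive direction I would exhibit, for each $c\in[0,1)$, an infinite-memory HD-strategy that counts visits to~$\state$. Choose $m$ with $2^{-m}\le 1-c$ and let $\zstrat$ pick, on its $i$-th visit to~$\state$, the successor $r_{i+m}$. The probability of never reaching~$t$ is then $\prod_{i\geq 1}\bigl(1-2^{-(i+m)}\bigr)$, and by the Weierstrass estimate $\prod_i(1-x_i)\ge 1-\sum_i x_i$ (valid for $x_i\in[0,1]$) this is at least $1-\sum_{i\geq 1}2^{-(i+m)}=1-2^{-m}\ge c$. Since $\zstrat$ is deterministic but needs the visit counter, this shows $\valueof{\mdp}{\state}=1$ while, by the FR argument, no strategy attains it; in particular $\state$ is limit-surely but not almost-surely winning, illustrating the phenomenon noted in the preliminaries.

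The only real content is the FR bound, and the crux is the step from ``finitely many memory modes'' to a uniform positive lower bound $q_{\min}$ on the per-round death probability: it is exactly the finiteness of the memory that forbids the strategy from letting its failure probability decay to~$0$ across rounds, whereas infinite memory permits the summable schedule $p_{i+m}=2^{-(i+m)}$. I expect no technical obstacles beyond verifying each $q_j>0$ (immediate from $p_n>0$) and the standard infinite-product inequality; the design of the gadget is the main step, and the rest is routine.
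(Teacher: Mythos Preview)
Your construction and the HD-strategy for the positive direction coincide with the paper's, and your FR argument is correct (it also handles the paper's randomized memory updates, since the successor distribution at~$\state$ depends only on the current mode, not on how that mode was reached). The difference lies in how the FR bound is established. The paper forms the product Markov chain $\mdp^{\memstratn}$ and appeals to \emph{strong fairness of probabilistic choices}: any run avoiding~$t$ must visit $\memory\times\{\state\}$ infinitely often, hence by finiteness of~$\memory$ some fixed $(\memconf,\state)$ infinitely often, and then strong fairness forces the finite path $(\memconf,\state)(\memconf_{r_i},r_i)(\memconf_t,t)$ to be taken almost surely. Your route is instead quantitative: finitely many modes means finitely many action distributions at~$\state$, each with strictly positive one-round death probability, so there is a uniform $q_{\min}>0$ and survival of $N$ rounds has probability at most $(1-q_{\min})^N$. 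Your argument is more elementary and self-contained; the paper's is more qualitative and modular (the same strong-fairness machinery is reused verbatim for the $\cParity{\{1,2,3\}}$ and co-B\"uchi counterexamples). Both pivot on finiteness of~$\memory$ at the same essential point, just expressed differently.
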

The MDP in Theorem~\ref{thm:infBranchSafety}, depicted in
Figure~\ref{fig:infBranchCoBuchiMaxSafety} (left), 
was first introduced in~\cite{Kucherabook11}.
Since our notion of finite-memory strategies allows for randomized memory updates (in contrast to~\cite{Kucherabook11}),
 our proof is somewhat more general.
The target is $\reachset=\{t\}$.
For every FR-strategy there is a uniform lower bound on the probability of
reaching~$t$ between consecutive visits to~$s_0$. Since $t$ is absorbing, it
will be reached with probability~$1$. Thus every FR-strategy satisfies the
safety objective with probability $0$.  
However, for all~$n\in \nat$,
we construct 
an HD-strategy $\zstrat_n$ such 
that $\probm_{\mdp,\state,\zstrat_n}(\safety{\{t\}})\geq 1-\frac{1}{2^n}$.
This strategy is such that  upon the $i^{\text{th}}$ visit to~$s$,  the 
transition~$s \transition r_{i+n}$ is chosen. 
Hence, the probability of visiting~$t$   between two successive visits to~$s$ decreases.
A more detailed analysis shows that the probability of ever visiting~$t$ 
is bounded by~$\frac{1}{2^n}$. 
%


\newcommand{\thminfBranchCoBuchi}{
Let $\formula=\cParity{\{0,1\}}$ be the co-B\"uchi objective.
There exists an infinitely branching MDP~$\mdp$ with initial state $\state$ such that
\begin{itemize}
	\item for all FR-strategies~$\zstrat$, we have $\probm_{\mdp,\state,\zstrat}(\formula)=0$,
	\item  there exists an HD-strategy $\zstrat$ such that  $\probm_{\mdp,\state,\zstrat}(\formula)=1$. 
\end{itemize}
Hence, optimal (and even almost-surely winning) strategies and $\epsilon$-optimal strategies for co-B\"uchi require infinite memory. 
}

\begin{theorem}\label{thm:infBranchCoBuchi}
\thminfBranchCoBuchi
\end{theorem}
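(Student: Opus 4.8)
The plan is to reuse the infinitely branching gadget behind Theorem~\ref{thm:infBranchSafety} (Figure~\ref{fig:infBranchCoBuchiMaxSafety}, left), but replace the \emph{absorbing} target by a colour-$1$ state that loops back to the start, so that the objective becomes prefix-independent. Concretely, I would take a player state~$\state$ with successors $\{r_n \mid n\in\nat\}$; each~$r_n$ is a random state that moves to a distinguished state~$b$ with probability~$2^{-n}$ and back to~$\state$ with probability~$1-2^{-n}$; and~$b$ deterministically returns to~$\state$. I set $\colorof{b}=1$ and give colour~$0$ to every other state. Since every state leads back to~$\state$, every infinite play visits~$\state$ infinitely often, and because~$b$ is the only colour-$1$ state, $\formula=\cParity{\{0,1\}}$ holds on a play exactly when~$b$ is visited only finitely often.

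For the positive direction I would use the HD-strategy~$\zstrat$ that, on its $i$-th visit to~$\state$, moves to~$r_i$. Then the probability of visiting~$b$ during the $i$-th excursion from~$\state$ is~$2^{-i}$, and since $\sum_{i}2^{-i}<\infty$, the first Borel--Cantelli lemma gives that~$b$ is visited only finitely often almost surely; hence $\probm_{\mdp,\state,\zstrat}(\formula)=1$. This strategy is deterministic but needs to count visits to~$\state$, which is where the unbounded memory enters.

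The main work, and the step I expect to be the real obstacle, is the negative direction: every FR-strategy must win with probability~$0$, even though a randomized finite-memory strategy may use a distribution of \emph{infinite support} at~$\state$ and could therefore a priori try to mimic the escalating behaviour above. I would fix an FR-strategy implemented by a transducer with finite memory~$\memory$. For each memory mode~$\memconf$, the distribution $\memsuc(\memconf,\state)$ is supported on $\{r_n\}$ and sums to~$1$, so the per-excursion probability of hitting~$b$, namely $\beta_{\memconf}:=\sum_{n}\memsuc(\memconf,\state)(r_n)\,2^{-n}$, is \emph{strictly} positive, being the expectation of the everywhere-positive quantity $2^{-n}$ under $\memsuc(\memconf,\state)$. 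Crucially, finiteness of~$\memory$ (\emph{not} of the branching) then yields a uniform bound $\delta:=\min_{\memconf\in\memory}\beta_{\memconf}>0$. Letting~$A_k$ be the event that~$b$ is visited during the $k$-th excursion from~$\state$ and $\mathcal{F}_{k-1}$ the history up to the $k$-th visit to~$\state$, I would conclude $\probm_{\mdp,\state,\zstrat}(A_k\mid\mathcal{F}_{k-1})\ge\delta$, so that $\sum_k\probm_{\mdp,\state,\zstrat}(A_k\mid\mathcal{F}_{k-1})=\infty$ almost surely. The conditional (L\'evy) second Borel--Cantelli lemma then forces infinitely many~$A_k$ to occur almost surely, i.e.~$b$ is visited infinitely often, so $\probm_{\mdp,\state,\zstrat}(\formula)=0$. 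The delicate point is exactly this: the infinite-support randomization that infinite branching permits cannot drive $\beta_{\memconf}$ to~$0$, and once a uniform $\delta>0$ is secured the conditional Borel--Cantelli lemma cleanly replaces the per-step independence one would have in a simpler setting.
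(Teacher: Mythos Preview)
Your proposal is correct. The MDP and the positive direction (the HD-strategy that on the $i$-th visit to~$\state$ picks $r_i$, analysed via the first Borel--Cantelli lemma) coincide with the paper's.

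For the negative direction you take a somewhat different route. The paper passes to the product Markov chain $\mdp^{\memstratn}$ with state space $\memory\times\states$, uses pigeonhole to find a single pair $(\memconf,\state)$ visited infinitely often, and then invokes \emph{strong fairness of probabilistic choices} in Markov chains \cite[Theorem~10.25]{BaierKatoenBook}: since some path $(\memconf,\state)(\memconf_{r_i},r_i)(\memconf_b,b)$ has positive probability, it is taken infinitely often almost surely, hence $b$ is seen infinitely often. You instead compute the per-excursion hitting probability $\beta_{\memconf}=\sum_n \memsuc(\memconf,\state)(r_n)\,2^{-n}$, observe that it is strictly positive for every~$\memconf$ (even with infinite-support $\memsuc$), take $\delta=\min_{\memconf\in\memory}\beta_{\memconf}>0$ by finiteness of~$\memory$, and apply the conditional (L\'evy) second Borel--Cantelli lemma. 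Both arguments rest on the same pivot---finite memory turns ``positive for each mode'' into a \emph{uniform} positive lower bound---but yours is packaged as a direct probabilistic estimate, while the paper's outsources the recurrence step to a standard verification lemma. Your presentation has the advantage of making explicit why infinite branching together with randomized choices at~$\state$ cannot drive the per-excursion probability to zero; the paper's has the advantage of reusing a black box already quoted for the other theorems in the section.
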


The MDP in Theorem~\ref{thm:infBranchCoBuchi} is depicted in
Figure~\ref{fig:infBranchCoBuchiMaxSafety} (right).
By a similar  argument as in Theorem~\ref{thm:infBranchSafety}, 
every FR-strategy achieves co-B\"uchi with probability $0$.  
However, the HD-strategy $\zstrat$ that chooses the 
transition~$s \transition r_{i}$ upon the $i^{\text{th}}$ visit to~$s$
is  such that $\probm_{\mdp,\state,\zstrat}(\formula)=1$.

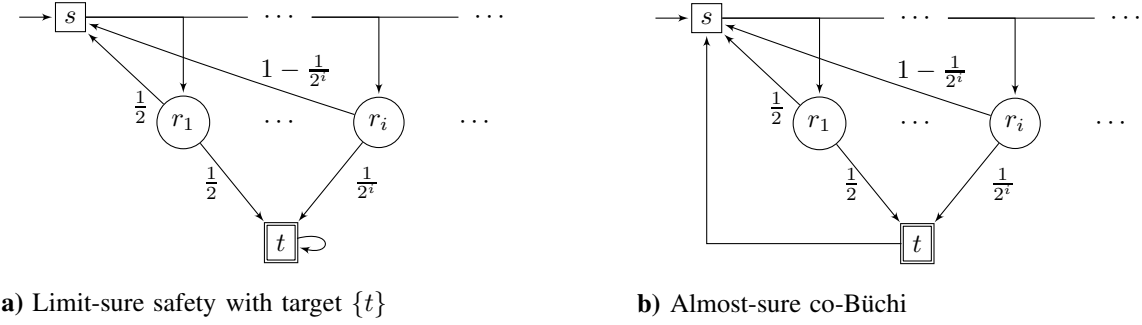
\begin{figure*}[t]
    \centering
    \begin{minipage}{.35\textwidth}
        \begin{center}
								\centering
\begin{tikzpicture}[>=latex',shorten >=1pt,node distance=1.9cm,on grid,auto,
roundnode/.style={circle, draw,minimum size=1.5mm},
squarenode/.style={rectangle, draw,minimum size=2mm},
diamonddnode/.style={diamond, draw,minimum size=2mm}]

\node [squarenode,initial,initial text={}] (s) at(0,0) [draw]{$s$};

\node[roundnode] (r1)  [below right=1.4cm and 1.5cmof s0] {$r_1$};
\node[roundnode] (r3)  [draw=none,right=1.3 of r1] {$\cdots$};
\node[roundnode] (r4)  [right=1.3cm of r3] {$r_i$};
\node[roundnode] (r5)  [draw=none,right=1.3cm of r4] {$\cdots$};

\node [squarenode,double,inner sep = 4pt] (t)  [below=1.6cmof r3] {$ t $};

\draw [->] (s) -- ++(1.5,0) -- (r1);
\node[roundnode] (d)  [draw=none,right=2.8 of s] {$\cdots$};
\draw[-] (s) -- (d);
\draw [->] (d) -- ++(1.3,0) -- (r4);
\node[roundnode] (dd)  [draw=none,right=2.8 of d] {$\cdots$};
\draw[-] (d) -- (dd);

\path[->] (r1) edge node [midway,left=.05cm] {$\frac{1}{2}$} (t);
\path[->] (r4) edge node [midway,right=.2cm] {$\frac{1}{2^{i}}$} (t);
\path[->] (r1) edge  node[pos=0.3,below] {$\frac{1}{2}$} (s);   
\path[->] (r4) edge [bend left=1] node [pos=0.2,above] {$1-\frac{1}{2^{i}}$} (s);  

\path[->] (t) edge [loop right]  (t);
\end{tikzpicture}
				\end{center}
				\textbf{a)}~Limit-sure safety with target~$\{t\}$
        
    \end{minipage}%
		\quad 
		\quad 
		\quad 
		\quad
		\quad 
		\quad
    \begin{minipage}{0.35\textwidth}
        \begin{center}
				\centering
\begin{tikzpicture}[>=latex',shorten >=1pt,node distance=1.9cm,on grid,auto,
roundnode/.style={circle, draw,minimum size=1.5mm},
squarenode/.style={rectangle, draw,minimum size=2mm},
diamonddnode/.style={diamond, draw,minimum size=2mm}]

\node [squarenode,initial,initial text={}] (s) at(0,0) [draw]{$s$};

\node[roundnode] (r1)  [below right=1.4cm and 1.5cmof s0] {$r_1$};
\node[roundnode] (r3)  [draw=none,right=1.3 of r1] {$\cdots$};
\node[roundnode] (r4)  [right=1.3cm of r3] {$r_i$};
\node[roundnode] (r5)  [draw=none,right=1.3cm of r4] {$\cdots$};

\node [squarenode,double,inner sep = 4pt] (t)  [below=1.6cmof r3] {$ t $};

\draw [->] (s) -- ++(1.5,0) -- (r1);
\node[roundnode] (d)  [draw=none,right=2.8 of s] {$\cdots$};
\draw[-] (s) -- (d);
\draw [->] (d) -- ++(1.3,0) -- (r4);
\node[roundnode] (dd)  [draw=none,right=2.8 of d] {$\cdots$};
\draw[-] (d) -- (dd);

\path[->] (r1) edge node [midway,left] {$\frac{1}{2}$} (t);
\path[->] (r4) edge node [midway,right=.2cm] {$\frac{1}{2^{i}}$} (t);
\path[->] (r1) edge  node[pos=0.3,below] {$\frac{1}{2}$} (s);   
\path[->] (r4) edge [bend left=1] node [pos=0.2,above] {$1-\frac{1}{2^{i}}$} (s);  

\draw [->] (t) -- ++(-2.8,0) -- (s);
\end{tikzpicture}	
				\end{center}
					\textbf{b)}~Almost-sure co-B\"uchi
    \end{minipage}
		\caption{In the infinitely branching MDP on the left, all $\epsilon$-optimal strategies for Safety  require infinite memory.
        In the infinitely branching MDP on the right, all optimal (and thus almost-surely winning)
				strategies for  co-B\"uchi   require infinite memory.}
\label{fig:infBranchCoBuchiMaxSafety}
\end{figure*}

\section{From almost-sure winning to optimal strategies}
\label{sec-almost-to-optimal}
In this section we prove Theorem~\ref{thm:reduction-to-as}.
It says that, for certain objectives, if 
almost-surely winning strategies (where they exist) can be chosen MD,
then optimal strategies (where they exist) can also be chosen MD.

We call a class~$\mdpclass$ of MDPs \emph{downward-closed} if every MDP whose transition relation is a subset of the transition relation of some MDP in~$\mdpclass$ is also in~$\mdpclass$.
The class of finitely branching MDPs is downward-closed, and so is the class of MDPs with a fixed sink~$\reachset$.

We call an objective~$\formula$ \emph{prefix-independent in $\mdpclass$} (where $\mdpclass$ is a class of MDPs) if for all $w_1, w_2 \in \states^*$ and all $w \in \states^\omega$ such that $w_1 w$ and $w_2 w$ are infinite plays in an MDP in~$\mdpclass$ we have $w_1 w \in \denotationof{\formula}{} \Longleftrightarrow w_2 w \in \denotationof{\formula}{}$.
Parity objectives are prefix-independent in the class of all MDPs.
Both objectives $\reach{\reachset}$ and $\safety{\reachset}$ are prefix-independent in the class of MDPs with sink~$\reachset$.

The following theorem provides, under certain conditions, an optimal MD-strategy for all states that have an optimal strategy.
In fact, a single MD-strategy is optimal for all states that have an optimal strategy:
\newcommand{\stmtthmreductiontoas}{
Let $\formula$ be an objective that is prefix-independent in a downward-closed class~$\mdpclass$ of MDPs.
Suppose that for any $\mdp=\mdptuple \in \mdpclass$ and any $s \in \states$ and any strategy~$\zstrat$ with $\probm_{\mdp,\state,\zstrat}(\formula) = 1$ there exists an MD-strategy~$\zstrat'$ with $\probm_{\mdp,\state,\zstrat'}(\formula) = 1$.

Under this condition, for each $\mdp \in \mdpclass$
there is an MD-strategy~$\zstrat'$ such that for all $\state \in \states$:
\[
\begin{aligned}
\big(
\exists \zstrat \in \zstratset.\,
\probm_{\mdp,\state,\zstrat}(\formula) = \valueof{\mdp}{s}
\big) & \Longrightarrow \\
& \ \probm_{\mdp,\state,\zstrat'}(\formula) = \valueof{\mdp}{s}
\end{aligned}
\]
}
\begin{theorem} \label{thm:reduction-to-as}
\stmtthmreductiontoas
\end{theorem}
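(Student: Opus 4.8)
The plan is to reduce optimality to almost-sure winning by conditioning the MDP on the event~$\formula$, so that the hypothesis becomes directly applicable. Write $\valueof{\mdp}{\state}$ for the value and let $O\subseteq\states$ be the set of states from which an optimal strategy exists. First I would record the Bellman equations ($\valueof{\mdp}{\state}=\sum_{\state'}\probp(\state)(\state')\valueof{\mdp}{\state'}$ at random states, $\valueof{\mdp}{\state}=\sup_{\state\transition\state'}\valueof{\mdp}{\state'}$ at player states) and deduce that $O$ is \emph{closed}: any optimal strategy from $\state\in O$ uses only value-preserving player edges (to successors $\state'$ with $\valueof{\mdp}{\state'}=\valueof{\mdp}{\state}$), and every visited state stays in~$O$ with an optimal residual strategy. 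In particular, the value process $M_n:=\valueof{\mdp}{\play(n)}$ is a bounded martingale under any value-preserving strategy.

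The key probabilistic observation is that under an optimal strategy $M_n\to\mathbb{1}_{\formula}$ almost surely. Indeed, prefix-independence together with the closure property gives $\probm_{\mdp,\state,\zstrat}(\formula\mid\mathcal{F}_n)=\valueof{\mdp}{\play(n)}=M_n$, so Lévy's upward martingale convergence theorem identifies the limit with $\mathbb{1}_{\formula}$. Thus, along optimal play, winning plays are exactly those whose value-martingale tends to~$1$ and losing plays exactly those whose martingale tends to~$0$.

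Next I would form the conditioned MDP~$\mdp'$ on $O_{>0}:=\{\state\in O:\valueof{\mdp}{\state}>0\}$: keep the value-preserving player edges, delete random edges into value-$0$ states, and reweight the surviving random edges by the Doob $h$-transform $\probp'(\state)(\state')=\probp(\state)(\state')\,\valueof{\mdp}{\state'}/\valueof{\mdp}{\state}$ (a distribution, by the random Bellman equation). Its transition relation is a subset of that of~$\mdp$, so $\mdp'\in\mdpclass$ by downward-closure, and $\formula$ remains prefix-independent there. A direct cylinder computation shows that for \emph{every} value-preserving strategy the likelihood ratio $\mathrm{d}\probm_{\mdp',\state,\zstrat}/\mathrm{d}\probm_{\mdp,\state,\zstrat}$ on~$\mathcal{F}_n$ telescopes to $M_n/\valueof{\mdp}{\state}$. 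Feeding an optimal strategy of~$\mdp$ into this identity and using $M_\infty=\mathbb{1}_{\formula}$ gives $\probm_{\mdp',\state,\zstrat}(\neg\formula)=0$, i.e.\ every state of~$\mdp'$ is almost-surely winning. Applying the hypothesis to~$\mdp'$ and uniformizing the per-state MD witnesses into a single MD strategy~$\zstrat'$ on~$O_{>0}$ (by an iterative merge over a countable enumeration, using that an MD almost-surely-winning Markov chain never leaves its almost-surely-winning region) yields one MD strategy that is almost-surely winning in~$\mdp'$ from all of~$O_{>0}$.

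Finally I would transfer $\zstrat'$ back to~$\mdp$. Since $\zstrat'$ uses only value-preserving player edges, $M_n$ is again a martingale under $\zstrat'$ in~$\mdp$; the same likelihood-ratio identity, read backwards, turns almost-sure winning in~$\mdp'$ into $M_\infty=0$ almost surely on~$\{\neg\formula\}$ under $\probm_{\mdp,\state,\zstrat'}$. Hence $\valueof{\mdp}{\state}=\mathbb{E}[M_\infty]=\mathbb{E}[M_\infty\mathbb{1}_{\formula}]\le\probm_{\mdp,\state,\zstrat'}(\formula)\le\valueof{\mdp}{\state}$, so $\zstrat'$ is optimal from every $\state\in O_{>0}$; extending it arbitrarily (MD) on value-$0$ states (which lose under every strategy) and outside~$O$ completes the required single MD strategy. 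The hard part will be the two genuinely infinite-state points: justifying the $h$-transform and its likelihood-ratio identity for countable, possibly infinitely branching, arenas while staying inside~$\mdpclass$, and the uniformization of the almost-sure MD witnesses into one strategy. The martingale bookkeeping (Lévy, bounded convergence) is precisely where the infinitude of the state space forces us beyond the finite-MDP toolbox, since no minimal positive value is available.
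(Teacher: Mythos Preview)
Your plan is correct and follows the same architecture as the paper: build the Doob $h$-transform MDP~$\pmdp$ on $\{\state\in\states\mid\exists\,\zstrat.\ \probm_{\mdp,\state,\zstrat}(\formula)=\valueof{\mdp}{\state}>0\}$, show every state there is almost-surely winning, invoke the hypothesis plus a countable uniformization (your ``iterative merge'' is exactly the paper's Lemma~\ref{lem:MDP-as-uniform}), and transfer the resulting MD strategy back to~$\mdp$.

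The differences are purely in the technical tooling for two sub-steps. To show that an optimal strategy is almost-surely winning in~$\pmdp$, the paper proves directly (Lemma~\ref{lem:conditioned-construction}.2) that $\probm_{\pmdp,\state,\zstrat}(\cdot)=\probm_{\mdp,\state,\zstrat}(\cdot\mid\formula)$ for optimal~$\zstrat$, via a cylinder computation; you obtain the same identity through L\'evy's theorem and the identification $M_\infty=\mathbb{1}_\formula$. For the transfer back, the paper uses only the one-sided cylinder inequality $\valueof{\mdp}{\state_0}\cdot\probm_{\pmdp,\state_0,\zstrat'}(\cyl)\le\probm_{\mdp,\state_0,\zstrat'}(\cyl)$ (from $\valueof{\mdp}{\state_n}\le 1$) and extends it to all measurable sets via the monotone-class theorem (Lemma~\ref{lem:measure-theory}); you instead pass to the Radon--Nikodym limit $M_\infty/\valueof{\mdp}{\state}$ via bounded-martingale convergence and read off $E[M_\infty\mathbb{1}_{\neg\formula}]=0$. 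Your route is probabilistically cleaner and makes the role of the value-martingale explicit; the paper's route is more elementary and avoids any appeal to $L^1$-convergence or absolute continuity. Either way the ``hard parts'' you flag---well-definedness of the $h$-transform inside~$\mdpclass$ and the uniformization lemma---are exactly the two auxiliary lemmas the paper isolates.
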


%
%
%
The remainder of the section is devoted to the proof of Theorem~\ref{thm:reduction-to-as}. 


\newcommand{\lemprefixindoptimality}{
Let $\formula$ be an objective that is prefix-independent in a class~$\mdpclass$ of MDPs.
Let $\mdp=\mdptuple \in \mdpclass$, and $s_0 \in \states$, and $\zstrat$ be a strategy with $\probm_{\mdp,s_0,\zstrat}(\formula) = \valueof{\mdp}{s_0}$.
Suppose that $s_0 s_1 \cdots s_n$ for some $n \ge 0$ is a partial play starting in $s_0$ and induced by~$\zstrat$.
Then:
\begin{enumerate}
\item $\valueof{\mdp}{s_n} = \probm_{\mdp,s_0,\zstrat}(\denotationof{\formula}{s_0} \mid s_0 s_1 \cdots s_n \states^\omega)$.
\item If $s_n \in \rstates$ then $\valueof{\mdp}{s_n} = \sum_{s_{n+1} \in \states} \probp(s_n)(s_{n+1}) \cdot \valueof{\mdp}{s_{n+1}}$.
\item If $s_n \in \zstates$ then $\valueof{\mdp}{s_n} = \valueof{\mdp}{s_{n+1}}$ for all $s_{n+1} \in \supp(\sigma(s_0 s_1 \cdots s_n))$.
\end{enumerate}
}

\newcommand{\pmdp}{\mdp_{*}}
\newcommand{\pstates}{\states_{*}}
\newcommand{\pzstates}{\states_{*\zsymbol}}
\newcommand{\prstates}{\states_{*\rsymbol}}
\newcommand{\ptransition}{\transition_{*}}
\newcommand{\pprobp}{\probp_{*}}


\medskip

For prefix-independent winning conditions, whenever an optimal strategy visits some state, it achieves the value of this state; see Lemma~\ref{lem:prefix-ind-optimality} in the appendix.
We use this to show that the MDP constructed in the following lemma is well-defined.
This MDP, $\pmdp$, will be crucial for the proof of Theorem~\ref{thm:reduction-to-as}.
Loosely speaking, $\pmdp$ is the MDP~$\mdp$ conditioned under~$\formula$.

\newcommand{\lemconditionedconstruction}{
Let $\formula$ be an objective that is prefix-independent in a class~$\mdpclass$ of MDPs.
Let $\mdp=\mdptuple \in \mdpclass$. 
Construct an MDP $\pmdp = \tuple{\pstates,\pzstates,\prstates,\ptransition,\pprobp}$ by setting
\[
\begin{aligned}
\pstates = \{s \in \states \mid \exists\,\zstrat.\; \probm_{\mdp,s,\zstrat}(\formula) = \valueof{\mdp}{s} > 0\}
\end{aligned}
\]
and $\pzstates = \pstates \cap \zstates$ and $\prstates = \pstates \cap \rstates$ and
\[
\begin{aligned}
\mathord{\ptransition} = 
\{(s,t) \in \pstates \times \pstates \mid {} & s \transition t \text{ and if $s \in \pzstates$}\\
&\text{ then $\valueof{\mdp}{s} = \valueof{\mdp}{t}$}\}
\end{aligned}
\]
and $\pprobp : \prstates \to \dist(\pstates)$ so that
\[
\begin{aligned}
\pprobp(s)(t) = \probp(s)(t) \cdot \frac{\valueof{\mdp}{t}}{\valueof{\mdp}{s}}
\end{aligned}
\]
for all $s \in \prstates$ and $t \in \pstates$ with $s \, \ptransition \, t$. Then:
\begin{enumerate}
\item
For all $\zstrat \in \zstratset_{\pmdp}$ and all $n \ge 0$ and all $s_0, \ldots, s_n \in \pstates$ with $s_0 \ptransition\, s_1 \ptransition\, \cdots \ptransition\, s_n$:
\[
\begin{aligned}
\probm_{\pmdp,s_0,\zstrat}&(s_0 s_1 \cdots s_n \states^\omega) = \\
&\probm_{\mdp,s_0,\zstrat}(s_0 s_1 \cdots s_n \states^\omega) \cdot \frac{\valueof{\mdp}{s_n}}{\valueof{\mdp}{s_0}}
\end{aligned}
\]
\item 
For all $s_0 \in \pstates$ and all $\zstrat \in \zstratset_{\mdp}$ with $\probm_{\mdp,s_0,\zstrat}(\formula) = \valueof{\mdp}{s_0} > 0$ and all measurable $\playset \subseteq s_0 \states^\omega$ we have
$
\probm_{\pmdp,s_0,\zstrat}(\playset) = \probm_{\mdp,s_0,\zstrat}(\playset \mid \denotationof{\formula}{s_0})
$.
\end{enumerate}
}

\begin{lemma}\label{lem:conditioned-construction}
\lemconditionedconstruction
\end{lemma}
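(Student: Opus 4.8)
The plan is to first verify that $\pmdp$ is a legitimate MDP, and then prove the two numbered assertions in order, with Lemma~\ref{lem:prefix-ind-optimality} as the central tool throughout. For well-definedness I would check that every state of $\pmdp$ has an outgoing $\ptransition$-edge and that $\pprobp(s)$ is a genuine distribution for each $s \in \prstates$; the crux is the random case. Fixing $s \in \prstates$ and an optimal strategy $\zstrat$ from $s$ with $\valueof{\mdp}{s} > 0$, the length-one play $s$ is induced by $\zstrat$, so Lemma~\ref{lem:prefix-ind-optimality}(1) shows (via the tail of $\zstrat$ and prefix-independence) that every $\probp$-successor $t$ with $\valueof{\mdp}{t} > 0$ carries an optimal strategy, hence $t \in \pstates$; the successors with $\valueof{\mdp}{t} = 0$ contribute $0$ to $\pprobp(s)$ and are irrelevant. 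Then Lemma~\ref{lem:prefix-ind-optimality}(2) gives $\sum_t \probp(s)(t)\valueof{\mdp}{t} = \valueof{\mdp}{s}$, so $\sum_{t \in \pstates} \pprobp(s)(t) = 1$ and in particular $s$ has a successor. For $s \in \pzstates$, any $\zstrat$-chosen successor $s_1$ has $\valueof{\mdp}{s_1} = \valueof{\mdp}{s} > 0$ by part~(3), so $s_1 \in \pstates$ and $(s,s_1) \in \ptransition$.

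For assertion~1 I would induct on $n$. The base case $n=0$ is the identity $1=1$ (both values positive). For the step, I expand the $(n{+}1)$-cylinder probability as the $n$-cylinder probability times one transition factor, in both $\pmdp$ and $\mdp$, and apply the induction hypothesis. If $s_n \in \pzstates$ the factor $\zstrat(\cdots)(s_{n+1})$ is identical in both MDPs, and the required replacement of $\valueof{\mdp}{s_n}$ by $\valueof{\mdp}{s_{n+1}}$ in the ratio is exactly the value-matching condition built into $\ptransition$ on player edges. If $s_n \in \prstates$, the factor $\pprobp(s_n)(s_{n+1}) = \probp(s_n)(s_{n+1}) \cdot \valueof{\mdp}{s_{n+1}}/\valueof{\mdp}{s_n}$ supplies the ratio, and the $\valueof{\mdp}{s_n}$ carried over from the induction hypothesis cancels its reciprocal. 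Either way the telescoping collapses to $\valueof{\mdp}{s_n}/\valueof{\mdp}{s_0}$.

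For assertion~2 I would first argue that an optimal $\mdp$-strategy $\zstrat$ from $s_0$ restricts to a legitimate $\pmdp$-strategy: along any $\zstrat$-induced partial play of positive probability the value is preserved at player states (part~3) and equals the $\probp$-average of successor values at random states (part~2), so once the value reaches $0$ it stays $0$; consequently every prefix with positive-value endpoint lies entirely in $\pstates$ and uses only $\ptransition$-edges and $\zstrat$-choices compatible with $\pmdp$, while on non-induced or value-$0$ histories the $\pmdp$-measure is zero and $\zstrat$ may be redefined harmlessly. By the uniqueness part of Carath\'eodory's extension theorem it then suffices to check the identity on cylinders $\playset = s_0 \cdots s_n \states^\omega$. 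Here the conditioning denominator is $\probm_{\mdp,s_0,\zstrat}(\denotationof{\formula}{s_0}) = \valueof{\mdp}{s_0}$, and when $s_0 \cdots s_n$ is $\zstrat$-induced, Lemma~\ref{lem:prefix-ind-optimality}(1) identifies $\probm_{\mdp,s_0,\zstrat}(\denotationof{\formula}{s_0} \mid s_0 \cdots s_n \states^\omega)$ with $\valueof{\mdp}{s_n}$; multiplying out yields the right-hand side as $\probm_{\mdp,s_0,\zstrat}(s_0 \cdots s_n \states^\omega) \cdot \valueof{\mdp}{s_n}/\valueof{\mdp}{s_0}$, which by assertion~1 is exactly $\probm_{\pmdp,s_0,\zstrat}(s_0 \cdots s_n \states^\omega)$. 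The degenerate cases (cylinder not $\zstrat$-induced, or $\valueof{\mdp}{s_n}=0$ so that $s_n \notin \pstates$) make both sides $0$.

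The step I expect to be the main obstacle is assertion~2, where one must simultaneously justify that the $\mdp$-optimal strategy genuinely induces a distribution over $\pstates$ at each reachable player state (so that $\probm_{\pmdp,s_0,\zstrat}$ is even defined) and that conditioning on $\formula$ factors through the endpoint value $\valueof{\mdp}{s_n}$ — precisely what prefix-independence plus Lemma~\ref{lem:prefix-ind-optimality}(1) provides. The well-definedness of $\pprobp$ through the Bellman identity at random states is the other delicate point, since infinite branching could a priori break the normalization, but the value-$0$ successors drop out of the sum and the identity of part~(2) holds regardless.
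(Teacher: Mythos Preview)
Your proposal is correct and follows essentially the same approach as the paper: well-definedness of $\pmdp$ via Lemma~\ref{lem:prefix-ind-optimality}(2), assertion~1 by induction on~$n$ with the player/random case split, and assertion~2 by verifying the identity on cylinders (using item~1 together with Lemma~\ref{lem:prefix-ind-optimality}(1) and optimality of~$\zstrat$) and extending by uniqueness of the measure. If anything, you are slightly more careful than the paper about well-definedness, explicitly checking that every state of $\pmdp$ has a successor and that the restriction to $t \in \pstates$ in the definition of $\pprobp(s)$ does not lose mass.
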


The following lemma provides, under certain conditions, a \emph{uniform} almost-surely winning MD-strategy, i.e., one that works for all initial states at the same time:

\newcommand{\lemMDPasuniform}{
Let $\mdp=\mdptuple$ be an MDP.
Let $\formula$ be an objective that is prefix-independent in~$\{\mdp\}$.
Suppose that for any $s \in \states$ and any strategy~$\zstrat$ with $\probm_{\mdp,\state,\zstrat}(\formula) = 1$ there exists an MD-strategy~$\zstrat'$ with $\probm_{\mdp,\state,\zstrat'}(\formula) = 1$.
Then there is an MD-strategy~$\zstrat'$ such that for all $\state \in \states$:
\[
\big(
\exists \zstrat \in \zstratset.\,
\probm_{\mdp,\state,\zstrat}(\formula) = 1
\big)
\quad\Longrightarrow\quad
\probm_{\mdp,\state,\zstrat'}(\formula) = 1
\]
}

\begin{lemma}\label{lem:MDP-as-uniform}
\lemMDPasuniform
\end{lemma}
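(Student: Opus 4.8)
The plan is to build the single uniform MD-strategy by patching together countably many per-state MD-strategies and ordering them by a rank that can only decrease along plays. First I would let $W := \{s \in \states \mid \exists\,\zstrat.\ \probm_{\mdp,s,\zstrat}(\formula)=1\}$ be the set of states from which $\formula$ is achievable almost surely; since $\states$ is countable, enumerate $W = \{t_0,t_1,\dots\}$. The hypothesis of the lemma, applied to each $t_i$, yields an MD-strategy $\zstrat_i$ with $\probm_{\mdp,t_i,\zstrat_i}(\formula)=1$. For $s\in W$ define the rank $\iota(s) := \min\{i \mid \probm_{\mdp,s,\zstrat_i}(\formula)=1\}$, which is well defined and finite because $s=t_j$ for some $j$ and then $\zstrat_j$ witnesses membership. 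Finally define the candidate strategy $\zstrat^*$ by letting it, at every player state $s\in W$, make the same choice as $\zstrat_{\iota(s)}$ does at $s$; on player states outside $W$ it is arbitrary. By construction $\zstrat^*$ is memoryless and deterministic, and it remains only to show it is almost-surely winning from every $s\in W$.

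The engine of the proof is the observation that $\iota$ is non-increasing along every partial play induced by $\zstrat^*$ that starts in $W$. Here I would invoke the consequence of prefix-independence recorded in Lemma~\ref{lem:prefix-ind-optimality} (specialized to value $1$): if a strategy wins almost surely from $s$ then it wins almost surely from every successor it reaches with positive probability. Applying this to $\zstrat_{\iota(s)}$, which wins almost surely from $s$: at a random state $s\in W$ every $\probp$-successor $s'$ satisfies $\probm_{\mdp,s',\zstrat_{\iota(s)}}(\formula)=1$, and at a player state $s\in W$ the successor $s'$ chosen by $\zstrat^*$ (i.e.\ by $\zstrat_{\iota(s)}$) likewise satisfies this. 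In both cases $s'\in W$ and $\iota(s')\le\iota(s)$, so the play in fact stays in $W$ and the ranks never increase. Since ranks are natural numbers, along every $\zstrat^*$-play the sequence $\iota(s_0),\iota(s_1),\dots$ is non-increasing and hence stabilizes at some value $m$ from some time $K$ onward.

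Once the rank has stabilized at $m$, every visited player state $s_k$ ($k\ge K$) lies in $U_m := \{s\in W\mid \iota(s)=m\}$, so $\zstrat^*$ makes exactly the choices of $\zstrat_m$ there; thus on the event ``the play stays in $U_m$ forever'' the measures $\probm_{\mdp,s,\zstrat^*}$ and $\probm_{\mdp,s,\zstrat_m}$ assign equal probability to every cylinder, because the two strategies agree on all player states of $U_m$ and the random transitions are identical. To finish I would decompose the $\zstrat^*$-plays from a fixed $s_0\in W$ according to the eventual rank $m$ and the first time $K$ at which it is attained, conditioning on the prefix $s_0\cdots s_K$ with $s_K\in U_m$. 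By the Markov property the conditioned suffix is distributed as $\probm_{\mdp,s_K,\zstrat^*}$, and by prefix-independence the whole play violates $\formula$ iff the suffix does; hence the contribution of this block is at most $\probm_{\mdp,s_K,\zstrat^*}(\text{stay in }U_m\wedge\neg\formula) = \probm_{\mdp,s_K,\zstrat_m}(\text{stay in }U_m\wedge\neg\formula) \le \probm_{\mdp,s_K,\zstrat_m}(\neg\formula)=0$, the last equality because $s_K\in U_m$ makes $\zstrat_m$ almost-surely winning there. Summing over the countably many choices of $(K,m)$ and of the prefix gives $\probm_{\mdp,s_0,\zstrat^*}(\neg\formula)=0$, as required.

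I expect the main obstacle to be the final measure-theoretic bookkeeping rather than any single hard idea: making rigorous that, conditioned on reaching $U_m$ and never leaving it, $\zstrat^*$ and $\zstrat_m$ induce the same law on the tail (so that the almost-sure-winning property of $\zstrat_m$ transfers), and that the countable decomposition by $(K,m)$ genuinely exhausts a full-measure set of plays. Both rest on prefix-independence together with the memorylessness of the $\zstrat_i$, which is exactly what guarantees that the ``switch'' performed by $\zstrat^*$ at strictly decreasing ranks becomes invisible to $\formula$ once the rank has settled on its limit.
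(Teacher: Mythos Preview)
Your argument is correct, and it takes a genuinely different route from the paper's proof. The paper constructs the strategies $\zstrat_1,\zstrat_2,\ldots$ \emph{iteratively}: after picking an MD-strategy $\zstrat_1$ that wins almost surely from some $s_1$ and letting $U_1$ be its reachable set, it freezes $\zstrat_1$ on $U_1$ (turning those into random states), verifies that in the resulting MDP every state is still almost-surely winning, and repeats with some $s_2\notin U_1$. The upshot is that the sets $U_i$ are forward-closed under the final strategy, so from any $s\in U_i$ the play \emph{never leaves} $U_i$, and the correctness of $\zstrat'$ is immediate. You, by contrast, take all per-state MD-strategies independently and introduce a rank that is only non-increasing rather than constant along plays; this forces the additional stabilisation step and the measure-theoretic comparison of $\zstrat^*$ and $\zstrat_m$ on the level set $U_m$. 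Both routes use prefix-independence and memorylessness in essentially the same place (to propagate almost-sure winning to successors). Your approach avoids the paper's recursive ``fix and re-verify'' step and is arguably more modular; the paper's approach avoids your tail-$\sigma$-algebra bookkeeping by arranging the rank to be constant rather than merely eventually constant. The bookkeeping you flag is routine: since $\zstrat^*$ and $\zstrat_m$ agree on $U_m$, the induced Markov chains coincide on $U_m$, so the two measures agree on every measurable subset of $s_K U_m^\omega$; this makes the equality $\probm_{\mdp,s_K,\zstrat^*}(\always U_m\wedge\neg\formula)=\probm_{\mdp,s_K,\zstrat_m}(\always U_m\wedge\neg\formula)$ rigorous.
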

\begin{proof}
We can assume that all states are almost-surely winning, since in order to achieve an almost-sure winning objective, the player must forever remain in almost-surely winning states.
So we need to define an MD-strategy~$\zstrat'$ so that for all $s \in \states$ we have $\probm_{\mdp,\state,\zstrat'}(\formula) = 1$.

Fix an arbitrary state $s_1 \in \states$.
By assumption there is an MD-strategy~$\zstrat_1$ with $\probm_{\mdp,s_1,\zstrat_1}(\formula) = 1$.
Let $U_1 \subseteq \states$ be the set of states that occur in plays that both start from~$s_1$ and are induced by~$\zstrat_1$.
We have $\probm_{\mdp,s_1,\zstrat_1}(\denotationof{\formula}{s_1} \cap U_1^\omega) = 1$.
In fact, for any $s \in U_1$ and any strategy~$\zstrat$ that agrees with~$\zstrat_1$ on~$U_1$ we have $\probm_{\mdp,s,\zstrat}(\denotationof{\formula}{s} \cap U_1^\omega) = 1$.

If $U_1=\states$ we are done.
Otherwise, consider the MDP~$\mdp_1$ obtained from~$\mdp$ by fixing~$\zstrat_1$ on~$U_1$ (i.e., in~$\mdp_1$ we can view the states in~$U_1$ as random states).
We argue that, in~$\mdp_1$, for any state~$s$ there is an MD-strategy~$\zstrat_1'$ with $\probm_{\mdp_1,\state,\zstrat_1'}(\formula) = 1$.
Indeed, let $s \in \states$ be any state.
Recall that there is an MD-strategy~$\zstrat$ with $\probm_{\mdp,\state,\zstrat}(\formula) = 1$.
Let $\zstrat_1'$ be the MD-strategy obtained by restricting~$\zstrat$ to the non-$U_1$ states (recall that the $U_1$ states are random states in~$\mdp_1$).
This strategy~$\zstrat_1'$ almost surely generates a play that \emph{either} satisfies~$\formula$ without ever entering~$U_1$ \emph{or} at some point enters~$U_1$.
In the latter case, $\formula$ is satisfied almost surely: this follows from prefix-independence and the fact that $\zstrat_1'$ agrees with~$\zstrat_1$ on~$U_1$.
We conclude that $\probm_{\mdp_1,\state,\zstrat_1'}(\formula) = 1$.

Let $s_2 \in \states \setminus U_1$.
We repeat the argument from above, with $s_2$ instead of~$s_1$, and with $\mdp_1$ instead of~$\mdp$.
This yields an MD-strategy~$\zstrat_2$ and a set $U_2 \ni s_2$ with $\probm_{\mdp_1,s_2,\zstrat_2}(\denotationof{\formula}{s_2} \cap U_2^\omega) = 1$.
In fact, for any $s \in U_2$ and any strategy~$\zstrat$ that agrees with~$\zstrat_2$ on~$U_2$ and with~$\zstrat_1$ on~$U_1$ we have $\probm_{\mdp,s,\zstrat}(\denotationof{\formula}{s} \cap U_2^\omega) = 1$.

If $U_1 \cup U_2=\states$ we are done.
Otherwise we continue in the same manner, and so forth.
Since $\states$ is countable, we can pick $s_1, s_2, \ldots$ to have $\bigcup_{i \ge 1} U_i= \states$.
Define an MD-strategy~$\zstrat'$ such that for any $s \in \zstates$ we have $\zstrat'(s) = \zstrat_i(s)$ for the smallest~$i$ with $s \in U_i$.
Thus, if $s \in U_i$, we have $\probm_{\mdp,s,\zstrat'}(\formula) \ge \probm_{\mdp,s,\zstrat'}(\denotationof{\formula}{s} \cap U_i^\omega) = 1$.
\end{proof}

\newcommand{\cyl}{\mathfrak C}
\newcommand{\classcyl}{\mathcal{C}}
\newcommand{\classmon}{\mathcal{Q}}

The following measure-theoretic lemma will be used to connect probability measures induced by the MDPs $\mdp$ and $\pmdp$ from Lemma~\ref{lem:conditioned-construction}.

\newcommand{\lemmeasuretheory}{
Let $\states$ be countable and $s \in \states$.
Call a set of the form $s w \states^\omega$ for $w \in \states^*$ a \emph{cylinder}.
Let $\probm, \probm'$ be probability measures on $s S^\omega$ defined in the standard way, i.e., first on cylinders and then extended to all measurable sets $\playset \subseteq s \states^\omega$.
Suppose there is $x \ge 0$ such that $x \cdot \probm(\cyl) \le \probm'(\cyl)$ for all cylinders~$\cyl$.
Then $x \cdot \probm(\playset) \le \probm'(\playset)$ holds for all measurable $\playset \subseteq s S^\omega$.
}

\begin{lemma} \label{lem:measure-theory}
\lemmeasuretheory
\end{lemma}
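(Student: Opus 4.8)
The plan is to leverage the Monotone Class Theorem, after first pinpointing why the most obvious approach fails. The natural attempt is to let $\classmon$ be the collection of measurable sets $\playset$ on which the desired inequality $x \cdot \probm(\playset) \le \probm'(\playset)$ holds, observe that $\classmon$ contains every cylinder by hypothesis and that the cylinders form a $\pi$-system generating the $\sigma$-algebra, and then invoke Dynkin's $\pi$--$\lambda$ theorem. This fails: $\classmon$ is \emph{not} a $\lambda$-system, because the inequality is not preserved under complementation (from $x\probm(\playset) \le \probm'(\playset)$ one cannot deduce $x\probm(\complementof{\playset}) \le \probm'(\complementof{\playset})$). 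Recognizing and routing around this asymmetry is the crux of the argument.

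The fix is to replace the $\pi$-system of cylinders by the full algebra they generate and then pass to the limit via monotone sequences rather than complements. First I would extend the hypothesis from cylinders to the algebra $\classcyl$ of all sets that depend on only finitely many coordinates. Every such set has the form $B \states^\omega$ for some $B \subseteq s\states^n$; since $\states$ is countable, $B$ is countable, so $B\states^\omega$ is a countable \emph{disjoint} union of cylinders. By countable additivity of $\probm$ and $\probm'$ and the fact that $x \ge 0$,
\[
x \cdot \probm(B\states^\omega) = \sum_{b \in B} x\cdot\probm(b\states^\omega) \le \sum_{b \in B} \probm'(b\states^\omega) = \probm'(B\states^\omega),
\]
applying the cylinder hypothesis termwise. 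Hence the inequality holds on all of $\classcyl$, and $\classcyl$ is an algebra generating the $\sigma$-algebra of measurable subsets of $s\states^\omega$.

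Next I would verify that $\classmon := \{\playset \text{ measurable} \mid x\probm(\playset) \le \probm'(\playset)\}$ is a monotone class. For an increasing sequence $\playset_1 \subseteq \playset_2 \subseteq \cdots$ in $\classmon$ with union $\playset$, continuity from below gives $\probm(\playset_k) \to \probm(\playset)$ and $\probm'(\playset_k) \to \probm'(\playset)$, so the inequality passes to the limit; for a decreasing sequence the same works via continuity from above, which is legitimate here precisely because $\probm$ and $\probm'$ are \emph{finite} (probability) measures. Since $\classcyl \subseteq \classmon$ and $\classcyl$ is an algebra, the Monotone Class Theorem yields $\sigma(\classcyl) \subseteq \classmon$, i.e.\ the inequality holds for every measurable $\playset$, as required.

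I expect the main obstacle to be conceptual rather than computational: seeing that the one-sided inequality breaks the usual $\pi$--$\lambda$ machinery and that the correct tool is the Monotone Class Theorem applied to the \emph{algebra} of finite-dimensional sets (not merely the cylinders). A secondary point to handle with care is the appeal to continuity from above, which relies on finiteness of the measures. An alternative, equally clean route avoids monotone classes altogether by writing each $\probm(\playset)$ and $\probm'(\playset)$ as the Carath\'eodory outer measure $\inf\{\sum_i \probm(\cyl_i) \mid \playset \subseteq \bigcup_i \cyl_i\}$ over countable cylinder covers, and then, for any fixed cover of $\playset$, bounding $\sum_i \probm'(\cyl_i) \ge x\sum_i \probm(\cyl_i) \ge x\probm(\playset)$ before taking the infimum on the left to obtain $\probm'(\playset) \ge x\probm(\playset)$.
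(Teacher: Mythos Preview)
Your proposal is correct and follows essentially the same approach as the paper: extend the inequality from cylinders to the algebra they generate, show that the class $\classmon$ of sets satisfying the inequality is a monotone class, and invoke the Monotone Class Theorem. Your extension step (writing each element of the algebra as a \emph{countable} disjoint union of cylinders, exploiting that $\states$ is countable) is a clean variant of the paper's, which instead cites Billingsley for the fact that the generated algebra consists of \emph{finite} disjoint unions of cylinders; the overall structure is identical.
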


We are ready to prove Theorem~\ref{thm:reduction-to-as}.

\begin{proof}[Proof of Theorem~\ref{thm:reduction-to-as}]
As in the statement of the theorem, suppose that $\formula$ is an objective that is prefix-independent in a downward-closed class~$\mdpclass$ of MDPs so that for any $\mdp=\mdptuple \in \mdpclass$ and any $s \in \states$ and any strategy~$\zstrat$ with $\probm_{\mdp,\state,\zstrat}(\formula) = 1$ there exists an MD-strategy~$\zstrat'$ with $\probm_{\mdp,\state,\zstrat'}(\formula) = 1$.
Let $\mdp=\mdptuple \in \mdpclass$. 
Let $\pmdp = \tuple{\pstates,\pzstates,\prstates,\ptransition,\pprobp}$ be the MDP defined in Lemma~\ref{lem:conditioned-construction}.
Since $\mdpclass$ is downward-closed, we have $\pmdp \in \mdpclass$.
In particular, $\formula$ is prefix-independent in~$\{\pmdp\}$.

First we show that for any $s \in \pstates$ there exists an MD-strategy~$\zstrat'$ with $\probm_{\pmdp,s,\zstrat'}(\formula) = 1$.
Indeed, let $s \in \pstates$.
By the definition of~$\pstates$, there is a strategy~$\zstrat$ with $\probm_{\mdp,s,\zstrat}(\formula) = \valueof{\mdp}{s} > 0$.
By Lemma~\ref{lem:conditioned-construction}.2, we have $\probm_{\pmdp,s,\zstrat}(\formula) = 1$.
By our assumption on~$\mdpclass$ there exists an MD-strategy~$\zstrat'$ with $\probm_{\pmdp,s,\zstrat'}(\formula) = 1$.

By Lemma~\ref{lem:MDP-as-uniform}, it follows that there is an MD-strategy~$\zstrat'$ with $\probm_{\pmdp,s,\zstrat'}(\formula) = 1$ for all $s \in \pstates$.
We show that this strategy~$\zstrat'$ satisfies the property claimed in the statement of the theorem.

To this end, let $n \ge 0$ and $s_0, s_1, \ldots, s_n \in \states$.
If $s_0 s_1 \cdots s_n$ is a partial play in~$\pmdp$ then, by Lemma~\ref{lem:conditioned-construction}.1, 
\begin{align*}
&\ \probm_{\pmdp,s_0,\zstrat'}(s_0 s_1 \cdots s_n \states^\omega) \\ 
= &\ \probm_{\mdp,s_0,\zstrat'}(s_0 s_1 \cdots s_n \states^\omega) \cdot \frac{\valueof{\mdp}{s_n}}{\valueof{\mdp}{s_0}} \,,
\intertext{
and thus, as $\valueof{\mdp}{s_n} \le 1$,
}
&\ \valueof{\mdp}{s_0} \cdot \probm_{\pmdp,s_0,\zstrat'}(s_0 s_1 \cdots s_n \states^\omega) \\
\le &\ \probm_{\mdp,s_0,\zstrat'}(s_0 s_1 \cdots s_n \states^\omega)\,.
\end{align*}
If $s_0 s_1 \cdots s_n$ is not a partial play in~$\pmdp$ then $\probm_{\pmdp,s_0,\zstrat'}(s_0 s_1 \cdots s_n \states^\omega) = 0$ and the previous inequality holds as well.
Therefore, by Lemma~\ref{lem:measure-theory}, we get for all measurable sets $\playset \subseteq s_0 \states^\omega$:
\[
\valueof{\mdp}{s_0} \cdot \probm_{\pmdp,s_0,\zstrat'}(\playset) \le \probm_{\mdp,s_0,\zstrat'}(\playset)
\]
In particular, since $\probm_{\pmdp,s_0,\zstrat'}(\formula) = 1$, we obtain $\valueof{\mdp}{s_0} \le \probm_{\mdp,s_0,\zstrat'}(\formula)$.
The converse inequality $\probm_{\mdp,s_0,\zstrat'}(\formula) \le \valueof{\mdp}{s_0}$ holds by the definition of~$\valueof{\mdp}{s_0}$, hence we conclude $\probm_{\mdp,s_0,\zstrat'}(\formula) = \valueof{\mdp}{s_0}$.
\end{proof}

\section{When MD-strategies suffice in general countable MDPs}\label{sec-inf-MDPs}

Ornstein~\cite{Ornstein:AMS1969} shows that $\epsilon$-optimal and optimal strategies for reachability can be chosen MD:

\begin{theorem}[from Theorem B in \cite{Ornstein:AMS1969}]\label{thm:reach-eps}
For every countable MDP $\mdp$ there exist uniform $\epsilon$-optimal MD-strategies for
reachability objectives $\formula = \reach{\reachset}$, i.e.,
for every $\epsilon >0$ there is an
MD-strategy~$\zstrat_\epsilon$ such that for all $\state \in \states$ we have
$\probm_{\mdp,s,\zstrat_\epsilon}(\formula) \ge \valueof{\mdp}{s}-\epsilon$.
\end{theorem}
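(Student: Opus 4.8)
The plan is to produce, for each fixed $\epsilon>0$, a single MD-strategy $\zstrat_\epsilon$ that is $\epsilon$-optimal from every state simultaneously. States of value $0$ are harmless, since there any strategy already satisfies $\probm_{\mdp,s,\zstrat_\epsilon}(\formula)\ge 0=\valueof{\mdp}{s}$, so I would restrict attention to $S^+=\{s\mid \valueof{\mdp}{s}>0\}$ and treat $\reachset$ as a sink on which $\valueof{\mdp}{s}=1$. The starting point is the Bellman characterization of the value: $\valueof{\mdp}{s}=\sup_{s\transition s'}\valueof{\mdp}{s'}$ for $s\in\zstates\setminus\reachset$, and $\valueof{\mdp}{s}=\sum_{s'}\probp(s)(s')\valueof{\mdp}{s'}$ for $s\in\rstates$. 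Its key consequence is that, under any strategy that exactly preserves the value at player states, the sequence $\valueof{\mdp}{X_n}$ of values along the induced play $X_0 X_1\cdots$ is a martingale bounded in $[0,1]$; by the martingale convergence theorem it converges almost surely to some limit $V_\infty$, and bounded convergence gives $\probm_{\mdp,s,\zstrat}(\formula)=\valueof{\mdp}{s}-\mathbb{E}[\,V_\infty\cdot\mathbf 1_{\neg\formula}\,]$, using that $V_\infty=1$ once the play is absorbed in $\reachset$. Thus $\epsilon$-optimality is \emph{equivalent} to forcing the value-limit to be negligible on the plays that never reach $\reachset$; the whole difficulty is that value can survive, i.e.\ stay bounded away from $0$, on plays that escape to infinity through ever ``deeper'' high-value states without ever entering $\reachset$.

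To realize this I would fix an enumeration $s_1,s_2,\dots$ of $S^+$ together with positive reals $\epsilon_1,\epsilon_2,\dots$, and define $\zstrat_\epsilon$ at a player state $s_i$ to pick deterministically some successor $s'$ with $\valueof{\mdp}{s'}\ge\valueof{\mdp}{s_i}-\epsilon_i$; such an $s'$ exists by the Bellman equation even when the supremum is unattained, which is precisely the infinitely-branching phenomenon that obstructs exact value preservation. Under this strategy $\valueof{\mdp}{X_n}$ is no longer a martingale but a submartingale whose one-step defects at player states are bounded by the $\epsilon_i$. The task then reduces to choosing the budget $(\epsilon_i)$ so that the total accumulated defect, and hence $\mathbb{E}[\,V_\infty\cdot\mathbf 1_{\neg\formula}\,]$, stays below $\epsilon$ from every initial state at once. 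This is the same value-reweighting philosophy that underlies the conditioned MDP $\pmdp$ of Lemma~\ref{lem:conditioned-construction}, but since optimal (value-achieving) strategies for reachability need not exist, I would work directly with the value-martingale in $\mdp$ rather than invoke that lemma, whose hypothesis is not available here.

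The step I expect to be the main obstacle is exactly this loss-accounting, which cannot be carried out by a naive $\sum_i\epsilon_i\le\epsilon$ bound: a memoryless strategy may revisit a given player state $s_i$ arbitrarily, even infinitely, often, so a fixed per-state defect $\epsilon_i$ is incurred repeatedly and the crude sum diverges. Reconciling the per-state approximation error with unbounded revisits, while keeping the strategy memoryless, is where the real work lies. I would attack it by weighting each state's budget by the probability mass that can flow through it and by letting $\epsilon_i$ decay fast along the enumeration, so that states far out contribute vanishingly; the remaining danger is plays that loop forever among a fixed finite set of high-value states, which must be excluded by a separate transience/absorption argument showing that on the escaping event $\neg\formula$ the surviving value limit is $0$ up to $\epsilon$. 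Fusing these two pieces into a single bound uniform over all initial states is the crux of Ornstein's theorem and the most delicate point of the proof.
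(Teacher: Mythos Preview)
The paper does not supply its own proof of this theorem; it merely attributes the result to Ornstein and moves on. (There is a commented-out \texttt{\textbackslash ignore\{\dots\}} block containing a short argument, but that argument establishes only the \emph{non-uniform} statement---$\epsilon$-optimality from a single fixed start state $s_0$---by truncating to a finite sub-MDP and invoking the finite-MDP theory. The appendix proof of Theorem~\ref{thm:reach-opt} then refers back to that finite-set construction. Neither passage proves the uniform version you are asked about.)

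Your proposed route is genuinely different from that finite-MDP reduction: you work globally via the value process $v(X_n)$ and its (super)martingale structure, aiming directly at a single MD-strategy that is $\epsilon$-optimal from every state. This is a standard and sound line of attack, and you correctly isolate the real difficulty: a memoryless rule may revisit a controlled state unboundedly often, so per-state additive defects $\epsilon_i$ cannot simply be summed. Ornstein's own resolution of this recurrence issue is the heart of his Theorem~B, and your proposal is candid that it does not yet close that gap; what you sketch (``weight each state's budget by the probability mass that can flow through it'') is circular as stated, since that mass depends on the very strategy being constructed. So as a plan this is on the right track but incomplete at exactly the point you flag.

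One small correction: under your approximately value-preserving MD-strategy, $v(X_n)$ is a \emph{super}martingale (since $v(s)=\sup_{s\to s'}v(s')$ forces $\mathbb{E}[v(X_{n+1})\mid X_n]\le v(X_n)$), not a submartingale; it is the compensated process $v(X_n)+\sum_{k<n}\epsilon_{X_k}\mathbf{1}[X_k\in\zstates]$ that is a submartingale. This does not affect your outline, but getting the direction right matters once you try to push the argument through.
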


\ignore{
\begin{proof}
Let $\mdp=\mdptuple$ be an MDP, and $\formula = \reach{\reachset}$ for some $\reachset \subseteq \states$.
We can assume that $\reachset = \{t\}$ for some $t \in \states$ and that $\{t\}$ is a sink.
Let $s_0 \in \states$ with $\valueof{\mdp}{s_0} > 0$, and let $\epsilon > 0$.
We show that there exists an MD-strategy~$\zstrat_\epsilon$ with $\probm_{\mdp,s_0,\zstrat_\epsilon}(\formula) \ge \valueof{\mdp}{s_0}-\epsilon$.

By the definition of the value there is an HR-strategy~$\zstrat$  such that $\probm_{\mdp,s_0,\zstrat}(\formula) \ge \valueof{\mdp}{s_0}-\frac12\epsilon$.
Defining
\[
  L = \{ w \in s_0 \states^* \mid \text{$w = s_0 s_1 \cdots s_n$ is a partial play induced by~$\zstrat$, and $s_n = t$}\}\,,
\]
we have $\probm_{\mdp,s_0,\zstrat}(L \{t\}^\omega) = \probm_{\mdp,s_0,\zstrat}(\formula) \ge \valueof{\mdp}{s_0}-\frac12\epsilon$.
Since $\states$ is countable, so is~$L$.
Hence there is a finite subset $K \subseteq L$ with $\probm_{\mdp,s_0,\zstrat}(K \{t\}^\omega) \ge \valueof{\mdp}{s_0}-\epsilon$.
Let $V \subseteq \states$ denote the states that occur in~$K$ (i.e., $V$ is the smallest set with $K \subseteq V^*$).
We have $s,t \in V$.
Since $K$ is finite, so is~$V$.

Consider the MDP~$\mdp'$ obtained from~$\mdp$ by restricting the states to $V \cup \{g\}$, where $g \not\in \states$ is a fresh state that absorbs all transitions that leave~$V$.
Since finite-state MDPs have optimal MD-strategies for reachability~\cite{Puterman:book}, there is an MD-strategy~$\zstrat_\epsilon$ on~$V$ such that $\probm_{\mdp',s_0,\zstrat_\epsilon}(\formula) \ge \probm_{\mdp',s_0,\zstrat}(\formula)$ (where we interpret~$\zstrat$ in~$\mdp'$ naturally).
Hence:
\begin{align*}
\probm_{\mdp',s_0,\zstrat_\epsilon}(\formula) 
& \ge \probm_{\mdp',s_0,\zstrat}(\formula) && \text{definition of~$\zstrat_\epsilon$} \\
& \ge \probm_{\mdp',s_0,\zstrat}(K \{t\}^\omega) && \text{$\denotationof{\formula}{s_0} \supseteq K \{t\}^\omega$} \\
&  =  \probm_{\mdp,s_0,\zstrat}(K \{t\}^\omega) && \text{all words in~$K$ are partial plays in~$\mdp'$} \\
& \ge \valueof{\mdp}{s_0}-\epsilon && \text{definition of~$K$}
\end{align*}
Extending~$\zstrat_\epsilon$ in an arbitrary way to an MD-strategy on all states in~$\states$, we obtain $\probm_{\mdp,s_0,\zstrat_\epsilon}(\formula) \ge \probm_{\mdp,s_0,\zstrat_\epsilon}(\denotationof{\formula}{s_0} \cap V^\omega) = \probm_{\mdp',s_0,\zstrat_\epsilon}(\formula) \ge \valueof{\mdp}{s_0}-\epsilon$.
\qed
\end{proof}
}

\newcommand{\lemreachopt}{
Let $\mdp=\mdptuple$ be an MDP, and $\formula = \reach{\reachset}$. 
Let $s_0 \in \states$ and $\zstrat$ be a strategy with $\probm_{\mdp,s_0,\zstrat}(\formula) = 1$.
Then there is an MD-strategy~$\hat\zstrat$ with $\probm_{\mdp,s_0,\hat\zstrat}(\formula) = 1$.
}

\begin{theorem}[follows from Proposition B in \cite{Ornstein:AMS1969}]\label{thm:reach-opt}
\lemreachopt
\end{theorem}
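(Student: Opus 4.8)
The plan is to reduce the claim to almost-sure absorption into a single state and then use the uniform $\epsilon$-optimal result, Theorem~\ref{thm:reach-eps}, as the main engine. First I would make the standard simplification that $\reachset=\{t\}$ for a single sink~$t$ (collapse all target states into one absorbing state; this changes neither the value nor the event $\reach{\reachset}$). Next I would restrict attention to the almost-sure winning region $W=\{s\in\states \mid \exists\,\zstrat.\ \probm_{\mdp,s,\zstrat}(\reach{\{t\}})=1\}$, which contains~$s_0$ by hypothesis. Two closure properties hold: if $s\in W\cap\rstates$ then every successor of~$s$ lies in~$W$ (each successor is reached with positive probability, so the residual strategy of a witnessing $\zstrat$ must be almost-surely winning there), while if $s\in W\cap\zstates$ then the witnessing strategy selects a successor in~$W$. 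Hence the player can stay in~$W$ forever, and the sub-MDP $\mdp|_W$ obtained by deleting player-edges that leave~$W$ is again a well-defined MDP in which every state has value~$1$ and is almost-surely winning. A short argument shows that in $\mdp|_W$ the state~$t$ is reachable with positive probability from \emph{every} state: otherwise, from a state that cannot reach~$t$ within~$W$, no strategy confined to~$W$ could ever win, contradicting membership in~$W$.

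In $\mdp|_W$ I would then apply Theorem~\ref{thm:reach-eps}: for each $n\in\nat$ there is a uniform MD-strategy~$\zstrat_n$ with $\probm_{\mdp|_W,s,\zstrat_n}(\reach{\{t\}})\ge \valueof{\mdp|_W}{s}-2^{-n}=1-2^{-n}$ simultaneously for all $s\in W$. The uniformity is essential: whatever state a play currently occupies, switching to~$\zstrat_n$ still guarantees reaching~$t$ with probability at least $1-2^{-n}$. The remaining task is to synthesise, from the family $(\zstrat_n)_n$, a single strategy that reaches~$t$ with probability \emph{exactly}~$1$, and moreover a memoryless one; extending such a strategy arbitrarily to all of~$\states$ then yields the desired~$\hat\zstrat$, which from~$s_0$ never leaves~$W$ and therefore satisfies $\probm_{\mdp,s_0,\hat\zstrat}(\reach{\{t\}})=1$.

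\textbf{The main obstacle is precisely this synthesis step.} It cannot be done by the obvious greedy/attractor recipe---choosing at each player state a successor of strictly smaller distance-to-$t$---because at a random state a play may drift, with positive probability, to a successor of much larger rank from which the greedy memoryless choice is globally suboptimal; this is the same phenomenon that makes limit-sure winning strictly weaker than almost-sure winning on infinite arenas. Nor can one simply interleave the~$\zstrat_n$ in timed rounds, since on an infinitely branching MDP there is no uniform time bound by which $\zstrat_n$ reaches~$t$, so the rounds cannot be driven by a finite counter. Overcoming this is exactly the content of Ornstein's Proposition~B: a diagonalisation that patches the optimal strategies of larger and larger finite sub-MDPs, with the patching indices chosen so that the residual failure probabilities are summable along every play, produces in the limit a stationary strategy of value~$1$. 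I would invoke that construction---equivalently, apply Proposition~B of~\cite{Ornstein:AMS1969} to the absorption criterion $\reach{\{t\}}$ on~$\mdp|_W$---to obtain the almost-surely winning MD-strategy, completing the argument as above. (Note that only the single-state version is needed here; the uniform variant, when required elsewhere, would follow by feeding this pointwise statement into Lemma~\ref{lem:MDP-as-uniform}.)
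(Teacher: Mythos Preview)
Your reductions are sound and the difficulty you identify is exactly right, but the proposal has a genuine gap at the decisive step. You explicitly flag the synthesis of a single almost-surely winning MD-strategy from the family $(\zstrat_n)_n$ as \emph{the main obstacle}, and then resolve it by writing ``invoke that construction---equivalently, apply Proposition~B of~\cite{Ornstein:AMS1969}''. But the theorem you are proving is already labelled ``follows from Proposition~B in~\cite{Ornstein:AMS1969}''; the paper's stated purpose is to give an \emph{alternative} proof. So at the point where the actual content lies, your argument is circular: you defer to the very result being (re)proved. The preparatory steps (collapsing $\reachset$ to a sink, restricting to the almost-sure region~$W$, invoking Theorem~\ref{thm:reach-eps}) are fine, but they are not where the work is.

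The paper's proof does carry out the synthesis, and it does so differently from what you sketch. It does not treat the $\epsilon$-optimal strategies as black boxes; instead it uses the \emph{proof} of Theorem~\ref{thm:reach-eps} to extract, for each $\epsilon_i=2^{-i}$, a \emph{finite} set $U_i\subseteq\states$ such that the MD-strategy $\zstrat_i$ reaches~$t$ with probability $\ge 1-\epsilon_i$ while staying inside~$U_i$. Finiteness of $U_i$ yields a uniform lower bound $c>0$ and a uniform step bound $\ell$ for hitting~$t$ from any state of~$U_i$; this is what lets one fix $\zstrat_i$ on~$U_i$ and argue that all states remain almost-surely winning in the resulting MDP~$\mdp_i$ (any play that enters $U_i$ infinitely often reaches~$t$ almost surely). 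Iterating with $U_1\subseteq U_2\subseteq\cdots$ and $\zstrat_{i+1}$ agreeing with $\zstrat_i$ on~$U_i$ produces a limit MD-strategy $\hat\zstrat$ with $\probm_{\mdp,s_0,\hat\zstrat}(\denotationof{\formula}{s_0}\cap U^\omega)\ge 1-\epsilon_i$ for every~$i$, hence $=1$. This is the missing engine; your proposal describes its shape (``patches the optimal strategies of larger and larger finite sub-MDPs'') but does not supply it.
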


Both theorems are due to~\cite{Ornstein:AMS1969}; we give an alternative proof of Theorem~\ref{thm:reach-opt} in the appendix.
We generalize Theorem~\ref{thm:reach-opt} to B\"uchi objectives, using the principle that B\"uchi is repeated reachability:

\newcommand{\propasBuchi}{
Let $\mdp=\mdptuple$ be an MDP, and $\state_0 \in \states$, and $\zstrat$ a strategy, and $\coloring:\states\to \{1,2\}$, and $\formula=\Parity{\coloring}$.
Suppose $\probm_{\mdp,\state_0,\zstrat}(\formula) = 1$.
Then there is an MD-strategy~$\zstrat'$ with $\probm_{\mdp,\state_0,\zstrat'}(\formula) = 1$.
}

\begin{proposition}\label{prop:as-Buchi}
\propasBuchi
\end{proposition}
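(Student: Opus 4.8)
The plan is to exploit the principle that the B\"uchi objective $\formula=\Parity{\coloring}$ with $\coloring:\states\to\{1,2\}$ is \emph{repeated reachability} of the set $F=\{s\in\states\mid\colorof{s}=2\}$ of color-$2$ states: a play satisfies $\formula$ iff it visits $F$ infinitely often. Write $W=\zwinset{\formula}{\ge 1}$ for the set of almost-surely winning states and $T=F\cap W$. Since $\probm_{\mdp,\state_0,\zstrat}(\formula)=1$ we have $\state_0\in W$, so it suffices to produce a single MD-strategy $\zstrat'$ that is almost-surely winning from every state of~$W$.

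First I would record a ``staying in $W$'' property: if $\state_0 s_1\cdots s_n$ is a partial play induced with positive probability by an almost-surely winning strategy, then $s_n\in W$. This follows from prefix-independence, since the residual strategy at this history witnesses value $1$ and, in fact, probability $1$ of $\formula$ from $s_n$ (the conditional probability of $\formula$ given the corresponding cylinder is~$1$). Consequently every successor of a random state of $W$ lies in $W$, and every player state of $W$ has a successor in $W$; hence the restriction of $\mdp$ to $W$ is a well-defined MDP, and any almost-surely winning strategy keeps the play inside $W$ forever.

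Next I would reduce the B\"uchi problem on $W$ to reachability and invoke the earlier results. Inside the restricted MDP every state almost-surely reaches $T$, because the almost-surely winning strategy stays in $W$ and hits $F$ infinitely often, hence hits $F\cap W=T$ at least once. Turning $T$ into a sink makes $\reach{T}$ prefix-independent, so Theorem~\ref{thm:reach-opt} supplies, for each single start state, an almost-surely reaching MD-strategy, and Lemma~\ref{lem:MDP-as-uniform} upgrades these to one \emph{uniform} MD-strategy~$\zstrat'$ reaching $T$ almost surely from every state of~$W$. Re-interpreting $\zstrat'$ in the restricted MDP (choosing, at player states of $T$, any $W$-successor) yields an MD-strategy on $W$ that never leaves $W$ and satisfies $\probm_{\mdp,s,\zstrat'}(\reach{T})=1$ for all $s\in W$. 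Because $\zstrat'$ is memoryless and never leaves $W$, a one-step unfolding shows it reaches $T$ at some time ${\ge}1$ with probability $1$ from every state of~$W$; defining successive hitting times $\theta_0=0$ and $\theta_{k+1}=\min\{t>\theta_k\mid s_t\in T\}$, the strong Markov property and an induction give $\theta_k<\infty$ almost surely for every~$k$, and the countable intersection of these almost-sure events shows that $F$ is visited infinitely often almost surely, i.e.\ $\probm_{\mdp,s,\zstrat'}(\formula)=1$ for all $s\in W$, in particular for $s=\state_0$.

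The main obstacle I anticipate is the gap, peculiar to infinite MDPs, between \emph{value}~$1$ and almost-sure winning: the correct ambient set is $W=\zwinset{\formula}{\ge 1}$, not the value-$1$ set, and the whole argument must be confined to $W$ via the staying-in-$W$ property, so that each visit to $T$ lands again in a state from which $\zstrat'$ can reach~$T$. The one other delicate point is bridging single-visit reachability to infinitely many visits; this is exactly where memorylessness of $\zstrat'$ is essential, as it makes the residual behavior after each visit identical and lets the countable-intersection argument close the loop.
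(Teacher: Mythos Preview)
Your proposal is correct and follows essentially the same route as the paper: restrict to the almost-surely winning set, reduce B\"uchi to reachability of the color-$2$ states, invoke Theorem~\ref{thm:reach-opt} together with Lemma~\ref{lem:MDP-as-uniform} to obtain a uniform almost-surely reaching MD-strategy, and then observe that uniform almost-sure reachability by a memoryless strategy yields infinitely many visits. The paper's proof is terser (it compresses your staying-in-$W$ argument into a one-line WLOG and your hitting-time induction into ``it follows''), but the ideas and the key lemmas invoked are identical.
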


By appealing to Theorem~\ref{thm:reduction-to-as} it follows:

\newcommand{\thmBuchiopt}{
Let $\mdp$ be an MDP, $\coloring:\states\to \{1,2\}$, and
$\formula=\Parity{\coloring}$ be a B\"uchi-objective
(subsuming reachability and safety).
Then there exists an MD-strategy $\zstrat'$ that is optimal for all states
that have an optimal strategy:
\[
\begin{aligned}
\big(
\exists \zstrat \in \zstratset.\,
\probm_{\mdp,\state,\zstrat}(\formula) = \valueof{\mdp}{s}
\big)&
\Longrightarrow \\
&\probm_{\mdp,\state,\zstrat'}(\formula) = \valueof{\mdp}{s}
\end{aligned}
\]
}

\begin{theorem} \label{thm:Buchi-opt}
\thmBuchiopt
\end{theorem}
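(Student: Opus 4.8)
The plan is to invoke Theorem~\ref{thm:reduction-to-as} directly, instantiating its abstract class $\mdpclass$ as the class of \emph{all} MDPs and taking $\formula = \cParity{\{1,2\}}$. To apply that theorem I must check its two hypotheses: that $\formula$ is prefix-independent in a downward-closed class, and that almost-sure winning strategies can be chosen MD throughout that class.

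First I would observe that the class of all MDPs is trivially downward-closed: by definition, downward-closedness only quantifies over objects that are already MDPs (hence already satisfy the requirement that every state has a successor), so any such object obtained by restricting a transition relation is again in the class. Second, since $\cParity{\{1,2\}}$ is a parity objective, and parity objectives are prefix-independent in the class of all MDPs (as recorded just before Theorem~\ref{thm:reduction-to-as}), the prefix-independence hypothesis holds. The remaining hypothesis of Theorem~\ref{thm:reduction-to-as}---that for every MDP, every state, and every strategy winning $\formula$ almost surely there is an MD-strategy also winning $\formula$ almost surely---is exactly the content of Proposition~\ref{prop:as-Buchi}.

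With all hypotheses verified, Theorem~\ref{thm:reduction-to-as} applied to the given $\mdp$ yields a single MD-strategy $\zstrat'$ such that, for every state $\state$ admitting an optimal strategy, $\zstrat'$ itself is optimal, i.e.\ $\probm_{\mdp,\state,\zstrat'}(\formula) = \valueof{\mdp}{\state}$. This is precisely the claimed implication. The parenthetical remark that Büchi subsumes reachability and safety is a statement about the expressiveness of the objective (via the color assignments described in the preliminaries) and requires no separate argument here.

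I do not expect a genuine obstacle in this step, since the substantive work has already been carried out in Proposition~\ref{prop:as-Buchi} (the almost-sure case) and in Theorem~\ref{thm:reduction-to-as} (the lifting from almost-sure to optimal via the conditioned MDP~$\pmdp$). The only point deserving care is the legitimacy of instantiating the abstract class $\mdpclass$ as the class of all MDPs: one must confirm that prefix-independence holds \emph{globally} rather than only within a restricted class, which it does because parity objectives depend only on the set of colors seen infinitely often and are therefore prefix-independent in the class of all MDPs.
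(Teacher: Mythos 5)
Your proposal is correct and matches the paper's own derivation exactly: the paper likewise obtains Theorem~\ref{thm:Buchi-opt} by applying Theorem~\ref{thm:reduction-to-as} with $\mdpclass$ the (downward-closed) class of all MDPs, using prefix-independence of parity objectives and Proposition~\ref{prop:as-Buchi} to discharge the hypotheses. Nothing further is needed.
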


\section{When MD-strategies suffice in finitely branching MDPs}\label{sec-finite-branch}
In this section we prove that optimal strategies for $\cParity{\{0,1,2\}}$, where they exist,
can be chosen MD (Theorem~\ref{thm:012quant}) and that $\epsilon$-optimal strategies for co-B\"uchi objectives
can be chosen MD (Theorem~\ref{thm:coBuchi}).
To prepare the ground for these results, we first consider safety objectives.

\subsection{Optimal MD-strategies for Safety}
\label{subsec-safety}

The following proposition asserts in particular that for safety in finitely branching MDPs, there is no need for merely $\epsilon$-optimal strategies, as there always exists an optimal MD-strategy.

\newcommand{\propoptimalavoiding}{
Let $\mdp=\mdptuple$ be a finitely branching MDP, and $\reachset \subseteq \states$, and $\formula = \safety{\reachset}$.
Define an MD-strategy~$\optav$ (for ``optimal avoiding'') that, in each state~$\state$, picks a successor state with the largest value $\valueof{\mdp}{\state}=\sup_{\zstrat\in\zstratset}
\probm_{\mdp,\state,\zstrat}(\formula)$.
Then for all states $\state \in \states$ we have
$\probm_{\mdp,\state,\optav}(\formula) = \valueof{\mdp}{\state}$, i.e.,
$\optav$ is uniformly optimal.
}

\begin{proposition}[from Theorem 7.3.6(a) in \cite{Puterman:book}]\label{prop:optimal-avoiding}
\propoptimalavoiding
\end{proposition}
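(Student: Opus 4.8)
The plan is to realize $\optav$ as a uniformly optimal strategy through a bounded submartingale argument on the value function. First note that $\optav$ is well-defined: since $\mdp$ is finitely branching, every player state has finitely many successors, so a successor of maximal value is attained. As granted earlier, we may assume $\reachset$ is an absorbing sink, so that $\valueof{\mdp}{\state}=0$ for every $\state\in\reachset$ (entering $\reachset$ makes safety fail, and from a sink one never escapes). Since $\probm_{\mdp,\state,\optav}(\formula)\le\valueof{\mdp}{\state}$ holds by the definition of the value (as $\optav$ is just one strategy), it suffices to prove the reverse inequality $\valueof{\mdp}{\state}\le\probm_{\mdp,\state,\optav}(\formula)$ for every $\state\in\states$ at once.

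The key ingredient is the ``easy'' direction of the one-step Bellman inequalities for the value function. Writing $v(\state):=\valueof{\mdp}{\state}$, I would first establish: for $\state\in\zstates\setminus\reachset$, that $v(\state)\le\max_{\state\transition\state'}v(\state')$, because any strategy from $\state$ commits to a distribution over successors and then continues, making $\probm_{\mdp,\state,\zstrat}(\formula)$ a convex combination of continuation values, each bounded by the corresponding $v(\state')$; and for $\state\in\rstates\setminus\reachset$, that $v(\state)\le\sum_{\state'}\probp(\state)(\state')\,v(\state')$, by shifting the measure one random step (legitimate since $\state\notin\reachset$ and the residual objective is again ``avoid $\reachset$'') and bounding each continuation by $v(\state')$. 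Now let $\state_0\state_1\cdots$ be a play generated by $\optav$ from $\state_0=\state$, and set $X_n:=v(\state_n)$. At a player state $\optav$ moves to a successor attaining the maximum, so $X_{n+1}=\max_{\state'}v(\state')\ge X_n$ deterministically; at a random state the conditional expectation of $X_{n+1}$ equals $\sum_{\state'}\probp(\state_n)(\state')\,v(\state')\ge X_n$; and inside the sink $\reachset$ all values are $0$. Hence $(X_n)_{n\ge0}$ is a submartingale with values in $[0,1]$ for the filtration generated by the play under $\optav$.

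Being bounded, this submartingale converges almost surely to a limit $X_\infty\in[0,1]$, and by dominated convergence together with the monotonicity $\mathbb{E}[X_n]\ge\mathbb{E}[X_0]$ we get $\valueof{\mdp}{\state}=X_0=\mathbb{E}[X_0]\le\lim_n\mathbb{E}[X_n]=\mathbb{E}[X_\infty]$. It remains to identify $X_\infty$ with the safety event. On a play violating safety, the play enters $\reachset$ at some finite time, after which $X_n=0$, so $X_\infty=0$ there; on a play satisfying safety, trivially $X_\infty\le1$. Thus $X_\infty\le\mathbb{1}_{\denotationof{\formula}{\state}}$ pointwise, whence $\mathbb{E}[X_\infty]\le\probm_{\mdp,\state,\optav}(\formula)$. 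Chaining the inequalities gives $\valueof{\mdp}{\state}\le\mathbb{E}[X_\infty]\le\probm_{\mdp,\state,\optav}(\formula)\le\valueof{\mdp}{\state}$, so equality holds for every $\state$, proving uniform optimality of $\optav$.

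The step I expect to be the main obstacle is the measure-theoretic bookkeeping: verifying rigorously that $(X_n)$ is a submartingale for the natural filtration, which requires the one-step Bellman inequalities in the countable setting (interchanging the supremum over strategies with the one-step decomposition), and justifying the passage $\lim_n\mathbb{E}[X_n]=\mathbb{E}[X_\infty]$ to the limit. It is worth emphasizing that the finite-branching hypothesis is used precisely at the player states: it guarantees that the maximum over successors is \emph{attained}, so that $\optav$ is definable and realizes $X_{n+1}=\max_{\state'}v(\state')$; without it the supremum might not be achieved and the greedy strategy would fail to match the value.
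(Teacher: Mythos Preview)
Your proof is correct and rests on the same core idea as the paper's: track $X_n=\valueof{\mdp}{s_n}$ along the $\optav$-play as a (sub)martingale and compare it pointwise to the safety indicator. The paper's execution is slightly more elementary: instead of invoking the martingale convergence theorem and bounding $X_\infty\le\mathbb{1}_{\denotationof{\formula}{}}$ at the limit, it compares at each finite time via $\valueof{\mdp}{s_n}\le[\![s_n\notin\reachset]\!]$ (immediate, since $\valueof{\mdp}{\cdot}=0$ on~$\reachset$ and the indicator is $1$ off~$\reachset$), then uses that $\reachset$ is a sink to write $\probm_{\mdp,\state,\optav}(\formula)=\lim_n \probm_{\mdp,\state,\optav}(s_n\notin\reachset)\ge\liminf_n\mathcal{E}[\valueof{\mdp}{s_n}]=\valueof{\mdp}{\state}$, the last equality being the martingale property shown by a one-line induction. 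So the paper avoids both the convergence theorem and dominated convergence, but the two arguments are otherwise the same; in particular, the paper implicitly uses the full Bellman \emph{equalities} (so $X_n$ is actually a martingale under $\optav$, not merely a submartingale), which also hold here and would slightly simplify your write-up.
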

Note that, for infinitely branching MDPs, this definition of~$\optav$ would be
unsound, as ``the largest value'' might not exist.


\begin{definition}\label{def:safeset}
Let $\mdp=\mdptuple$ be a finitely branching MDP,
$\coloring:\states\to \nat$ a color function, 
$\formula = \safety{\colorset \states \neq 0}$,
$\optav$ the strategy from Proposition~\ref{prop:optimal-avoiding}
and $\tau \in [0,1]$. We define
\[
 \safesub{\mdp}{\tau} := \{\state \in \states \mid 
\probm_{\mdp,\state,\optav}(\formula) \ge \tau\} \;,
\]
i.e., $\safesub{\mdp}{\tau}$ is the set of states from which the player can remain within color-0 states forever with probability $\ge \tau$.
We drop the subscript~$\mdp$ when the MDP~$\mdp$ is understood.
\end{definition}

Loosely speaking, the following lemma gives a lower bound on the probability that, starting from a ``safe'' state, ``unsafe'' states are forever avoided by~$\optav$:

\newcommand{\lemsafetwolevels}{
Let $\mdp=\mdptuple$ be a finitely branching MDP, $\coloring:\states\to \nat$
a color function and
$\optav$ the strategy from Proposition~\ref{prop:optimal-avoiding}.
Let $0 < \tau_1 \le \tau_2 \le 1$, and $\state \in \safe{\tau_2}$.
Then $\probm_{\mdp,\state,\optav}(\always \safe{\tau_1}) \ge \frac{\tau_2 - \tau_1}{1 - \tau_1}$.
}
\begin{lemma}\label{lem:safe-two-levels}
\lemsafetwolevels
\end{lemma}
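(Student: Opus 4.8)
The plan is to track the \emph{value process} along a random play generated by $\optav$, recognise it as a bounded martingale, and then extract the bound by an optional-stopping argument. Write $\formula = \safety{\colorset{\states}{\neq}{0}}$. By Proposition~\ref{prop:optimal-avoiding}, $\optav$ is uniformly optimal, so $\valueof{\mdp}{\state'} = \probm_{\mdp,\state',\optav}(\formula)$ for every $\state'$, and hence $\safe{\tau} = \{\state' \in \states \mid \valueof{\mdp}{\state'} \ge \tau\}$. Since a state of nonzero color immediately violates $\formula$, such a state has value $0$; contrapositively, every state with value $\ge \tau_1 > 0$ has color $0$, so all states visited strictly before leaving $\safe{\tau_1}$ have color $0$.

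First I would establish the optimality (Bellman) equations for $\formula$ restricted to color-$0$ states: for a color-$0$ player state $\state'$ one has $\valueof{\mdp}{\state'} = \max_{\state' \transition \state''} \valueof{\mdp}{\state''}$ (the maximum is attained because $\mdp$ is finitely branching), and for a color-$0$ random state $\state'$ one has $\valueof{\mdp}{\state'} = \sum_{\state''} \probp(\state')(\state'') \valueof{\mdp}{\state''}$. Because $\optav$ selects a successor of maximal value at every player state, along a play $\state = \state_0 \state_1 \cdots$ induced by $\optav$ the process $X_n := \valueof{\mdp}{\state_n}$ satisfies $X_{n+1} = X_n$ at player steps and $\mathbb{E}[X_{n+1} \mid \state_0 \cdots \state_n] = X_n$ at random steps, as long as $\state_n$ has color $0$. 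Hence $X_n$ is a $[0,1]$-valued martingale up to the first visit to a non-color-$0$ state.

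Then I would introduce the stopping time $T := \inf\{n \mid X_n < \tau_1\}$, the first time the play leaves $\safe{\tau_1}$, and observe that $\{T = \infty\}$ is exactly the event $\always \safe{\tau_1}$, so $\probm_{\mdp,\state,\optav}(\always \safe{\tau_1}) = \probm(T = \infty)$. Since every state seen strictly before time $T$ lies in $\safe{\tau_1}$ and thus has color $0$, the stopped process $(X_{T \wedge n})_n$ is a bounded martingale with $X_0 = \valueof{\mdp}{\state} \ge \tau_2$. Bounded convergence together with the martingale convergence theorem (on $\{T = \infty\}$) gives
\[
\tau_2 \le X_0 = \lim_{n \to \infty} \mathbb{E}[X_{T \wedge n}] = \mathbb{E}[X_T \mathbf{1}_{T < \infty}] + \mathbb{E}[X_\infty \mathbf{1}_{T = \infty}].
\]
On $\{T < \infty\}$ we have $X_T < \tau_1$ and on $\{T = \infty\}$ we have $X_\infty \le 1$, so writing $p := \probm(T = \infty)$ yields $\tau_2 \le \tau_1(1-p) + p = \tau_1 + p(1 - \tau_1)$, whence $p \ge \frac{\tau_2 - \tau_1}{1 - \tau_1}$ (for $\tau_1 < 1$; the case $\tau_1 = \tau_2 = 1$ is degenerate).

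The step I expect to be the main obstacle is the rigorous justification of the martingale property and the interchange of limit and expectation: one must verify the Bellman equations for safety on color-$0$ states (where finite branching is essential, so that the supremum over successors is a genuine maximum realised by $\optav$), confirm that the stopped process only ever involves color-$0$ states where those equations apply, and invoke bounded convergence carefully to pass to the limit in $\mathbb{E}[X_{T \wedge n}] = X_0$. The remaining algebra extracting the stated lower bound is then routine.
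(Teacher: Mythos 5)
Your proof is correct, but it takes a genuinely different technical route from the paper's. The paper's argument is a one-step conditioning: using Definition~\ref{def:safeset} and the uniform optimality of~$\optav$ (Proposition~\ref{prop:optimal-avoiding}) it lower-bounds $\probm_{\mdp,\state,\optav}(\always \colorset \states = 0)$ by~$\tau_2$, observes that this probability conditioned on $\always\safe{\tau_1}$ is~$1$ while conditioned on the complement it is at most~$\tau_1$ (by the Markov property, the residual safety probability after exiting $\safe{\tau_1}$ is the value of the exit state, which is below~$\tau_1$), and then applies the law of total probability to get $\tau_2 \le x + \tau_1(1-x)$. Your martingale argument encodes exactly the same decomposition --- exit versus no exit from $\safe{\tau_1}$, with the exit contribution bounded by $\tau_1$ and the non-exit contribution by $1$ --- but justifies it via the Bellman equations, the value martingale under~$\optav$, and optional stopping, rather than via Proposition~\ref{prop:optimal-avoiding} plus the Markov property. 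The extra verifications you flag (attainment of the maximum under finite branching, the martingale property holding only while the play stays in color-$0$ states, bounded convergence for the stopped process) are all standard and go through; the paper's version sidesteps them by leaning directly on the already-proved uniform optimality of~$\optav$, which makes it shorter and more elementary, whereas your version makes the underlying mechanism (the value process is a bounded martingale under an optimal strategy) more explicit and would generalize more readily to other prefix-independent value-preservation arguments. Both treatments handle the boundary case $\tau_1 = 1$ only degenerately, and both yield the identical constant $\frac{\tau_2-\tau_1}{1-\tau_1}$.
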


\begin{proof}
We compute probabilities conditioned under the event $\always \safe{\tau_1}$.
Since $\safe{\tau_1} \subseteq \colorset \states = 0$, we have $\probm_{\mdp,\state,\optav}(\always \colorset \states = 0 \mid \always \safe{\tau_1}) = 1$.
From the definition of $\safe{\tau_1}$ and the Markov property we get $\probm_{\mdp,\state,\optav}(\always \colorset \states = 0 \mid \neg \always \safe{\tau_1}) \le \tau_1$.
Applying the law of total probability and writing $x$ for $\probm_{\mdp,\state,\optav}(\always \safe{\tau_1})$ we obtain:
\begin{align*}
\tau_2 
& \ \le \ \probm_{\mdp,\state,\optav}(\always \colorset \states = 0)  \qquad \qquad \qquad \qquad \text{Def.~\ref{def:safeset}} \\[1mm]
& \ = \ \probm_{\mdp,\state,\optav}(\always \colorset \states = 0 \mid \always \safe{\tau_1}) \cdot x  \\
& \hspace{4mm} \mbox{} + 
      \probm_{\mdp,\state,\optav}(\always \colorset \states = 0 \mid \neg \always \safe{\tau_1}) \cdot (1-x) \\[1mm]
& \ \le \ x + \tau_1 \cdot (1-x)
\end{align*}
It follows $x \ge \frac{\tau_2 - \tau_1}{1 - \tau_1}$.
\end{proof}

The following lemma states for all $\tau < 1$ that eventually remaining in color-0 states but outside~$\safe{\tau}$ has probability zero.

\newcommand{\lemaszotreturntosafe}{
Let $\mdp=\mdptuple$ be a finitely branching MDP, and $\coloring:\states\to \nat$ a color function.
Let $\state$ be a state, and $\zstrat$ a strategy, and $\tau < 1$.
Then $\probm_{\mdp,\state,\zstrat}(\eventually \always \neg\safe{\tau} \land \eventually \always \colorset \states = 0) = 0$.
}
\begin{lemma}\label{lem:as012-return-to-safe}
\lemaszotreturntosafe
\end{lemma}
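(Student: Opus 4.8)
The plan is to reduce the claim to a cleaner ``from the start'' statement and then apply L\'evy's $0$--$1$ law. Write $E$ for the event $\always \colorset \states = 0$ of staying in color-$0$ states forever, and let $f(\state) := \valueof{\mdp}{\state}$ be the optimal value of the safety objective $\safety{\colorset \states \neq 0} = (\always \colorset \states = 0)$. By Proposition~\ref{prop:optimal-avoiding}, $\optav$ attains this value, so $f(\state) = \probm_{\mdp,\state,\optav}(E)$ and, crucially, $f$ dominates every strategy: $\probm_{\mdp,\state,\zstrat}(E) \le f(\state)$ for all $\zstrat$. By Definition~\ref{def:safeset}, $\safe{\tau} = \{\state \mid f(\state) \ge \tau\}$, so $\state \notin \safe{\tau}$ means exactly $f(\state) < \tau$.

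First I would reduce to the event $B := \{w \mid \forall i \ge 0.\ w(i) \in \colorset \states = 0 \text{ and } w(i) \notin \safe{\tau}\}$, i.e.\ the play stays in color-$0$ and outside $\safe{\tau}$ from the very start. Writing $A_m$ for the event that $w(i) \in (\colorset \states = 0) \cap \neg\safe{\tau}$ for all $i \ge m$, one has $\eventually \always \neg\safe{\tau} \land \eventually \always \colorset \states = 0 = \bigcup_{m \ge 0} A_m$ (take $m$ beyond both ``eventually'' thresholds). By countable additivity it suffices to show $\probm_{\mdp,\state,\zstrat}(A_m) = 0$ for each $m$; and by the Markov property, conditioning on the length-$m$ prefix, $\probm_{\mdp,\state,\zstrat}(A_m) = \sum_{\state_0 \cdots \state_m} \probm_{\mdp,\state,\zstrat}(\state_0 \cdots \state_m \states^\omega) \cdot \probm_{\mdp,\state_m,\zstrat'}(B)$ for the appropriate residual strategies $\zstrat'$. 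Thus everything reduces to proving $\probm_{\mdp,\state,\zstrat}(B) = 0$ for \emph{every} state $\state$ and \emph{every} strategy $\zstrat$.

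For this core claim I would invoke L\'evy's $0$--$1$ law for the filtration $(\mathcal{F}_i)$ generated by the first $i{+}1$ states: since $E$ is $\mathcal{F}_\infty$-measurable, $\probm_{\mdp,\state,\zstrat}(E \mid \mathcal{F}_i) \to \mathbf{1}_E$ almost surely. On the other hand, conditioning on the prefix $\state_0 \cdots \state_i$ and using the Markov property together with optimality of $\optav$, the conditional probability given $\mathcal{F}_i$ of staying in color-$0$ forever equals, on the event that $\state_0 \cdots \state_i$ are all color-$0$, the value $\probm_{\mdp,\state_i,\zstrat'}(E) \le f(\state_i)$ for the residual strategy $\zstrat'$, and it is $0$ otherwise; hence $\probm_{\mdp,\state,\zstrat}(E \mid \mathcal{F}_i) \le f(\state_i)$ almost surely. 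Now on $B$ the event $E$ holds, so $\mathbf{1}_E = 1$, while every $\state_i \notin \safe{\tau}$ forces $f(\state_i) < \tau < 1$; hence on $B$ the conditional probabilities stay below $\tau$ and cannot converge to $1$. Since L\'evy convergence holds almost surely, $B$ must be null, giving $\probm_{\mdp,\state,\zstrat}(B) = 0$.

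The only delicate point is the inequality $\probm_{\mdp,\state,\zstrat}(E \mid \mathcal{F}_i) \le f(\state_i)$, which rests on identifying the conditional law of the future with the residual-strategy measure (the Markov property for strategy-induced measures) and on the uniform optimality of $\optav$ from Proposition~\ref{prop:optimal-avoiding}; finite branching enters the argument only there. Everything else (the $\bigcup_m A_m$ decomposition and L\'evy's law) is routine. I would remark that a bounded-supermartingale argument on $Z_i := f(\state_i)$, which is a supermartingale while the play remains in color-$0$ by the safety Bellman equations, does converge almost surely but does \emph{not} by itself yield the contradiction (its limit could be $\le \tau$), which is precisely why the $0$--$1$ law is the right tool here.
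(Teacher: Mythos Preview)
Your proof is correct and takes a genuinely different route from the paper's. The paper argues by an elementary geometric-decay estimate: for each $s\notin\safe{\tau}$ it chooses a finite horizon $n(s)$ at which the probability of having stayed in color~$0$ has dropped below $(1+\tau)/2$, concatenates such blocks, and shows by induction that $\probm(L^k\states^\omega)\le\big(\tfrac{1+\tau}{2}\big)^k$, whence $\probm(L^\omega)=0$. Your argument replaces this block decomposition with L\'evy's $0$--$1$ law: on $B$ the indicator $\mathbf{1}_E$ equals~$1$, while $\probm(E\mid\mathcal F_i)\le f(s_i)<\tau$ stays bounded away from~$1$, so $B$ must be null. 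The martingale approach is shorter and conceptually cleaner, and it sidesteps a delicate point in the paper's inductive step (the residual strategy after a block need not be the original $\zstrat$, so $n(\cdot)$ should really be uniform in the strategy). Conversely, the paper's argument is more self-contained, invoking no martingale theory.

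One small remark: you claim that finite branching enters via Proposition~\ref{prop:optimal-avoiding}, but in fact the key inequality $\probm_{\mdp,\state,\zstrat}(E\mid\mathcal F_i)\le f(\state_i)$ needs only that $f(\state_i)=\sup_{\zstrat'}\probm_{\mdp,\state_i,\zstrat'}(E)$ dominates every residual strategy, which is the definition of the value; the existence of $\optav$ is not used. So your argument actually proves the lemma for arbitrary countable MDPs, not just finitely branching ones, which is slightly stronger than what the paper states.
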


\subsection{Optimal MD-strategies for $\cParity{\{0,1,2\}}$}\label{subsec-as012}
\newcommand{\thmzotquant}{
Let $\mdp$ be a finitely branching  MDP, $\coloring:\states\to \{0,1,2\}$, and
$\formula=\Parity{\coloring}$.
Then there exists an MD-strategy $\zstrat'$ that is optimal for all states
that have an optimal strategy:
\[
\begin{aligned}
\big(
\exists \zstrat \in \zstratset.\,
\probm_{\mdp,\state,\zstrat}(\formula) = \valueof{\mdp}{s}
\big)
&\Longrightarrow\\
&\probm_{\mdp,\state,\zstrat'}(\formula) = \valueof{\mdp}{s}
\end{aligned}
\]
}

\begin{theorem} \label{thm:012quant}
\thmzotquant
\end{theorem}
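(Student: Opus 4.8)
The plan is to invoke the reduction machinery of \cref{thm:reduction-to-as}. Since $\formula = \Parity{\coloring}$ with $\coloring:\states\to\{0,1,2\}$ is prefix-independent in the class of all MDPs, and the class of finitely branching MDPs is downward-closed, it suffices to prove the almost-sure statement: for every finitely branching $\mdp$, every $\state_0 \in \states$, and every strategy $\zstrat$ with $\probm_{\mdp,\state_0,\zstrat}(\formula)=1$, there is an MD-strategy $\zstrat'$ with $\probm_{\mdp,\state_0,\zstrat'}(\formula)=1$; \cref{thm:reduction-to-as} then upgrades this to the desired optimal MD-strategy. I may further assume that every state of $\mdp$ is almost-surely winning, by restricting $\mdp$ to its almost-surely winning region and making the complement a sink (this preserves finite branching, and by prefix-independence the value-$1$ strategy a.s.\ stays within the winning region). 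Write $R_2 := \colorset{\states}{=}{2}$. Observe that $\formula$ decomposes as $B \cup C$, where $B = \always\eventually R_2$ (color $2$ infinitely often) and $C = \eventually\always \colorset{\states}{=}{0}$ (eventually always color $0$).

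Fix thresholds $0 < \tau_1 < \tau_2 < 1$ and set $T := R_2 \cup \safe{\tau_2}$, noting $\safe{\tau_2}\subseteq\safe{\tau_1}\subseteq\colorset{\states}{=}{0}$. The first key step is a \emph{Büchi reduction}: I claim that under the given almost-surely winning $\zstrat$, the set $T$ is visited infinitely often almost surely. Indeed, on the event $B$ we already have $R_2 \subseteq T$ infinitely often; on the event $C$, \cref{lem:as012-return-to-safe} (with $\tau=\tau_2<1$) gives $\probm(\eventually\always\neg\safe{\tau_2} \land \eventually\always\colorset{\states}{=}{0}) = 0$, so almost surely $\safe{\tau_2}\subseteq T$ is visited infinitely often. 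As $\probm_{\mdp,\state_0,\zstrat}(B\cup C)=1$, we conclude $\probm_{\mdp,\state_0,\zstrat}(\always\eventually T)=1$. The same argument applies from every state, so every state is almost-surely winning for the B\"uchi objective ``$T$ infinitely often''. By \cref{prop:as-Buchi} together with the uniformization of \cref{lem:MDP-as-uniform} (B\"uchi is prefix-independent), there is a single MD-strategy $\zstrat_B$ with $\probm_{\mdp,\state,\zstrat_B}(\always\eventually T)=1$ for all states~$\state$.

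Now I would define the MD-strategy $\zstrat'$ by pasting: play the optimal avoiding strategy $\optav$ of \cref{prop:optimal-avoiding} on $\safe{\tau_1}$, and play $\zstrat_B$ on $\states\setminus\safe{\tau_1}$. To argue $\probm_{\mdp,\state_0,\zstrat'}(\formula)=1$, I split plays by their tail behaviour relative to $\safe{\tau_1}$. If a play is eventually always in $\safe{\tau_1}\subseteq\colorset{\states}{=}{0}$, then it satisfies $C$, hence $\formula$. Otherwise, consider the successive entries into $\safe{\tau_2}$; at each such entry the play is in $\safe{\tau_2}$ and $\zstrat'$ continues with $\optav$ as long as it remains in $\safe{\tau_1}$, so \cref{lem:safe-two-levels} guarantees that it remains in $\safe{\tau_1}$ forever with probability at least $q := \frac{\tau_2-\tau_1}{1-\tau_1} > 0$. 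By the strong Markov property these are trials with success probability bounded below by~$q$, so the event ``$\safe{\tau_2}$ entered infinitely often yet never permanently retained'' has probability $0$ (as $\prod_n (1-q) = 0$). Hence almost surely either some sojourn in $\safe{\tau_1}$ becomes permanent (a win via $C$), or $\safe{\tau_2}$ is entered only finitely often; in the latter case $\zstrat_B$ still forces $T$ infinitely often while $\safe{\tau_2}$ occurs finitely often, so $R_2$ occurs infinitely often, i.e.\ the play satisfies $B$.

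The step I expect to be the main obstacle is making this last case analysis rigorous, namely controlling the \emph{interaction between the two modes}: the B\"uchi recurrence guaranteed by $\zstrat_B$ only holds while $\zstrat_B$ is actually followed, whereas $\zstrat'$ deviates to $\optav$ on all of $\safe{\tau_1}$. The careful point is to show that these deviations do not destroy recurrence of $R_2$ in the ``$\safe{\tau_2}$ finitely often'' case -- each excursion into $\safe{\tau_1}$ almost surely either becomes a permanent stay (already a win) or returns to the $\zstrat_B$-region, so that the B\"uchi progress is only suspended for finitely many steps at a time between the recurrent $R_2$-visits. I anticipate that the cleanest route is to make the strong-Markov/Borel--Cantelli bookkeeping over the entry times into $\safe{\tau_2}$ precise, and to note that finite branching is essential throughout, entering via \cref{prop:optimal-avoiding} and \cref{lem:safe-two-levels} (both of which fail under infinite branching, where ``the largest value'' need not exist).
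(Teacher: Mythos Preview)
Your plan matches the paper's almost exactly: reduce to the almost-sure case via \cref{thm:reduction-to-as}, restrict to the almost-sure winning region, use two safety thresholds $\tau_1<\tau_2$, paste $\optav$ on $\safe{\tau_1}$ with an MD strategy that keeps hitting $T=\colorset{\states}{=}{2}\cup\safe{\tau_2}$, and finish with the case split ``$\safe{\tau_2}$ infinitely often'' (win via \cref{lem:safe-two-levels}) versus ``$\safe{\tau_2}$ finitely often'' (win via repeated $T$-visits, hence B\"uchi on color~$2$). You also correctly flag the delicate point.

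The gap is precisely where you worry it is, and your proposed patch does not close it. You build $\zstrat_B$ as an MD B\"uchi strategy for~$T$ in the \emph{original} MDP~$\mdp$, and then overwrite it by $\optav$ on $\safe{\tau_1}$. But the guarantee $\probm_{\mdp,\state,\zstrat_B}(\always\eventually T)=1$ need not survive this overwriting: $\zstrat_B$ may route towards $\safe{\tau_2}$ \emph{through} $\safe{\tau_1}\setminus\safe{\tau_2}$, whereas $\optav$ inside $\safe{\tau_1}$ has no obligation to reach~$\safe{\tau_2}$ and can (via random states) dump the play back outside $\safe{\tau_1}$ at a point from which $\zstrat_B$ again tries to route through $\safe{\tau_1}$, and so on. Your phrase ``B\"uchi progress is only suspended for finitely many steps'' tacitly assumes a progress measure that an MD strategy simply does not carry. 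The Borel--Cantelli bookkeeping over entries into $\safe{\tau_2}$ handles the ``infinitely many entries'' case perfectly; it says nothing about the complementary case, which is exactly where you need $\always\eventually T$ under~$\zstrat'$.

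The paper's fix is to reverse the order: first fix $\optav$ on $\safe{\tau_1}$, obtaining a restricted MDP~$\mdp'$, and then establish that in~$\mdp'$ every state still reaches~$T$ almost surely (this is the nontrivial step, argued via an auxiliary strategy that alternates between the given a.s.\ winning $\zstrat$ and $\optav$; each switch to $\optav$ has probability $\ge\tau_1$ of never switching back, which bounds the number of switches and lets one show $\formula$, hence $\eventually T$, holds a.s.\ in~$\mdp'$). Only then do you MD-ify reachability to~$T$ \emph{in~$\mdp'$} via \cref{thm:reach-opt} and \cref{lem:MDP-as-uniform}. The resulting MD strategy already respects $\optav$ on $\safe{\tau_1}$, so the pasted $\zstrat'$ genuinely satisfies $\always\eventually T$, and your final case split goes through.
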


By appealing to Theorem~\ref{thm:reduction-to-as} it suffices to show:
\newcommand{\thmaszot}{
Let $\mdp=\mdptuple$ be a finitely branching MDP, and $\state_0 \in \states$, and $\zstrat$ a strategy, and $\coloring:\states\to \{0,1,2\}$, and $\formula=\Parity{\coloring}$.
Suppose $\probm_{\mdp,\state_0,\zstrat}(\formula) = 1$.
Then there is an MD-strategy~$\zstrat'$ with $\probm_{\mdp,\state_0,\zstrat'}(\formula) = 1$.
}
\begin{proposition}\label{prop:as012}
\thmaszot
\end{proposition}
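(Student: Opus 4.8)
The plan is to read $\formula$ as the disjunction of a B\"uchi condition and a ``stay in colour~$0$'' (co-B\"uchi) condition and to solve the second disjunct with the safe-set machinery, splicing in the optimal-avoiding strategy $\optav$.

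\emph{Setup and reduction to the winning region.}
For $\cset=\{0,1,2\}$ the objective unfolds to $\formula=\always\eventually\,\colorset{\states}{=2}{}\,\lor\,\eventually\always\,\colorset{\states}{=0}{}$, i.e.\ either colour~$2$ occurs infinitely often, or eventually only colour~$0$ is seen. Since $\formula$ is prefix-independent, I would first reduce to the almost-surely winning region $W=\{s\mid\valueof{\mdp}{s}=1\}$: the value of a random state is the $\probp$-average of its successors' values, so a value-$1$ random state has only value-$1$ successors, and a value-$1$ player state possessing an optimal strategy has a value-$1$ successor. Hence $\mdp$ restricted to $W$ (keeping the value-preserving successors) is a finitely branching sub-MDP that the given optimal strategy a.s.\ never leaves, and it suffices to build one MD-strategy that wins almost surely from every $s\in W$, extending it arbitrarily elsewhere.

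\emph{From winning to a B\"uchi property.}
Fix $\tau\in(0,1)$ and put $R_\tau=\colorset{\states}{=2}{}\cup\safe{\tau}$. I claim that from every $s\in W$ the player can almost surely visit $R_\tau$ infinitely often. Indeed, if a winning play eventually avoids $R_\tau$, then colour~$2$ occurs only finitely often, so the first disjunct fails and winning forces $\eventually\always\,\colorset{\states}{=0}{}$; together with $\eventually\always\neg\safe{\tau}$ this is precisely the event that Lemma~\ref{lem:as012-return-to-safe} shows has probability~$0$. Thus the given strategy witnesses $\probm_{\mdp,s,\zstrat}(\always\eventually R_\tau)=1$ for all $s\in W$, and by Proposition~\ref{prop:as-Buchi} (B\"uchi as repeated reachability) together with the uniformization of Lemma~\ref{lem:MDP-as-uniform} there is a \emph{single} MD-strategy $\zstrat_B$ with $\probm_{\mdp,s,\zstrat_B}(\always\eventually R_\tau)=1$ for all $s\in W$.

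\emph{The combined MD-strategy.}
Now define the MD-strategy $\zstrat'$ to play the optimal-avoiding strategy $\optav$ of Proposition~\ref{prop:optimal-avoiding} on every colour-$0$ state and to play $\zstrat_B$ on every state of colour $\ne 0$. Because $\zstrat'$ agrees with $\optav$ on all colour-$0$ states and the event $\always\,\colorset{\states}{=0}{}$ only ever visits such states, for every colour-$0$ state $s$ we have $\probm_{\mdp,s,\zstrat'}(\always\,\colorset{\states}{=0}{})=\probm_{\mdp,s,\optav}(\always\,\colorset{\states}{=0}{})=\valueof{\mdp}{s}$ (the safety value), which is $\ge\tau$ exactly when $s\in\safe{\tau}$ (Definition~\ref{def:safeset}). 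Hence each entry into $\safe{\tau}$ is a ``trial'' that, conditioned on the past, wins the disjunct $\eventually\always\,\colorset{\states}{=0}{}$ with probability at least the constant $\tau$; Lemma~\ref{lem:safe-two-levels} provides the companion estimate that these safe regions are sticky under $\optav$. It then suffices to show that, under $\zstrat'$, almost every play either sees colour~$2$ infinitely often (winning the first disjunct outright) or visits $\safe{\tau}$ infinitely often; in the latter case a conditional Borel--Cantelli / L\'evy $0$--$1$ argument over the trials shows the play a.s.\ eventually stays in colour~$0$, since $\prod_k(1-\tau)=0$.

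\emph{Main obstacle.}
The crux is the recurrence claim for the \emph{spliced} strategy: overriding $\zstrat_B$ by $\optav$ on all colour-$0$ states must not destroy the infinitely-often visits to $R_\tau$. I would argue this by noting that Lemma~\ref{lem:as012-return-to-safe} applies to every strategy, hence to $\zstrat'$ itself, ruling out $\eventually\always(\neg\safe{\tau}\wedge\colorset{\states}{=0}{})$; and that on colour-$0$ states $\optav$ is safety-value maximizing, so the override can only push the play toward higher safety (toward $\safe{\tau}$) or out to a colour-$\ne0$ state, where $\zstrat_B$ regains control and drives it back to $R_\tau$. Turning this monotonicity intuition into a rigorous proof that $\zstrat'$ is not trapped forever in $\colorset{\states}{=2}{}^{\,c}\cap\neg\safe{\tau}$ is the technical heart. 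Finite branching is essential throughout: it is what makes $\optav$ and the sets $\safe{\tau}$ well defined (Proposition~\ref{prop:optimal-avoiding}) and underlies the safe-set estimates.
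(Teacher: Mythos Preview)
Your overall decomposition---reduce to the winning region, show almost-sure recurrence of $R_\tau=\colorset{\states}{=2}{}\cup\safe{\tau}$, get a uniform MD B\"uchi strategy $\zstrat_B$, and splice in $\optav$---is close in spirit to the paper's proof, and your ``trial'' argument for entries into $\safe{\tau}$ is fine. The genuine gap is the splicing rule itself.

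You play $\optav$ on \emph{every} colour-$0$ state. But on a colour-$0$ state with safety value~$0$, all successors also have safety value~$0$, so $\optav$'s choice is arbitrary and has nothing to do with reaching $R_\tau$. Concretely, take player states $q$ (colour~$1$), $p$ (colour~$0$), $c$ (colour~$2$), with transitions $q\to p$, $p\to q$, $p\to c$, $c\to c$. Every state is almost-surely winning (go to $c$ and loop), and the safety value of each state is~$0$, so $\safe{\tau}=\emptyset$ and $R_\tau=\{c\}$. A valid $\zstrat_B$ sets $\zstrat_B(p)=c$; a valid $\optav$ sets $\optav(p)=q$ (a tie among value-$0$ successors). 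Your $\zstrat'$ then plays $q\to p\to q\to p\to\cdots$ forever, so $\probm_{\mdp,q,\zstrat'}(\formula)=0$. This is exactly the ``trapped in $\colorset{\states}{=2}{}^{\,c}\cap\neg\safe{\tau}$'' scenario you flagged as the technical heart, and the monotonicity intuition (``$\zstrat_B$ regains control and drives it back to $R_\tau$'') fails because $\zstrat_B$ is overridden again after one step. Lemma~\ref{lem:as012-return-to-safe} does not rescue you: it excludes $\eventually\always(\neg\safe{\tau}\wedge\colorset{\states}{=0}{})$, but your bad play sees colour~$1$ infinitely often.

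The paper's fix is precisely to restrict the override. It plays $\optav$ only on $\safe{\tfrac13}$ (not on all colour-$0$ states), then---and this is the key order of operations---forms $\mdp'$ by fixing those choices \emph{first}, proves that in $\mdp'$ every state can still almost surely reach $\safe{\tfrac23}\cup\colorset{\states}{=2}{}$, and only \emph{then} extracts the MD reachability strategy $\hat\zstrat$ in $\mdp'$. That way the combined strategy is, by construction, an MD strategy in $\mdp'$ achieving recurrence of the target; no compatibility of two independently derived MD strategies has to be argued after the fact. If you redo your argument with the override restricted to $\safe{\tau}$ and derive $\zstrat_B$ in the already-restricted MDP, your outline goes through; your single-threshold ``trial'' argument can then replace the paper's two-threshold use of Lemma~\ref{lem:safe-two-levels} if you prefer.
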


\ignore{
For the proof we are going to use the following two lemmas from Puterman(??) about MD-strategies for co-reachability and reachability objectives, respectively.

\begin{lemma}[from Puterman ??]\label{lem:optimal-avoiding}
Let $\mdp=\mdptuple$ be an MDP, and $\coloring:\states\to \nat$ a color function.
Define an MD-strategy~$\hat \zstrat$ that, in each state~$\state$, picks a successor state with the largest value $\valueof{\mdp}{\state}=\sup_{\zstrat\in\zstratset}\probm_{\mdp,\state,\zstrat}(\always \colorset \states = 0)$. 
Then for all states $\state \in \states$ we have $\probm_{\mdp,\state,\hat \zstrat}(\always \colorset \states = 0) = \valueof{\mdp}{\state}$.
\end{lemma}

\begin{lemma}[from Puterman ??]\label{lem:MDP-reach}
Let $\mdp=\mdptuple$ be a finitely branching  MDP, and $\reachset \subseteq \states$.
There exists an MD-strategy~$\hat \zstrat$ such that for all $\state \in \states$:
\[
\big(
\exists \zstrat \in \zstratset.\,
\probm_{\mdp,\state,\zstrat}(\eventually \reachset) = 1
\big)
&\Longrightarrow\\
&\probm_{\mdp,\state,\hat\zstrat}(\eventually \reachset) = 1
\]
\end{lemma}
}

The following simple lemma provides a scheme for proving almost-sure properties.
\newcommand{\lemaspartitionscheme}{
Let $\probm$ be a probability measure over the sample space~$\Omega$.
Let $(\playset_i)_{i \in I}$ be a countable partition of~$\Omega$ in measurable events.
Let $E \subseteq \Omega$ be a measurable event.
Suppose $\probm(\playset_i \cap E) = \probm(\playset_i)$ holds for all $i \in I$.
Then $\probm(E) = 1$.
}

\begin{lemma} \label{lem-as-partition-scheme}
\lemaspartitionscheme
\end{lemma}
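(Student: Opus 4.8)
The plan is to reduce the statement to countable additivity of~$\probm$. The first observation I would make is that, for each $i \in I$, the hypothesis $\probm(\playset_i \cap E) = \probm(\playset_i)$ is equivalent to $\probm(\playset_i \setminus E) = 0$: indeed $\playset_i$ is the disjoint union of the measurable sets $\playset_i \cap E$ and $\playset_i \setminus E$, and since $\probm$ is finite we may subtract to obtain $\probm(\playset_i \setminus E) = \probm(\playset_i) - \probm(\playset_i \cap E) = 0$.

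Next, since $(\playset_i)_{i \in I}$ is a partition of the sample space~$\Omega$, the complement of~$E$ decomposes as the disjoint union $\Omega \setminus E = \bigcup_{i \in I} (\playset_i \setminus E)$, all of whose members are measurable and, by the previous step, null. As $I$ is countable, countable subadditivity gives $\probm(\Omega \setminus E) \le \sum_{i \in I} \probm(\playset_i \setminus E) = 0$, hence $\probm(E) = 1$. Equivalently, I could compute $\probm(E)$ directly as $\probm(E) = \sum_{i \in I} \probm(\playset_i \cap E) = \sum_{i \in I} \probm(\playset_i) = \probm(\Omega) = 1$, using countable additivity over the disjoint family $(\playset_i \cap E)_{i \in I}$ and the fact that the $\playset_i$ exhaust~$\Omega$.

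There is no real obstacle here; the statement is exactly countable additivity packaged for the partition scheme. The hypotheses that must genuinely be used are the countability of the index set~$I$ (so that additivity over $(\playset_i)_{i \in I}$ is available) and the measurability of~$E$ and of each $\playset_i$ (so that every set appearing is measurable). The only subtle point worth flagging is that finiteness of~$\probm$ is what licenses the subtraction in the first step; for a general measure one would instead run the direct additivity computation, which avoids subtraction altogether.
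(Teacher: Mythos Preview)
Your proposal is correct, and the direct additivity computation you offer as the alternative is exactly the paper's proof: $\probm(E) = \sum_{i \in I} \probm(\playset_i \cap E) = \sum_{i \in I} \probm(\playset_i) = \probm(\Omega) = 1$. Your complement route via $\probm(\playset_i \setminus E) = 0$ is an equally valid minor variant.
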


We are ready to prove Proposition~\ref{prop:as012}.

\begin{figure*}[t]
      \begin{center}
      			
      \centering
\begin{tikzpicture}[>=latex',shorten >=1pt,node distance=1.9cm,on grid,auto,
roundnode/.style={circle, draw,minimum size=1.5mm},
squarenode/.style={rectangle, draw,minimum size=2mm}]

\draw [blue](0,1.65) ellipse (2.9cm and 2cm);
\node [squarenode,draw=none,blue] (s) at (-2.2,1.8) {$\safe{\frac13}$};
\draw (-.5,2.3) ellipse (1cm and 1.1cm);
\node [squarenode,draw=none] (s) at (-.5,2.9) {$\safe{\frac23}$};
\draw [double] (1.6,2.2) ellipse (1.7cm and .7cm);
\node [squarenode,draw=none] (s) at (1.4,2.6) {$\colorset \states = 2$};

\node [squarenode,magenta] (s) at (0.15,.7) {$\optav$: avoid $\colorset \states = 1$};
\node [squarenode,magenta] (s) at (-4.5,4) {$\hat\zstrat$: almost-sure $\reach{\safe{\frac23} \cup \colorset \states = 2}$};
\node [squarenode] (s) at (-3.3,.4) {$ s $};
\draw [-,decorate,decoration=snake] (s) -- (-1.6,1.4);
\draw [->,decorate,decoration=snake] (-1.6,1.4)--(-1,2.5);
\draw [->,decorate,decoration=snake] (-1.6,1.4)--(1.3,2.2);

\end{tikzpicture} 
      
				\end{center}
		\caption{The almost-surely winning MD-strategy~$\zstrat'$ for $\cParity{\{0,1,2\}}$ is obtained by combining the MD-strategies~$\optav$
		and $\hat\zstrat$: play~$\optav$ inside~$\safe{\frac13}$ and $\hat\zstrat$ outside that set.
A key point is that fixing~$\optav$ inside~$\safe{\frac13}$ does not prevent~$\hat\zstrat$ from achieving its objective.
}
\label{fig:winningstrategy012parity}
\end{figure*}

\begin{proof}[Proof of Proposition~\ref{prop:as012}]
To achieve an almost-sure winning objective, the player must forever remain in states from which the objective can be achieved almost surely.
So we can assume without loss of generality that all states are almost-sure winning, i.e., for all $\state \in \states$ we have $\probm_{\mdp,\state,\zstrat}(\formula) = 1$ for some~$\zstrat$.

We will define an MD-strategy~$\zstrat'$ with $\probm_{\mdp,\state,\zstrat'}(\formula) = 1$ for all $\state \in \states$.
We first define the MD-strategy~$\zstrat'$ partially for the states in $\safesub{\mdp}{\frac13}$ and then extend the definition of~$\zstrat'$ to all states.
For the states in $\safesub{\mdp}{\frac13}$ define $\zstrat' := \optav$ as in Proposition~\ref{prop:optimal-avoiding}, see Figure~\ref{fig:winningstrategy012parity}.
Let $\mdp'$ be the MDP obtained from~$\mdp$ by restricting the transition relation as prescribed by the partial MD-strategy~$\zstrat'$.

For any $\tau \in [0,1]$, we have $\safesub{\mdp}{\tau} = \safesub{\mdp'}{\tau}$.
Indeed, since $\mdp'$ restricts the options of the player, we have $\safesub{\mdp}{\tau} \supseteq \safesub{\mdp'}{\tau}$.
Conversely, let $\state \in \safesub{\mdp}{\tau}$.
The strategy $\optav$ from Proposition~\ref{prop:optimal-avoiding} achieves $\probm_{\mdp,\state,\optav}(\always \colorset \states = 0) \ge \tau$.
Since $\optav$ can be applied in~$\mdp'$, and results in the same Markov chain
as applying it in $\mdp$, we conclude $\state \in \safesub{\mdp'}{\tau}$.
This justifies to write $\safe{\tau}$ for $\safesub{\mdp}{\tau} = \safesub{\mdp'}{\tau}$ in the remainder of the proof.

Next we show that, also in~$\mdp'$, for all states $\state \in \states$ there exists a strategy~$\zstrat_1$ with $\probm_{\mdp',\state,\zstrat_1}(\formula) = 1$.
This strategy~$\zstrat_1$ is defined as follows.
First play according to a strategy~$\zstrat$ from the statement of the theorem.
If and when the play visits $\safe{\frac13}$, switch to the MD-strategy~$\optav$ from Proposition~\ref{prop:optimal-avoiding}.
If and when the play then visits $\colorset \states \ne 0$, switch back to a strategy~$\zstrat$ from the statement of the theorem, and so forth.

We show that $\zstrat_1$ achieves $\probm_{\mdp',\state,\zstrat_1}(\formula) = 1$.
To this end we will use Lemma~\ref{lem-as-partition-scheme}.
We partition the runs of $s S^\omega$ in three events $\playset_0, \playset_1, \playset_2$ as follows:
\begin{itemize}
\item
$\playset_0$ contains the runs where $\zstrat_1$ switches between $\optav$ and~$\zstrat$ infinitely often.
\item
$\playset_1$ contains the runs where $\zstrat_1$ eventually only plays according to~$\optav$.
\item
$\playset_2$ contains the runs where $\zstrat_1$ eventually only plays according to~$\zstrat$.
\end{itemize}
Each time $\zstrat_1$ switches to~$\optav$, there is, by Proposition~\ref{prop:optimal-avoiding}, 
a probability of at least~$\frac13$ of never visiting a color-$\{1,2\}$ state again and thus of never again switching to~$\zstrat$.
It follows that $\probm_{\mdp',\state,\zstrat_1}(\playset_0) = 0$.
By the definition of $\optav$ we have $\playset_1 \subseteq \denotationof{\eventually \always \colorset \states = 0}{} \subseteq \denotationof{\formula}{}$, and hence $\probm_{\mdp',\state,\zstrat_1}(\playset_1 \cap \denotationof{\formula}{}) = \probm_{\mdp',\state,\zstrat_1}(\playset_1)$.
Since $\probm_{\mdp,\state,\zstrat}(\formula) = 1$ and $\formula$ is prefix-independent, we have
$\probm_{\mdp',\state,\zstrat_1}(\playset_2 \cap \denotationof{\formula}{}) = \probm_{\mdp',\state,\zstrat_1}(\playset_2)$.
Using Lemma~\ref{lem-as-partition-scheme}, we obtain $\probm_{\mdp',\state,\zstrat_1}(\formula) = 1$.

Next we show that for all $s \in \states$ the strategy~$\zstrat_1$ defined above achieves $\probm_{\mdp',\state,\zstrat_1}(\eventually \safe{\frac23} \lor \eventually \colorset \states = 2) = 1$.
To this end we will use Lemma~\ref{lem-as-partition-scheme} again.
We partition the runs of $s S^\omega$ into three events $\playset_1', \playset_2', \playset_0'$ as follows:
\begin{itemize}
\setlength\itemsep{.8em}
\item
$\playset_1' = \denotationof{\eventually \always \colorset \states = 0}{s}$
\item
$\playset_2' = \denotationof{\always \eventually \colorset \states = 2}{s}$
\item
$\playset_0' = s S^\omega \setminus \denotationof{\formula}{s}$
\end{itemize}
We have previously shown that $\probm_{\mdp',\state,\zstrat_1}(\formula) = 1$, hence $\probm_{\mdp',\state,\zstrat_1}(\playset_0') = 0$.
By Lemma~\ref{lem:as012-return-to-safe}, almost all runs in~$\playset_1'$ satisfy $\always \eventually \safe{\frac23}$.
Since $\denotationof{\always \eventually \safe{\frac23}}{} \subseteq \denotationof{\eventually \safe{\frac23}}{}$, we have $\probm_{\mdp',\state,\zstrat_1}(\playset_1' \cap \denotationof{\eventually \safe{\frac23} \lor \eventually \colorset \states = 2}{}) = \probm_{\mdp',\state,\zstrat_1}(\playset_1')$.
Since $\playset_2' \subseteq \denotationof{\eventually \colorset \states = 2}{}$, we also have $\probm_{\mdp',\state,\zstrat_1}(\playset_2' \cap \denotationof{\eventually \safe{\frac23} \lor \eventually \colorset \states = 2}{}) = \probm_{\mdp',\state,\zstrat_1}(\playset_2')$.
Using Lemma~\ref{lem-as-partition-scheme} we obtain $\probm_{\mdp',\state,\zstrat_1}(\eventually \safe{\frac23} \lor \eventually \colorset \states = 2) = 1$.

Writing $\reachset = \safe{\frac23} \cup \colorset \states = 2$ we have just shown that for all $s \in \states$ there is a strategy~$\zstrat_1$ with $\probm_{\mdp',\state,\zstrat_1}(\eventually \reachset) = 1$.
By Lemma~\ref{lem:MDP-as-uniform} there is an MD-strategy~$\hat\zstrat$ for~$\mdp'$ with $\probm_{\mdp',\state,\hat\zstrat}(\eventually \reachset) = 1$ for all $s \in \states$.
We extend the (so far partially defined) strategy~$\zstrat'$ by~$\hat\zstrat$.
Thus we obtain a (fully defined) strategy~$\zstrat'$ for~$\mdp$ such that for all $s \in \states$ we have $\probm_{\mdp,\state,\zstrat'}(\eventually \reachset) = 1$.

It remains to show that for all $s \in \states$ we have $\probm_{\mdp,\state,\zstrat'}(\formula) = 1$.
To this end we will use Lemma~\ref{lem-as-partition-scheme} again.
We partition the runs of $s S^\omega$ in two events $\playset_1'', \playset_2''$:
\begin{itemize}
\setlength\itemsep{.6em}
\item
$\playset_1'' = \denotationof{\always\eventually \safe{\frac23}}{s}$, i.e., $\playset_1''$ contains the runs that visit $\safe{\frac23}$ infinitely often.
\item
$\playset_2'' = \denotationof{\eventually\always \neg\safe{\frac23}}{s}$, i.e., $\playset_2''$ contains the runs that from some point on never visit $\safe{\frac23}$.
\end{itemize}
Every time a run enters $\safe{\frac23}$, by Lemma~\ref{lem:safe-two-levels}, the probability is at least $\frac12$ that the run remains in $\safe{\frac13}$ forever.
It follows that almost all runs in~$\playset_1''$ eventually remain in $\safe{\frac13}$ forever, i.e., $\probm_{\mdp,\state,\zstrat'}(\playset_1'' \cap \denotationof{\eventually \always \safe{\frac13}}{}) = \probm_{\mdp,\state,\zstrat'}(\playset_1'')$.
Since $\safe{\frac13} \subseteq \colorset \states = 0$, we have $\denotationof{\eventually \always \safe{\frac13}}{} \subseteq \denotationof{\eventually \always \colorset \states = 0}{} \subseteq \denotationof{\formula}{}$.
Hence also $\probm_{\mdp,\state,\zstrat'}(\playset_1'' \cap \denotationof{\formula}{}) = \probm_{\mdp,\state,\zstrat'}(\playset_1'')$.

We have previously shown that $\probm_{\mdp,\state,\zstrat'}(\eventually \reachset) = 1$ holds for all $\state \in \states$.
Hence also $\probm_{\mdp,\state,\zstrat'}(\always \eventually \reachset) = 1$ holds for all $\state \in \states$.
In particular, almost all runs in~$\playset_2''$ satisfy $\always \eventually \reachset$.
By comparing the definitions of $\playset_2''$ and $\reachset$ we see that almost all runs in~$\playset_2''$ even satisfy $\always \eventually \colorset \states = 2$.
Since $\denotationof{\always \eventually \colorset \states = 2}{} \subseteq \denotationof{\formula}{}$, we obtain $\probm_{\mdp,\state,\zstrat'}(\playset_2'' \cap \denotationof{\formula}{}) = \probm_{\mdp,\state,\zstrat'}(\playset_2'')$.

A final application of Lemma~\ref{lem-as-partition-scheme} yields
$\probm_{\mdp,\state,\zstrat'}(\formula) = 1$ for all $s \in \states$.
\end{proof}

\subsection{$\epsilon$-Optimal MD-strategies for Co-B\"uchi}\label{subsec-cobuchi}

\newcommand{\thmcoBuchi}{
Let $\mdp=\mdptuple$ be a finitely branching MDP, 
$\coloring:\states\to \{0,1\}$, and $\formula=\Parity{\coloring}$ be the
co-B\"uchi objective.  Then there exist uniform $\epsilon$-optimal
MD-strategies.
I.e., for every $\epsilon >0$ there is an MD-strategy $\zstrat_\epsilon$ 
with $\probm_{\mdp,\state_0,\zstrat_\epsilon}(\formula) \ge
\valueof{\mdp}{\state_0} - \epsilon$
for every $\state_0 \in \states$.
}

\begin{theorem}\label{thm:coBuchi}
\thmcoBuchi
\end{theorem}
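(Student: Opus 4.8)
The plan is to reduce the co-B\"uchi problem to a reachability problem for the safe sets $\safe\tau$ of Definition~\ref{def:safeset}, glue a uniform $\epsilon'$-optimal MD reachability strategy together with the optimal-avoiding strategy $\optav$ of Proposition~\ref{prop:optimal-avoiding}, and use the two-levels estimate of Lemma~\ref{lem:safe-two-levels} to boost the success probability. Throughout, $\formula = \eventually\always\colorset{\states}{=}{0}$ denotes the co-B\"uchi objective.

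First I would establish a value characterization in terms of reaching safe sets. Write $R_\tau(\state) := \sup_{\zstrat}\probm_{\mdp,\state,\zstrat}(\eventually\safe\tau)$ for the value of reaching $\safe\tau$. I claim $R_\tau(\state)\ge\valueof{\mdp}{\state}$ for \emph{every} $\tau<1$: fix any strategy $\zstrat$ achieving $\formula$ with probability $p$; by Lemma~\ref{lem:as012-return-to-safe} the event $\eventually\always\neg\safe\tau\land\eventually\always\colorset{\states}{=}{0}$ is null, so almost every play satisfying $\formula$ visits $\safe\tau$ (in fact infinitely often), whence $\probm_{\mdp,\state,\zstrat}(\eventually\safe\tau)\ge p$ and, taking the supremum, $R_\tau(\state)\ge\valueof{\mdp}{\state}$. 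The reverse estimate (reach $\safe\tau$, then play $\optav$, which by Definition~\ref{def:safeset} keeps the play in color-$0$ states forever, hence wins $\formula$, with probability $\ge\tau$) gives $\valueof{\mdp}{\state}\ge\tau\,R_\tau(\state)$, so in fact $\valueof{\mdp}{\state}=\lim_{\tau\to1}R_\tau(\state)$; only the inequality $R_\tau(\state)\ge\valueof{\mdp}{\state}$ is needed below.

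Next, given $\epsilon>0$, I would fix $\tau_1<\tau_2<1$ with $\tau_2$ so close to $1$ that $q:=\tfrac{\tau_2-\tau_1}{1-\tau_1}$ and a small $\epsilon'$ satisfy $q\,(\valueof{\mdp}{\state}-\epsilon')\ge\valueof{\mdp}{\state}-\epsilon$ (possible since, for fixed $\tau_1$, we have $q\to1$ as $\tau_2\to1$). Define the uniform MD-strategy $\zstrat_\epsilon$ by playing $\optav$ on $\safe{\tau_1}$ and a uniform $\epsilon'$-optimal MD reachability strategy for target $\safe{\tau_2}$ (Theorem~\ref{thm:reach-eps}) on $\states\setminus\safe{\tau_1}$. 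Since $\safe{\tau_1}\subseteq\colorset{\states}{=}{0}$ and $\zstrat_\epsilon$ plays $\optav$ on \emph{all} of $\safe{\tau_1}$, the event $\always\safe{\tau_1}$ is realised faithfully: on this event the play never leaves $\safe{\tau_1}$, so $\zstrat_\epsilon$ agrees with $\optav$ forever and wins $\formula$. By the strong Markov property at the first hitting time of $\safe{\tau_2}$, combined with Lemma~\ref{lem:safe-two-levels} (any entry state of $\safe{\tau_2}$ keeps $\optav$ inside $\safe{\tau_1}$ forever with probability $\ge q$), I obtain
\[
\probm_{\mdp,\state,\zstrat_\epsilon}(\formula)\ \ge\ q\cdot\probm_{\mdp,\state,\zstrat_\epsilon}(\eventually\safe{\tau_2})\,.
\]

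The main obstacle is the remaining bound $\probm_{\mdp,\state,\zstrat_\epsilon}(\eventually\safe{\tau_2})\ge\valueof{\mdp}{\state}-\epsilon'$, that is, showing that committing to $\optav$ on the containment region $\safe{\tau_1}$ does not spoil reaching the deeper region $\safe{\tau_2}$. There is a genuine tension here: faithful containment forces $\optav$ on \emph{all} of $\safe{\tau_1}$, whereas reaching $\safe{\tau_2}$ would like the freedom to navigate $\safe{\tau_1}\setminus\safe{\tau_2}$ with the reachability strategy. I would resolve this with a renewal/excursion argument over successive entries to $\safe{\tau_2}$: each entry independently yields permanent containment in $\safe{\tau_1}$ with probability $\ge q$, so by Borel--Cantelli infinitely many non-contained entries form a null event. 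To see that the reach value is essentially preserved, I would exploit that on $\safe{\tau_1}$ the safety value $\valueof{\mdp}{X_n}$ is a bounded martingale under $\optav$, hence by L\'evy's $0$--$1$ law converges to $1$ on the survival event; thus surviving $\optav$-plays are channelled into $\safe{\tau_2}$, while the uniform $\epsilon'$-optimal reachability strategy supplies the probability of reaching $\safe{\tau_1}$ in the first place (recall $R_{\tau_1}(\state)\ge\valueof{\mdp}{\state}$). Combining this with the displayed inequality and the choice of $q$ then yields $\probm_{\mdp,\state,\zstrat_\epsilon}(\formula)\ge\valueof{\mdp}{\state}-\epsilon$ for every $\state$, so $\zstrat_\epsilon$ is the desired uniform $\epsilon$-optimal MD-strategy.
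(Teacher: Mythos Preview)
Your overall architecture matches the paper's: two nested safe levels $\safe{\tau_1}\supseteq\safe{\tau_2}$, glue the optimal-avoiding strategy $\optav$ inside $\safe{\tau_1}$ with a uniform $\epsilon'$-optimal MD reachability strategy for the target $\safe{\tau_2}$, and invoke Lemma~\ref{lem:safe-two-levels}. The displayed inequality $\probm_{\mdp,\state,\zstrat_\epsilon}(\formula)\ge q\cdot\probm_{\mdp,\state,\zstrat_\epsilon}(\eventually\safe{\tau_2})$ is correct.

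The gap is exactly where you locate it, and your proposed fix does not close it. You obtain the reachability strategy $\zstrat_R$ from Theorem~\ref{thm:reach-eps} \emph{in $\mdp$}; its $\epsilon'$-optimality guarantee is for $\zstrat_R$ played everywhere, not for the glued strategy that overrides $\zstrat_R$ by $\optav$ on $\safe{\tau_1}$. Your L\'evy argument shows that under \emph{pure} $\optav$ (played forever) the safety value $v(X_n)$ is a martingale converging to~$1$ on survival, hence surviving plays eventually enter $\safe{\tau_2}$. But the combined strategy does not play $\optav$ forever: random states in $\safe{\tau_1}$ can send the play out of $\safe{\tau_1}$ (to a color-$0$ state with value below~$\tau_1$), at which point you switch to $\zstrat_R$, and the martingale identity $v(X_n)=\probm(\always\colorset{\states}{=}{0}\mid\mathcal{F}_n)$ breaks. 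Nothing then controls the probability that such an exit state still has high reach value $R_{\tau_2}$; your renewal sketch over entries to $\safe{\tau_2}$ presupposes the very bound you are trying to prove.

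The paper resolves this by reversing the order of the two steps. It first builds $\mdp'$ from $\mdp$ by \emph{fixing} $\optav$ on $\safe{\tau_1}$, and proves directly that $\valueof{\mdp'}{\state}\ge\valueof{\mdp}{\state}-(1-\tau_1)$: any near-optimal strategy in $\mdp$ can be modified to switch to $\optav$ upon first entering $\safe{\tau_1}$, losing at most $1-\tau_1$ since $\optav$ wins $\formula$ with probability $\ge\tau_1$ from there. Only then does it invoke Lemma~\ref{lem:as012-return-to-safe} and Theorem~\ref{thm:reach-eps} \emph{in $\mdp'$}. The resulting MD reachability strategy in $\mdp'$ is by construction compatible with $\optav$ on $\safe{\tau_1}$ (those transitions are already fixed), so its guarantee transfers verbatim to the glued strategy in $\mdp$. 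This eliminates the tension you identified without any martingale or renewal machinery: the cost of committing to $\optav$ on $\safe{\tau_1}$ is paid once, up front, as a value loss of at most $1-\tau_1$.
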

\begin{proof}
Let $\epsilon_1 >0$ be a suitably small number (to be determined later),
$\tau_1 := 1-\epsilon_1$ and
$\safesub{\mdp}{\tau_1}$ defined as in Definition~\ref{def:safeset}.
Let $\optav$ be the MD-strategy
from Proposition~\ref{prop:optimal-avoiding}.
From $\mdp$ we obtain a modified MDP $\mdp'$ by fixing all player choices from states in
$\safesub{\mdp}{\tau_1}$ according to $\optav$.

We show that $\valueof{\mdp'}{\state_0} \ge \valueof{\mdp}{\state_0} - \epsilon_1$.
By definition of the value $\valueof{\mdp}{\state_0}$, 
for every $\delta>0$ there exists a strategy $\zstrat_\delta$ in $\mdp$
s.t.\ $\probm_{\mdp,\state_0,\zstrat_\delta}(\formula) \ge \valueof{\mdp}{\state_0} -\delta$.
We define a strategy $\zstrat_\delta'$ in $\mdp'$ from state $\state_0$ as
follows.
First play like $\zstrat_\delta$. If and when a state in $\safesub{\mdp}{\tau_1}$ is
reached, play like $\optav$. This is possible, since no moves
from states outside $\safesub{\mdp}{\tau_1}$ have been fixed in $\mdp'$,
and all moves from states inside $\safesub{\mdp}{\tau_1}$ have been fixed 
according to $\optav$.
Then we have:
\[
\begin{aligned}
&\probm_{\mdp',\state_0,\zstrat_\delta'}(\formula) \\[1mm]
&=   \probm_{\mdp,\state_0,\zstrat_\delta}(\formula) \\
& 
\quad \;  - \, \probm_{\mdp,\state_0,\zstrat_\delta}(\eventually \safesub{\mdp}{\tau_1}) \cdot 
\probm_{\mdp,\state_0,\zstrat_\delta}(\formula | \eventually
\safesub{\mdp}{\tau_1}) \\                                           
&  \quad \; + \,\probm_{\mdp,\state_0,\zstrat_\delta}(\eventually
\safesub{\mdp}{\tau_1}) \cdot  \probm_{\mdp,\state_0,\zstrat_\delta'}(\formula | \eventually
\safesub{\mdp}{\tau_1}) \\[1mm]
& \ge   \probm_{\mdp,\state_0,\zstrat_\delta}(\formula) \\
&      
\quad \;- \,  \probm_{\mdp,\state_0,\zstrat_\delta}(\eventually \safesub{\mdp}{\tau_1}) \cdot 
\probm_{\mdp,\state_0,\zstrat_\delta}(\formula | \eventually
\safesub{\mdp}{\tau_1})  \\                                         
& \quad \; + \, \probm_{\mdp,\state_0,\zstrat_\delta}(\eventually
\safesub{\mdp}{\tau_1}) \cdot \tau_1 \\[1mm]
& \ge  \valueof{\mdp}{\state_0} - \delta - \probm_{\mdp,\state_0,\zstrat_\delta}(\eventually
\safesub{\mdp}{\tau_1})(1-\tau_1)\\[1mm]
& \ge  \valueof{\mdp}{\state_0} - \delta - \epsilon_1
\end{aligned}
\]
Since this holds for every $\delta >0$ we obtain 
$\valueof{\mdp'}{\state_0} \ge \valueof{\mdp}{\state_0} - \epsilon_1$.

Now let $\tau_2 := 1 - \epsilon_1/k$ for a suitably large $k \ge 1$ (to be
determined later) and
$\safesub{\mdp'}{\tau_2}$
be defined as in Definition~\ref{def:safeset}.
In particular, $\safesub{\mdp'}{\tau_2} = \safesub{\mdp}{\tau_2}$
(by the same argument as in the proof of Proposition~\ref{prop:as012}).

By definition of the value, for every $\epsilon_2 >0$ there exists a 
strategy $\zstrat_{\epsilon_2}$ in $\mdp'$ with
$\probm_{\mdp',\state_0,\zstrat_{\epsilon_2}}(\formula) \ge
\valueof{\mdp'}{\state_0} -\epsilon_2$.
Moreover, by Lemma~\ref{lem:as012-return-to-safe} and $\tau_2 <1$, 
$\probm_{\mdp',\state_0,\zstrat}(\eventually \safesub{\mdp'}{\tau_2}) \ge
\probm_{\mdp',\state_0,\zstrat}(\formula)$ for every strategy $\zstrat$ and
thus in particular for $\zstrat_{\epsilon_2}$.
Therefore, $\probm_{\mdp',\state_0,\zstrat_{\epsilon_2}}(\eventually \safesub{\mdp'}{\tau_2}) \ge
\valueof{\mdp'}{\state_0} -\epsilon_2$.
By Theorem~\ref{thm:reach-eps}, for every $\epsilon_3 >0$
there exists an MD-strategy $\zstrat'$
in $\mdp'$ with 
$\probm_{\mdp',\state_0,\zstrat'}(\eventually \safesub{\mdp'}{\tau_2}) \ge
\valueof{\mdp'}{\state_0} - \epsilon_2 - \epsilon_3$.
In particular, $\zstrat'$ must coincide with $\optav$
at all states in $\safesub{\mdp}{\tau_1}$, since in $\mdp'$ these choices are
already fixed.

We obtain the MD-strategy $\zstrat_\epsilon$ in $\mdp$ by 
combining the MD-strategies $\zstrat'$ and $\optav$.
The strategy $\zstrat_\epsilon$ plays like  
$\zstrat'$ at all states outside $\safesub{\mdp}{\tau_1}$ and like 
$\optav$ at all states inside $\safesub{\mdp}{\tau_1}$.

In order to show that $\zstrat_\epsilon$ has the required property
$\probm_{\mdp,\state_0,\zstrat_\epsilon}(\formula) \ge \valueof{\mdp}{\state_0} - \epsilon$, we 
first estimate the probability that a play according to $\zstrat_\epsilon$ will never
leave the set $\safesub{\mdp}{\tau_1}$ after having visited a state in
$\safesub{\mdp'}{\tau_2}$.

Let $\state \in \safesub{\mdp'}{\tau_2}$. 
Then, by Lemma~\ref{lem:safe-two-levels},
\[\begin{aligned}
\probm_{\mdp,\state,\optav}(\always \safe{\tau_1}) &\quad\ge \quad \frac{\tau_2 -
  \tau_1}{1 - \tau_1}\\[1mm] 
	& \quad= \quad \frac{(1-\epsilon_1/k) - (1-\epsilon_1)}{\epsilon_1}\\ 
	& \quad= \quad  1-\frac{1}{k}\,.
\end{aligned}\] 
In particular we also have 
$\probm_{\mdp,\state,\zstrat_\epsilon}(\always \safe{\tau_1}) \ge 
1-\frac{1}{k}$,
since $\zstrat_\epsilon$ coincides with $\optav$ inside the set
$\safesub{\mdp}{\tau_1}$.
Finally we obtain:
\begin{eqnarray*}
\probm_{\mdp,\state_0,\zstrat_\epsilon}(\formula)
& \ge & 
\probm_{\mdp,\state_0,\zstrat_\epsilon}(\eventually \safesub{\mdp'}{\tau_2})\\
&\quad&
{}\cdot
\probm_{\mdp,\state_0,\zstrat_\epsilon}(\eventually\always \safesub{\mdp}{\tau_1} |
\eventually \safesub{\mdp'}{\tau_2}) \\[1mm]
& \ge &
\probm_{\mdp',\state_0,\zstrat'}(\eventually \safesub{\mdp'}{\tau_2}) 
\cdot (1-1/k)
\\[1mm]
& \ge &
(\valueof{\mdp'}{\state_0} - \epsilon_2 - \epsilon_3)
\cdot
(1-1/k)\\[1mm]
& \ge & (\valueof{\mdp}{\state_0} - \epsilon_1 - \epsilon_2 - \epsilon_3) \cdot (1-1/k)
\end{eqnarray*}
This holds for every $\epsilon_1, \epsilon_2, \epsilon_3 >0$ and 
every $k \ge 1$, and moreover $\valueof{\mdp}{\state_0} \le 1$.
Thus we can set $\epsilon_1 = \epsilon_2 = \epsilon_3 := \epsilon/6$ and 
$k := \frac{2}{\epsilon}$ 
and obtain $\probm_{\mdp,\state_0,\zstrat_\epsilon}(\formula) 
\ge \valueof{\mdp}{\state_0} -\epsilon$ for every $\state_0 \in \states$ as required.
\end{proof}

%
\section{Discussion}\label{sec-disc}
Our results on the memory requirements of ($\epsilon$)-optimal strategies 
(Figure~\ref{fig:results}) directly imply how much memory is needed to win
quantitative objectives of type $\zwinset{\formula}{\constraint\const}$
(considered, e.g., in \cite{BBKO:IC2011}).
For $\const <1$ the assumed winning strategy might have to be an $\epsilon$-optimal
one, since optimal strategies do not always exist. Thus MD-strategies are only
sufficient for reachability objectives in countable MDPs (resp., for $\cParity{\{0,1\}}$, safety
and reachability objectives in finitely branching MDPs).
In the special case of $\zwinset{\formula}{\ge 1}$ objectives (i.e., winning
almost-surely), the winning strategy (assuming it exists) must be optimal.
Thus MD-strategies are only sufficient for safety, reachability and
$\cParity{\{1,2\}}$ in countable MDPs 
(resp., for all objectives subsumed by $\cParity{\{0,1,2\}}$ in 
finitely branching MDPs).

In this paper we have studied countable MDPs. Not all our results carry over
to uncountable MDPs. The first issue is measurability. The probabilities are
only well-defined if the strategies are measurable functions, which might not 
exist without further conditions on the MDP; cf. Section 2.3
in \cite{Puterman:book}. Another issue is that strategies cannot 
generally be chosen \emph{uniform}, i.e., independent of the initial state.
E.g., in countable MDPs $\epsilon$-optimal strategies for reachability 
can be chosen uniform MD (Theorem~\ref{thm:reach-eps}), but this 
does not carry over to uncountable MDPs (Thm.~A
in \cite{Ornstein:AMS1969}). However, optimal strategies for reachability,
if they exist, can be chosen uniform MD (Proposition~B
in \cite{Ornstein:AMS1969}).

{\smallskip\bf\noindent Acknowledgements.}
This work was partially supported by the EPSRC
through grants EP/M027287/1, EP/M027651/1, EP/P020909/1 and EP/M003795/1
and by St. John's College, Oxford.

\bibliographystyle{plain}
\bibliography{base}
\newpage
\onecolumn
\appendix
\subsection{Proofs of Section~\ref{sec-parity}}
 
We  recall two results that are used throughout the proofs in this section:

\paragraph{Strong fairness of probabilistic choices in Markov chains.}
Given a Markov chain, let $\probm_q(\cdot)$ denotes 
the probability of events starting in  state~$q$ of the chain.
Let~$p,q$ be two states and~$\pi$ a finite path
starting in~$p$ with strictly positive probability.
Strong fairness  of probabilistic choices states that~$\probm_q(\always \eventually \pi)=\probm_q(\always \eventually p)$.
Intuitively, it means that under the condition that state~$p$ is visited infinitely often, 
any finite path starting in~$p$ will be taken infinitely often, almost surely~{\cite[Theorem~10.25]{BaierKatoenBook}}. 

\paragraph{The Borel-Cantelli lemma.} 
Suppose that $(E_n)_{n\in \nat}$ is a sequence of events in a probability space. 
Denote by $E^{\infty}$ the event 
$$\bigcap_{k=1}^{\infty} \bigcup_{n=k}^{\infty} E_n,$$
 that
intuitively is the event  ``$E_n$ occurs for infinitely many~$n$''. 
The Borel-Cantelli lemma asserts that if 
$\sum_{n=1}^{\infty} \probm(E_n)< \infty$ then $\probm(E^{\infty})=0$.
Informally speaking, if the sum of probabilities of the events~$E_n$ is bounded then
the probability that infinitely many of them occur is zero~\cite{billingsley-1995-probability}. 

\medskip
\begin{qtheorem}{\ref{thm:parity123}}
\thmparityonetwothree
\end{qtheorem}
\begin{proof}

Consider the MDP~$\mdp$ shown (on the left side) in Figure~\ref{fig:123MotwRequiresInfMemBuchiInfMem} where
$\rstates = \{r_i\}_{i \in\N}$ and
$\zstates=\{t\} \cup \{s_i\}_{i \in\N}$.
For all~$i \in \N$, there are transitions 
$s_i \transition r_i$ and $s_i \transition s_{i+1}$ in~$s_i$ states, whereas
$\probp(r_i,t)=2^{-i}$ and $\probp(r_i,s_0)=1-2^{-i}$ in random states.

\medskip

Let~$\sigma$ be an arbitrary FR-strategy. We prove that~$\probm_{\mdp,\state_0,\zstrat}(\formula)=0$.
By definition there is a transducer~$\memstratn$ with 
finite memory~$\memory$ and initial mode~$\memconf_0$ such that $\zstrat_{\memstratn}=\zstrat$.
Let $\mdp^{\memstratn}$ be the Markov chain obtained by the product of the MDP~$\mdp$ and the transducer~$\memstratn$.
The set of states in~$\mdp^{\memstratn}$ is~$\memory \times \states$. 
We define a coloring function for~$\mdp^{\memstratn}$ such that it ignores the memory mode and assigns to~$(\memconf,\state)$ 
 the same color as  state~$\state$ in~$\mdp$.
By a slight abuse of notation, we use the same notation~$\coloring$ for the coloring functions of both~$\mdp$ and~$\mdp^{\memstratn}$.
We also denote by~$\probm_{\mdp^{\memstratn},q}(\playset)$  the probability of a measurable set~$\playset$ of infinite paths (i.e., infinite plays),
starting in the state~$q$ of~$\mdp^{\memstratn}$. 

\medskip

We prove that $\probm_{\mdp,\state_0,\zstrat}(\always\eventually \colorset \states=2 \wedge \eventually\always \colorset \states\neq 3)=0$.
Equivalently, we show that $\probm_{\mdp^{\memstratn},(\memconf_0,s_0)}(\formula^{\memstratn})=0$ 
where $\formula^{\memstratn}=\always\eventually\colorset{\memory \times \states}{}{=2} \wedge \eventually\always \colorset{\memory \times \states}{}{\neq 3}$.
We proceed in three steps: 
we  will show, using  strong fairness of probabilistic choices
 in Markov chains, that for all  modes~$\memconf \in \memory$,
\begin{equation} \label{eq-str-fair}
\probm_{\mdp^{\memstratn},(\memconf_{0},s_0)}( \always \eventually (\memconf,s_0) \wedge \eventually\always \colorset{\memory \times \states}{}{\neq 3})=0.
\end{equation} 
Moreover,
\begin{equation} \label{eq-gf}
\denotationof{\always \eventually \colorset{\memory \times \states}{}{=2}}{} \subseteq \denotationof{\always \eventually (\memory \times \{s_0\})}{}
\end{equation} 
 
\noindent Since the memory of strategy~$\zstrat$ is finite ($\abs{\memory}<\infty$), we will show that

\begin{equation} \label{eq-last}
\denotationof{\always \eventually (\memory \times \{s_0\})}{}= \denotationof{\bigvee_{\memconf \in \memory}\always \eventually (\memconf,s_0)}{}
\end{equation} 

\noindent Using~(\ref{eq-str-fair})-(\ref{eq-last}) we complete the proof as follows:

\begin{align*}
\probm_{\mdp^{\memstratn},(\memconf_{0},s_0)}&( \always \eventually \colorset {\memory \times \states}{}{=2} \wedge \eventually\always \colorset{\memory \times \states}{}{\neq 3})\\
& \leq \probm_{\mdp^{\memstratn},(\memconf_{0},s_0)} (\bigvee_{\memconf \in \memory} (\always \eventually (\memconf,s_0) \wedge\eventually\always \colorset{\memory \times \states}{}{\neq 3})) &&\text{by (\ref{eq-gf}), (\ref{eq-last})}\\
& \leq \sum_{\memconf \in \memory} \probm_{\mdp^{\memstratn},(\memconf_{0},s_0)}( \always \eventually (\memconf,s_0) \wedge \eventually\always \colorset{\memory \times \states}{}{\neq 3})&& \text{union bound}\\
& =0.  &&\text{by~(\ref{eq-str-fair})}
\end{align*}

As a result $\probm_{\mdp,\state_0,\zstrat}(\formula)=0$. Below, we prove  (\ref{eq-str-fair}), (\ref{eq-gf}) and~(\ref{eq-last}).

\bigskip
We first highlight two properties of the Markov chain~$\mdp^{\memstratn}$. Consider the MDP~$\mdp$:
($i$) the only states with color~$2$ are random states~$r_i\in \rstates$, wherein  the only successors are~$s_0$ and~$t$.
	Hence,  from all states~$(\memconf,r_i)\in \memory\times \rstates$ in~$\mdp^{\memstratn}$,
	there are  successors~$q_1,q_2$ such that $q_1\in \memory\times \{s_0\}$ 
	and $q_2\in \memory\times \{t\}$;  moreover, all successors are in~$\memory\times \{s_0,t\}$. 
($ii$) The state $t$ has the unique successor~$s_0$.
		Hence, in~$\mdp^{\memstratn}$, from all states~$(\memconf,t)\in \memory\times \rstates$ 
	  all successors $q$ are such that~$q\in \memory\times \{s_0\}$.
		
\medskip

To establish~(\ref{eq-str-fair}), let~$(\memconf,s_0) \in \memory \times \{s_0\}$ be some state in the Markov chain~$\mdp^{\memstratn}$.
For the case~$\probm_{\mdp^{\memstratn},(\memconf_{0},s_0)} (\always \eventually (\memconf,s_0))=0$, we trivially have~(\ref{eq-str-fair}).
Therefore, we assume that~$\probm_{\mdp^{\memstratn},(\memconf_{0},s_0)} (\always \eventually (\memconf,s_0))>0$.
So,
there exists an infinite path  satisfying~$\always \eventually (\memconf,s_0)$.
Hence, there exists a finite path~$\pi$ from $(\memconf,s_0)$ to itself.
By the structure of the chain,  $\pi$
has a prefix $(\memconf_{s_0},s_0)(\memconf_{s_1},s_1) \cdots (\memconf_{s_i},s_i) \in (\memory \times \zstates)^*$ traversing player's states, 
with $\memconf_{s_0}=\memconf$, and next visiting some state~$(\memconf_{r_i},r_i)$.
By property~($i$), we know that~$(\memconf_{r_i},r_i)$ has a successor~$(\memconf_t,t)\in \memory\times \{t\}$. 
It implies that 
$$
\pi_t=(\memconf_{s_0},s_0)(\memconf_{s_1},s_1) \cdots (\memconf_{s_i},s_i)(\memconf_{r_i},r_i)
(\memconf_{t},t)
$$
is a finite path in $\mdp^{\memstratn}$, starting in~$(\memconf,s_0)$ with  positive probability. 
By strong fairness of probabilistic choices, 
$\probm_{\mdp^{\memstratn},(\memconf_{0},s_0)} (\always \eventually \pi_t)=\probm_{\mdp^{\memstratn},(\memconf_{0},s_0)}(\always \eventually (\memconf,s_0))$.
In other words,   under the condition that state~$(\memconf,s_0)$ is visited infinitely often, 
the finite path $\pi_t$ will be taken infinitely often, almost-surely.
As an immediate result of this and the fact that~$(\memconf_{t},t)$ has color~$3$, we conclude~(\ref{eq-str-fair}).

\medskip

To establish~(\ref{eq-gf}), 
we need to show that, for all infinite plays~$\pi^{\infty}$ of~$\mdp^{\memstratn}$,
if  $\pi^{\infty} \in \denotationof{\always \eventually \colorset{\memory \times \states}{}{=2}}{}$ 
then $\pi^{\infty}\in \denotationof{\always \eventually (\memory \times \{s_0\})}{}$. 
By the  properties~($i$) and~($ii$) of~$\mdp^{\memstratn}$,  whenever~$\pi^{\infty}$ visits some state from~$\memory\times \rstates$,
which are the only states with color~$2$,
then $\pi^{\infty}$ must visit some state from $\memory\times \{s_0\}$ within two steps. This results in~(\ref{eq-gf}).

\medskip

To establish~(\ref{eq-last}), we observe that
the inclusion $\supseteq$ is trivial. To show~$\subseteq$, let $\pi^{\infty}\in \denotationof{\always \eventually (\memory \times \{s_0\})}{}$ 
be an infinite path in the chain. As $\pi^{\infty}$ visits infinitely many elements from the finite set~$\memory \times \{s_0\}$,
 there must exist some element~$(\memconf ,s_0)$ that is visited infinitely often. Hence, 
$\pi^{\infty} \in \denotationof{\bigvee_{\memconf \in \memory} \always \eventually (\memconf,s_0)}{}$, which gives the inclusion and thus~(\ref{eq-last}). 

\bigskip
\bigskip

Now,  we construct an HD-strategy $\zstrat_h$ such that $\probm_{\mdp,\state_0,\zstrat_h}(\formula)=1$.
The strategy~$\zstrat_h$ is defined, for all partial plays~$\rho$, as follows:
$$
\zstrat_h(\rho)=
  \left\{ 
    \begin{array}{cl}
		  r_0~~~& \text{if } \rho=s_0\\
      r_k~~~& \text{if there exists } k>0 \text{ such that }\rho =(s_0 (\states\setminus\{s_0\})^{*})^{k-1}s_0 s_1 \cdots s_k\\
      s_j~~~&  \text{otherwise, where the last state visited by }\rho \text{ is } s_{j-1}.
    \end{array}
		\right.
$$
Intuitively, $\zstrat_h$ is such that
upon the $k$-th visit to state~$\state_0$,  
the path $\state_0 \state_1 \cdots \state_{k}$ is traversed and then the 
transition~$\state_{k} \transition r_{k}$ is chosen. 
Observe that $\probm_{\mdp,\state_0,\zstrat_h}(\always \eventually \colorset \states=2)=1$.
Below, we argue that $\probm_{\mdp,\state_0,\zstrat_h}(\always \eventually \colorset \states=3)=0$, 
which proves that~$\probm_{\mdp,\state_0,\zstrat_h}(\formula)=1$.

We define the sequence of events~$E_k$
of visiting~$t$ between the  $k$th and $k+1$st visits of~$s_0$.
For $k\geq 1$, let 
$$
E_k=(s_0 (\states\setminus\{s_0\})^{*})^{k-1}s_0 (\states\setminus\{s_0,t\})^* t s_0 S^{\omega}.
$$
Observe that 
$$
\bigcap_{n=1}^{\infty}\bigcup_{k\geq n}^{\infty} \pi(E_k)=\denotationof{\always \eventually \{t\}}{}=\denotationof{\always \eventually \colorset{\states}{}{=3}}{}.
$$
We  use the
Borel-Cantelli lemma to prove  that infinitely many of~$E_k$'s occur with zero probability, 
that is the probability of~$\always \eventually \colorset \states=3$. 
In fact, by construction of~$\zstrat_h$, observe that~$\probm_{\mdp,\state_0,\zstrat_h}(E_k)=2^{-k}$. 
Consequently, we have 
$$
\sum_{k=1}^{\infty}\probm_{\mdp,\state_0,\zstrat_h}(E_k)=1+\frac{1}{2}+\frac{1}{4}+\frac{1}{8}+\cdots=2< \infty.$$
By the  Borel-Cantelli lemma, we then have $\probm_{\mdp,\state_0,\zstrat_h}(\always \eventually \colorset \states=3)=0$, and thus
$\probm_{\mdp,\state_0,\zstrat_h}(\formula)=1$.
The proof is complete.
\end{proof}

\begin{qremark}{\ref{rem:parityonecounter}}
\remarkparityonecounter
\end{qremark}
\begin{proof}
Figure~\ref{fig:parity1counterMDP} shows a $1$-counter 
MDP with control-states~$\{s,r,r',t\}$ that 
is functionally equivalent to the one used 
in Theorem~\ref{thm:parity123} (Figure~\ref{fig:123MotwRequiresInfMemBuchiInfMem} (left)).
It just uses some auxiliary states that have no influence on the parity objective.
Starting in~$s$, the player can choose whether to increase the counter by~$1$
or to go to~$r$. 
In the random state~$r$ the behavior depends on the counter value.
If the counter is non-zero then the successors~$r,r'$ are chosen with equal 
probability and the counter is decreased by~$1$.
If the counter is zero then $t$ is the unique successor.
In state~$r'$ the counter is deterministically decreased until it becomes
zero, and then one goes to state~$s$.
The color function is 
$\coloring(s)=1$, $\coloring(r)=\coloring(r')=2$ and $\coloring(t)=3$.
If one is in state $r$ with counter value $n$, then the probability
of seeing state $t$ before returning to state $s$ is $2^{-n}$. 
 \begin{figure}[h]
      \begin{center}				
        \centering
\begin{tikzpicture}[>=latex',shorten >=1pt,node distance=1.9cm,on grid,auto,
roundnode/.style={circle, draw,minimum size=1.5mm},
squarenode/.style={rectangle, draw,minimum size=2mm},
diamonddnode/.style={diamond, draw,minimum size=2mm}]

\node [squarenode,initial,initial text={}] (s) at(0,0) [draw]{$~s~$};

\node[roundnode,double] (r)  [right=2cm of s] {$~r~$};
\node[roundnode,double] (rr)  [right=2cm of r] {$r'$};

\node [squarenode,double,inner sep = 4pt] (t)  [below left=1.2cmof r] {$ t $};
\node [rectangle, draw,inner sep = 2.4pt] (dum) [below left=1.2cmof r]{$ t $};

\path[->] (s) edge (r);
\path[->] (s) edge [loop above] node [above, midway] {$+1$} (s) ;
\path[->] (r) edge  node [above,midway] {$-1:\frac{1}{2}$} (rr);
\path[->] (r) edge [loop above]  node[above, midway] {$-1:\frac{1}{2}$} (r) ;
\path[->] (rr) edge [loop above]  node[above, midway] {$-1$} (rr);
\path[->, dashed] (r) edge node[right, near end] {$=0?$} (t);
\path[->, dashed] (rr) edge [bend left=110] node[below, midway] {$=0?$} (s);
\path[->] (t) edge  node[above, midway] {}(s);
\end{tikzpicture}
				\end{center}
		\caption{A $1$-counter MDP implementing an MDP similar to the
      one used in Theorem~\ref{thm:parity123}. 
		The dashed transitions (labeled with zero test~$=0?$) are taken when the counter is zero.}
\label{fig:parity1counterMDP}
\end{figure}
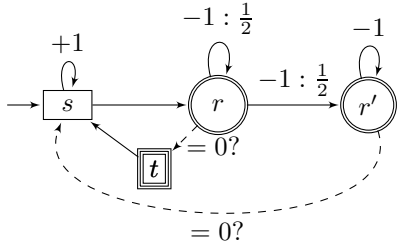
\end{proof}

\medskip
\begin{qtheorem}{\ref{thm:infBranchSafety}}
\thminfBranchSafety
\end{qtheorem}
\begin{proof}

Consider the MDP~$\mdp$ shown (on the left side) in Figure~\ref{fig:infBranchCoBuchiMaxSafety} where
$\zstates=\{s,t\}$ and $\rstates = \{r_i\}_{i \geq 1}$.
The state~$s$ is infinitely branching:  $s \transition r_i$ for  all~$i\geq 1$.
For all random states~$r_i$, there are two successors 
$\probp(r_i,t)=2^{-i}$ and $\probp(r_i,s_0)=1-2^{-i}$. The state~$t$ is a sink state.

\medskip
Let~$\sigma$ be an arbitrary FR-strategy. 
By definition there is a  transducer~$\memstratn$ with 
finite memory~$\memory$ and initial mode~$\memconf_0$ such that $\zstrat_{\memstratn}=\zstrat$.
Let $\mdp^{\memstratn}$ be the Markov chain obtained by the product of the MDP~$\mdp$ and the transducer~$\memstratn$.
The set of states in~$\mdp^{\memstratn}$ is~$\memory \times \states$. 
We  denote by~$\probm_{\mdp^{\memstratn},q}(\playset)$  the probability of a measurable set~$\playset$ of infinite paths (i.e., infinite plays),
starting in the state~$q$ of~$\mdp^{\memstratn}$. 

\medskip

Below, we  prove that~$\probm_{\mdp,\state,\zstrat}(\safety{\{t\}})=0$.
Equivalently, we show that $\probm_{\mdp^{\memstratn},(\memconf_0,s)}(\safety{\memory \times \{t\}})=0$. 
We proceed in the following three steps: We will show, using  strong fairness of probabilistic choices
 in Markov chains, that for all  modes~$\memconf \in \memory$,
\begin{equation} \label{eq-str-fair2}
\probm_{\mdp^{\memstratn},(\memconf_{0},s)}( \always \eventually (\memconf,s) \wedge \neg \eventually (\memory \times \{t\}))=0.
\end{equation} 
 
\noindent Since the memory of strategy~$\zstrat$ is finite ($\abs{\memory}<\infty$), we will show that

\begin{equation} \label{eq-last2}
\denotationof{\always \eventually (\memory \times \{s\})}{}= \denotationof{\bigvee_{\memconf \in \memory}\always \eventually (\memconf,s)}{}
\end{equation} 

\noindent Moreover, we  will prove that
\begin{equation} \label{eq-subset2}
\denotationof{\neg \eventually (\memory \times \{t\})}{}\subseteq \denotationof{\always \eventually  (\memory \times \{s\})}{}.
\end{equation} 

Using~(\ref{eq-str-fair2})-(\ref{eq-subset2}), we complete the proof as follows:
\begin{align*}
\probm_{\mdp^{\memstratn},(\memconf_0,s)}(&\neg \eventually (\memory \times \{t\}))\\
& = \probm_{\mdp^{\memstratn},(\memconf_{0},s)}( \always \eventually  (\memory \times \{s\}) \wedge \neg \eventually (\memory \times \{t\})) &&\text{by (\ref{eq-subset2})}\\
& = \probm_{\mdp^{\memstratn},(\memconf_{0},s)} (\bigvee_{\memconf \in \memory} (\always \eventually (\memconf,s_0) \wedge \neg \eventually (\memory \times \{t\})) &&\text{by  (\ref{eq-last2})}\\
& \leq \sum_{\memconf \in \memory} \probm_{\mdp^{\memstratn},(\memconf_{0},s)}( \always \eventually (\memconf,s_0) \wedge \neg \eventually (\memory \times \{t\}))&& \text{union bound}\\
& =0.  &&\text{by~(\ref{eq-str-fair2})}
\end{align*}

As a result, $\probm_{\mdp,\state_0,\zstrat}(\safety{\{t\}})=0$. Below we prove  (\ref{eq-str-fair2}), (\ref{eq-last2}) and~(\ref{eq-subset2}).

\bigskip

To establish~(\ref{eq-str-fair2}), let~$(\memconf,s) \in \memory \times \{s\}$ be some state in the Markov chain~$\mdp^{\memstratn}$.
For the case~$\probm_{\mdp^{\memstratn},(\memconf_{0},s)} (\always \eventually (\memconf,s))=0$, we trivially have~(\ref{eq-str-fair2}).
Therefore, we assume that~$\probm_{\mdp^{\memstratn},(\memconf_{0},s)} (\always \eventually (\memconf,s))>0$.
So,
there exists an infinite path  satisfying~$\always \eventually (\memconf,s)$.
Hence, there exists a finite path~$\pi$ from $(\memconf,s)$ to itself.
By the structure of the chain,  $\pi$ visits  some state~$(\memconf_{r_i},r_i)$.
In~$\mdp$, for all random states~$r_i\in \rstates$,  the only successors are~$s$ and~$t$.
Hence,  from all states~$(\memconf,r_i)\in \memory\times \rstates$ in~$\mdp^{\memstratn}$,
there is some  successor~$q\in \memory\times \{t\}$.
It implies that $\pi_t=(\memconf,s)(\memconf_{r_i},r_i)(\memconf_{t},t)$
is a finite path in $\mdp^{\memstratn}$, that starts in~$(\memconf,s)$ with  positive probability. 
By strong fairness of probabilistic choices, 
$\probm_{\mdp^{\memstratn},(\memconf_0,s)} (\eventually \pi_t)=\probm_{\mdp^{\memstratn},(\memconf_0,s)}(\always \eventually (\memconf,s))$.
As a result, we conclude~(\ref{eq-str-fair2}).

\medskip

To establish~(\ref{eq-last2}), we observe that
the inclusion $\supseteq$ is trivial. To show~$\subseteq$, let $\pi^{\infty}\in \denotationof{\always \eventually (\memory \times \{s_0\})}{}$ 
be an infinite path in the chain. As $\pi^{\infty}$ visits infinitely many elements from the finite set~$\memory \times \{s_0\}$,
 there must exist some element~$(\memconf ,s_0)$ that is visited infinitely often. Hence, 
$\pi^{\infty} \in \denotationof{\bigvee_{\memconf \in \memory} \always \eventually (\memconf,s_0)}{}$, which gives the inclusion and thus~(\ref{eq-last2}). 

\medskip
To establish~(\ref{eq-subset2}), 
note that in~$\mdp$, all successors of~$s$ are from random states, from which  the only successors are~$s$ and~$t$.
Hence,   for all infinite plays~$\pi^{\infty} \not \in \denotationof{\eventually (\memory \times \{t\})}{}$ in~$\mdp^{\memstratn}$,
$\pi^{\infty}$  alternates between some states from $\memory\times \{s\}$ and next some state from~$\memory\times \rstates$.
It implies that $\pi \in \denotationof{\always \eventually  (\memory \times \{s\})}{}$.

\bigskip
\bigskip

Now, let $c\in [0,1)$. Let~$n$ be such that $c\leq 1-\frac{1}{2^{n}}$. We construct an HD-strategy $\zstrat_n$ such that 
$$\probm_{\mdp,\state,\zstrat_n}(\formula)\geq 1-\frac{1}{2^{n}} \geq c.$$
For all partial plays~$\rho$, we define
$\zstrat_n(\rho)=r_{n+k}$ where $k$ is the number of times~$\rho$ has visited~$s$.
Intuitively, upon the $k$-th visit to~$s$, the strategy~$\zstrat_n$ chooses 
the transition~$s\transition r_{n+k}$.

For all $k\geq 1$, let $E_k$ be the event of visiting~$t$ after the $k$-th visit of~$s$, defined as follows
$$
E_k=(s \rstates)^{k-1}s \rstates t^{\omega}.
$$
Observe that $\denotationof{\eventually \{t\}}{}$ is the disjoint union of all $E_k$ events.
Hence, we have 
\begin{align*}
\probm_{\mdp,\state,\zstrat_n}&(\eventually \{t\})= \\
&\sum_{k=1}^{\infty} \probm_{\mdp,\state,\zstrat_n}(E_k)= &\text{by disjoint union}\\
&\sum_{k=1}^{\infty} \probm_{\mdp,\state,\zstrat_n} (\denotationof{(s \rstates)^{k-1}s \rstates}{}) \cdot \frac{1}{2^{n+k}}\\
&\leq \sum_{k=1}^{\infty} \frac{1}{2^{n+k}}\\
&=\frac{1}{2^{n}}
\end{align*}
This proves that $\probm_{\mdp,\state,\zstrat_n}(\safety{\{t\}})= 1- \probm_{\mdp,\state,\zstrat_n}(\eventually \{t\})\geq 1-\frac{1}{2^{n}}\geq c$.
\end{proof}

\medskip
\begin{qtheorem}{\ref{thm:infBranchCoBuchi}}
\thminfBranchCoBuchi
\end{qtheorem}
\begin{proof}

Consider the MDP~$\mdp$ shown (on the right side) in Figure~\ref{fig:infBranchCoBuchiMaxSafety} where
$\zstates=\{s,t\}$ and $\rstates = \{r_i\}_{i \geq 1}$.
The state~$s$ is infinitely branching:  $s \transition r_i$ for  all~$i \geq 1$.
For all random states~$r_i$, there are two successors 
$\probp(r_i,t)=2^{-i}$ and $\probp(r_i,s)=1-2^{-i}$. The state~$t$ has the unique successor~$s$.

\medskip
Let~$\sigma$ be an arbitrary FR-strategy. 
By definition there is a  transducer~$\memstratn$ with 
finite memory~$\memory$ and initial mode~$\memconf_0$ such that $\zstrat_{\memstratn}=\zstrat$.
Let $\mdp^{\memstratn}$ be the Markov chain obtained by the product of the MDP~$\mdp$ and the transducer~$\memstratn$.
The set of states in~$\mdp^{\memstratn}$ is~$\memory \times \states$. 
We define a coloring function for~$\mdp^{\memstratn}$ such that it ignores the memory mode and assigns to~$(\memconf,\state)$ 
 the same color as  state~$\state$ in~$\mdp$. In particular, all states $q\in\memory \times \{t\}$ have color~$1$.
We use the same notation~$\coloring$ for the coloring functions of both~$\mdp$ and~$\mdp^{\memstratn}$.
We  denote by~$\probm_{\mdp^{\memstratn},q}(\playset)$  the probability of a measurable set~$\playset$ of infinite paths (i.e., infinite plays),
starting in the state~$q$ of~$\mdp^{\memstratn}$. 

\medskip

We prove that $\probm_{\mdp,\state,\zstrat}(\eventually\always \colorset \states\neq 1)=0$.
Equivalently, we show that $\probm_{\mdp^{\memstratn},(\memconf_0,s)}(\eventually\always \colorset{\memory \times \states}{}{\neq 1})=0$.
We proceed in three steps: 
we  will show, using  strong fairness of probabilistic choices
 in Markov chains, that for all  modes~$\memconf \in \memory$,
\begin{equation} \label{eq-str-fair3}
\probm_{\mdp^{\memstratn},(\memconf_{0},s)}( \always \eventually (\memconf,s) \wedge \eventually\always \colorset{\memory \times \states}{}{\neq 1})=0.
\end{equation}

\noindent Since the memory of strategy~$\zstrat$ is finite ($\abs{\memory}<\infty$), we will show that

\begin{equation} \label{eq-last3}
\denotationof{\always \eventually (\memory \times \{s\})}{}= \denotationof{\bigvee_{\memconf \in \memory}\always \eventually (\memconf,s)}{}
\end{equation} 

Moreover, we show that
\begin{equation} \label{eq-inclusion3}
\denotationof{\eventually \always \colorset{\memory \times \states}{}{\neq1}}{}\subseteq \denotationof{ \always \eventually (\memory \times \{s\})}{}
\end{equation}

\noindent Using~(\ref{eq-str-fair3}), (\ref{eq-last3}) and (\ref{eq-inclusion3}), we complete the proof as follows:
\begin{align*}
\probm_{\mdp^{\memstratn},(\memconf_{0},s)}&(\eventually\always \colorset{\memory \times \states}{}{\neq 1})=\\
&\probm_{\mdp^{\memstratn},(\memconf_{0},s)}( \always \eventually (\memory \times \{s\}) \wedge \eventually\always \colorset{\memory \times \states}{}{\neq 1})&&\text{by  (\ref{eq-inclusion3})}\\
& = \probm_{\mdp^{\memstratn},(\memconf_{0},s)} (\bigvee_{\memconf \in \memory} (\always \eventually (\memconf,s) \wedge\eventually\always \colorset{\memory \times \states}{}{\neq 1})) &&\text{by  (\ref{eq-last3})}\\
& \leq \sum_{\memconf \in \memory} \probm_{\mdp^{\memstratn},(\memconf_{0},s)}( \always \eventually (\memconf,s) \wedge \eventually\always \colorset{\memory \times \states}{}{\neq 1})&& \text{union bound}\\
& =0.  &&\text{by~(\ref{eq-str-fair3})}
\end{align*}

As a result, $\probm_{\mdp,\state,\zstrat}(\formula)=0$. Below we prove  (\ref{eq-str-fair3}), (\ref{eq-last3}) and~(\ref{eq-inclusion3}).
   
\medskip
	
To establish~(\ref{eq-str-fair3}), let~$(\memconf,s) \in \memory \times \{s\}$ be some state in the Markov chain~$\mdp^{\memstratn}$.
For the case~$\probm_{\mdp^{\memstratn},(\memconf_{0},s)} (\always \eventually (\memconf,s))=0$, we trivially have~(\ref{eq-str-fair3}).
Therefore, we assume that~$\probm_{\mdp^{\memstratn},(\memconf_{0},s)} (\always \eventually (\memconf,s))>0$.
So,
there exists an infinite path  satisfying~$\always \eventually (\memconf,s)$.
Hence, there exists a finite path~$\pi$ from $(\memconf,s)$ to itself.
By the structure of the chain,  $\pi$ visits  some state~$(\memconf_{r_i},r_i)$.
Consider the MDP~$\mdp$, for all random states~$r_i\in \rstates$,  the only successors are~$s$ and~$t$.
Hence,  from all states~$(\memconf,r_i)\in \memory\times \rstates$ in~$\mdp^{\memstratn}$,
there is some  successor~$q$ such that $q\in \memory\times \{t\}$.
It implies that $\pi_t=(\memconf,s)(\memconf_{r_i},r_i)(\memconf_{t},t)$
is a finite path in $\mdp^{\memstratn}$, that starts in~$(\memconf,s)$ with  positive probability. 
By strong fairness of probabilistic choices, 
$\probm_{\mdp^{\memstratn},(\memconf_0,s)} (\always\eventually \pi_t)=\probm_{\mdp^{\memstratn},(\memconf_0,s)}(\always \eventually (\memconf,s))$.
As a result, we conclude~(\ref{eq-str-fair3}).

\medskip

To establish~(\ref{eq-last3}), we observe that
the inclusion $\supseteq$ is trivial. To show~$\subseteq$, let $\pi^{\infty}\in \denotationof{\always \eventually (\memory \times \{s\})}{}$ 
be an infinite path in the chain. As $\pi^{\infty}$ visits infinitely many elements from the finite set~$\memory \times \{s\}$,
 there must exist some element~$(\memconf ,s)$ that is visited infinitely often. Hence, 
$\pi^{\infty} \in \denotationof{\bigvee_{\memconf \in \memory} \always \eventually (\memconf,s)}{}$, which gives the inclusion and thus~(\ref{eq-last3}). 

To establish~(\ref{eq-inclusion3}), 
note that in~$\mdp$, all infinite plays must visit $s$ infinitely often.
Thus all runs in $\mdp^{\memstratn}$ must visit $\memory \times \{\state\}$
infinitely often.
In particular, this holds for those infinite
runs~$\pi^{\infty} \in \denotationof{ \eventually\always \colorset{\memory \times \states}{}
{\neq 1}}{}$ in~$\mdp^{\memstratn}$.
Thus $\pi^{\infty} \in \denotationof{\always \eventually  (\memory \times \{s\})}{}$.

\bigskip

Now we construct an HD-strategy $\zstrat_h$ such that $\probm_{\mdp,\state,\zstrat_h}(\formula)=1$.
For all partial plays~$\rho$, we define
$\zstrat_h(\rho)=r_{k}$ where $k$ is the number of times~$\rho$ has visited~$s$.
Intuitively, upon the $k$-th visit to~$s$, the strategy~$\zstrat_h$ chooses the 
transition~$s\transition r_{k}$.
Below we argue that $\probm_{\mdp,\state,\zstrat_h}(\always \eventually \colorset \states=1)=0$, 
which proves that~$\probm_{\mdp,\state,\zstrat_h}(\formula)=1$.

We define the sequence of events~$E_k$
of visiting~$t$ between the $k$-th and $k+1$st visits of~$s$.
For $k\geq 1$, let 
$$
E_k=(s (\rstates+ \rstates t))^{k-1}s (\rstates t) s S^{\omega}.
$$
Observe that 
$$
\bigcap_{n=1}^{\infty}\bigcup_{k\geq n}^{\infty} \pi(E_k)=\denotationof{\always \eventually \{t\}}{}=\denotationof{\always \eventually \colorset{\states}{}{=1}}{}.
$$
We  use the
Borel-Cantelli lemma to prove  that infinitely many of~$E_k$'s occur with zero probability, 
that is the probability of~$\always \eventually \colorset \states=1$. 
In fact, by construction of~$\zstrat_h$, observe that~$\probm_{\mdp,\state,\zstrat_h}(E_k)=2^{-k}$. 
Consequently, we have 
$$
\sum_{k=1}^{\infty}\probm_{\mdp,\state_0,\zstrat_h}(E_k)=\frac{1}{2}+\frac{1}{4}+\frac{1}{8}+\cdots=1< \infty.$$
By the  Borel-Cantelli lemma, we then have $\probm_{\mdp,\state,\zstrat_h}(\always \eventually \colorset \states=1)=0$, and thus
$\probm_{\mdp,\state,\zstrat_h}(\formula)=1$.
The proof is complete.
\end{proof}

\subsection{Proofs of Section~\ref{sec-almost-to-optimal}}

\begin{lemma}\label{lem:prefix-ind-optimality}
\lemprefixindoptimality
\end{lemma}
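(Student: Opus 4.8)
The plan is to establish part~1 first and then deduce parts~2 and~3 from it by a one-step decomposition of the value. Throughout, write $C = s_0 s_1 \cdots s_n \states^\omega$ for the cylinder of the given partial play, and observe that since $s_0 \cdots s_n$ is induced by~$\zstrat$, every transition along it has positive probability, so $p := \probm_{\mdp,s_0,\zstrat}(C) > 0$ and the conditioning in part~1 is well defined. Let $\zstrat_n$ denote the residual strategy of~$\zstrat$ after the history $s_0 \cdots s_n$, i.e.\ $\zstrat_n(s_n w) = \zstrat(s_0 \cdots s_{n-1} s_n w)$. By the standard product structure of the measure $\probm_{\mdp,s_0,\zstrat}$ on the cylinder~$C$, together with prefix-independence of~$\formula$ in~$\mdpclass$ (which lets us strip the prefix $s_0 \cdots s_{n-1}$), I would first derive the identity $\probm_{\mdp,s_0,\zstrat}(\denotationof{\formula}{s_0} \mid C) = \probm_{\mdp,s_n,\zstrat_n}(\formula)$. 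Since $\probm_{\mdp,s_n,\zstrat_n}(\formula) \le \valueof{\mdp}{s_n}$ by definition of the value, one direction of part~1 is immediate.

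For the reverse inequality I would argue by contradiction via a splicing construction. Suppose $\probm_{\mdp,s_n,\zstrat_n}(\formula) < \valueof{\mdp}{s_n}$, and pick a strategy $\zstrat^*$ from~$s_n$ with $\probm_{\mdp,s_n,\zstrat^*}(\formula) > \probm_{\mdp,s_n,\zstrat_n}(\formula)$. Define a strategy $\zstrat'$ from~$s_0$ that agrees with~$\zstrat$ on all histories that do \emph{not} extend $s_0 \cdots s_n$, and that switches to (a shifted copy of)~$\zstrat^*$ on all histories extending $s_0 \cdots s_n$. Decomposing $\denotationof{\formula}{s_0}$ along $C$ and its complement, and using that $\zstrat'$ and~$\zstrat$ induce the same measure off~$C$ (they agree on all strict prefixes of $s_0\cdots s_n$) and the same cylinder weight~$p$, prefix-independence gives
\[
\probm_{\mdp,s_0,\zstrat'}(\formula) - \probm_{\mdp,s_0,\zstrat}(\formula)
= p \cdot \bigl(\probm_{\mdp,s_n,\zstrat^*}(\formula) - \probm_{\mdp,s_n,\zstrat_n}(\formula)\bigr) > 0,
\]
contradicting $\probm_{\mdp,s_0,\zstrat}(\formula) = \valueof{\mdp}{s_0}$. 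Hence $\probm_{\mdp,s_n,\zstrat_n}(\formula) = \valueof{\mdp}{s_n}$, which combined with the identity above yields part~1.

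Parts~2 and~3 then follow by conditioning one further step. If $s_n \in \rstates$, then for each $s_{n+1}$ with $\probp(s_n)(s_{n+1}) > 0$ the longer partial play $s_0 \cdots s_n s_{n+1}$ is again induced by~$\zstrat$, so part~1 applied at length $n+1$ gives $\probm_{\mdp,s_0,\zstrat}(\denotationof{\formula}{s_0} \mid s_0 \cdots s_n s_{n+1}\states^\omega) = \valueof{\mdp}{s_{n+1}}$; averaging these against $\probp(s_n)(\cdot)$ recovers the conditional probability given~$C$, which equals $\valueof{\mdp}{s_n}$ by part~1, proving part~2. If $s_n \in \zstates$, the same averaging against $\zstrat(s_0 \cdots s_n)$ expresses $\valueof{\mdp}{s_n}$ as the convex combination $\sum_{s_{n+1} \in \supp(\zstrat(s_0 \cdots s_n))} \zstrat(s_0 \cdots s_n)(s_{n+1}) \cdot \valueof{\mdp}{s_{n+1}}$. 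Since prefix-independence also gives $\valueof{\mdp}{s_{n+1}} \le \valueof{\mdp}{s_n}$ for every successor (play to~$s_{n+1}$ and then $\epsilon$-optimally), a convex combination of terms each at most $\valueof{\mdp}{s_n}$ can equal $\valueof{\mdp}{s_n}$ only if every term with positive weight equals it, giving part~3.

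The main obstacle I anticipate is the careful measure-theoretic bookkeeping in the splicing step: justifying that $\zstrat'$ and~$\zstrat$ really induce the same measure on the complement of~$C$ and the same weight~$p$ on~$C$, and that the conditional measure on~$C$ pushed to suffixes is exactly $\probm_{\mdp,s_n,\zstrat^*}$ (resp.\ $\probm_{\mdp,s_n,\zstrat_n}$). This is where the cylinder-based definition of the measure must be combined cleanly with prefix-independence; once that identity is in place, everything else is a routine one-step expansion.
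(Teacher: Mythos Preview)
Your proposal is correct and follows essentially the same route as the paper's proof: the upper bound in part~1 via the residual strategy, the lower bound by a splicing (``switch to a better strategy after the prefix'') contradiction, and then a one-step expansion using part~1 at length~$n$ and $n+1$ together with the convex-combination observation for part~3. The only cosmetic difference is that you isolate the identity $\probm_{\mdp,s_0,\zstrat}(\denotationof{\formula}{s_0}\mid C)=\probm_{\mdp,s_n,\zstrat_n}(\formula)$ explicitly and compute the difference $p\cdot(\cdots)$, whereas the paper writes the same contradiction as a chain of (in)equalities.
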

\begin{proof}
First we show $\probm_{\mdp,s_0,\zstrat}(\denotationof{\formula}{s_0} \mid s_0 s_1 \cdots s_n \states^\omega) \le \valueof{\mdp}{s_n}$.
Define a strategy $\zstrat' : \states^*\zstates \to \dist(S)$ by $\zstrat'(w) = \zstrat(s_0 s_1 \cdots s_{n-1} w)$ for all $w \in \states^*\zstates$.
Then we have $\probm_{\mdp,s_0,\zstrat}(\denotationof{\formula}{s_0} \mid s_0 s_1 \cdots s_n \states^\omega) = \probm_{\mdp,s_n,\zstrat'}(\denotationof{\formula}{s_n}) \le \valueof{\mdp}{s_n}$.

Next we show $\valueof{\mdp}{s_n} \le \probm_{\mdp,s_0,\zstrat}(\denotationof{\formula}{s_0} \mid s_0 s_1 \cdots s_n \states^\omega)$.
Towards a contradiction, suppose that $\valueof{\mdp}{s_n} > \probm_{\mdp,s_0,\zstrat}(\denotationof{\formula}{s_0} \mid s_0 s_1 \cdots s_n \states^\omega)$.
Then, by the definition of $\valueof{\mdp}{s_n}$, there is a strategy~$\zstrat'$ with $\probm_{\mdp,s_n,\zstrat'}(\denotationof{\formula}{s_n}) > \probm_{\mdp,s_0,\zstrat}(\denotationof{\formula}{s_0} \mid s_0 s_1 \cdots s_n \states^\omega)$.
Define a strategy~$\zstrat''$ that plays according to~$\zstrat$;
if and when partial play $s_0 s_1 \cdots s_n$ is played, then $\zstrat''$ acts like~$\zstrat'$ henceforth;
otherwise $\zstrat''$ continues with~$\zstrat$ forever.
Using prefix-independence we get:
\begin{align*}
& \probm_{\mdp,s_0,\zstrat''}(\denotationof{\formula}{s_0}) \\
& = \probm_{\mdp,s_0,\zstrat''}(\denotationof{\formula}{s_0} \mid s_0 s_1 \cdots s_n \states^\omega) \cdot \probm_{\mdp,s_0,\zstrat''}(s_0 s_1 \cdots s_n \states^\omega) \\
& \ + \probm_{\mdp,s_0,\zstrat''}(\denotationof{\formula}{s_0} \setminus s_0 s_1 \cdots s_n \states^\omega) \\
& = \probm_{\mdp,s_n,\zstrat'}(\denotationof{\formula}{s_n}) \cdot \probm_{\mdp,s_0,\zstrat}(s_0 s_1 \cdots s_n \states^\omega) \\
& \ + \probm_{\mdp,s_0,\zstrat}(\denotationof{\formula}{s_0} \setminus s_0 s_1 \cdots s_n \states^\omega) && \text{def.~of~$\zstrat''$}\\
& > \probm_{\mdp,s_0,\zstrat}(\denotationof{\formula}{s_0} \mid s_0 s_1 \cdots s_n \states^\omega) \cdot \probm_{\mdp,s_0,\zstrat}(s_0 s_1 \cdots s_n \states^\omega) \\
& \ + \probm_{\mdp,s_0,\zstrat}(\denotationof{\formula}{s_0} \setminus s_0 s_1 \cdots s_n \states^\omega) && \text{def.~of~$\zstrat'$}\\
& = \probm_{\mdp,s_0,\zstrat}(\denotationof{\formula}{s_0}) \\
& = \valueof{\mdp}{s_0} && \text{def.~of~$\zstrat$}
\end{align*}
This contradicts the definition of $\valueof{\mdp}{s_0}$.
Hence we have shown item~1.

Towards items 2~and~3, we extend $\zstrat:\states^*\zstates \to \dist(\states)$ to $\zstrat : \states^* \states \to \dist(\states)$ by defining $\zstrat(w s) = \probp(s)$ for $w \in \states^*$ and $s \in \rstates$.
Then we have for all $s_{n+1} \in \states$:
\begin{equation} \label{eq:lem:prefix-ind-optimality}
\probm_{\mdp,s_0,\zstrat}(s_0 s_1 \cdots s_n s_{n+1} \states^\omega)
= \probm_{\mdp,s_0,\zstrat}(s_0 s_1 \cdots s_n \states^\omega)
  \cdot \zstrat(s_0 s_1 \cdots s_n)(s_{n+1})
\end{equation}
Further we have:
\begin{align*}
& \valueof{\mdp}{s_n} \\
& = \probm_{\mdp,s_0,\zstrat}(\denotationof{\formula}{s_0} \mid s_0 s_1 \cdots s_n \states^\omega)
  && \text{by item 1} \\
& = \frac{\probm_{\mdp,s_0,\zstrat}(\denotationof{\formula}{s_0} \cap s_0 s_1 \cdots s_n \states^\omega)}{\probm_{\mdp,s_0,\zstrat}(s_0 s_1 \cdots s_n \states^\omega)} \\
& = \frac{\sum_{s_{n+1} \in \states} \probm_{\mdp,s_0,\zstrat}(\denotationof{\formula}{s_0} \cap s_0 s_1 \cdots s_n s_{n+1} \states^\omega)}{\probm_{\mdp,s_0,\zstrat}(s_0 s_1 \cdots s_n \states^\omega)} \\
& = \frac{1}{\probm_{\mdp,s_0,\zstrat}(s_0 s_1 \cdots s_n \states^\omega)}
\cdot \sum_{s_{n+1} \in \states} \probm_{\mdp,s_0,\zstrat}(s_0 s_1 \cdots s_n s_{n+1} \states^\omega)
\cdot \mbox{} \\
& \hspace{47mm} \mbox{} \cdot \probm_{\mdp,s_0,\zstrat}(\denotationof{\formula}{s_0} \mid s_0 s_1 \cdots s_n s_{n+1} \states^\omega) \\
& = \sum_{s_{n+1} \in \states}  \zstrat(s_0 s_1 \cdots s_n)(s_{n+1})
\cdot \probm_{\mdp,s_0,\zstrat}(\denotationof{\formula}{s_0} \mid s_0 s_1 \cdots s_n s_{n+1} \states^\omega)
 && \text{by~\eqref{eq:lem:prefix-ind-optimality}} \\
& = \sum_{s_{n+1} \in \states}  \zstrat(s_0 s_1 \cdots s_n)(s_{n+1})
\cdot \valueof{\mdp}{s_{n+1}} 
  && \text{by item 1}
\end{align*}
Thus we have shown item~2.
Towards item~3, suppose $s_n \in \zstates$.
Then prefix-independence implies $\valueof{\mdp}{s_n} \ge \valueof{\mdp}{s_{n+1}}$ for all $s_{n+1}$ with $s_n \transition s_{n+1}$.
Since $\zstrat(s_0 s_1 \cdots s_n)$ is a probability distribution, the equality chain above shows that $\valueof{\mdp}{s_n} = \valueof{\mdp}{s_{n+1}}$ for all $s_{n+1} \in \supp(\sigma(s_0 s_1 \cdots s_n))$.
Thus we have shown item 3.
\end{proof}

\begin{qlemma}{\ref{lem:conditioned-construction}}
\lemconditionedconstruction
\end{qlemma}

\medskip
\begin{proof}
Note that by Lemma~\ref{lem:prefix-ind-optimality}.2 we have that $\pprobp(s)$ is a probability distribution for all $s \in \prstates$; hence the MDP~$\pmdp$ is well-defined.

We prove item~1 by induction on~$n$.
For $n=0$ it is trivial.
For the step, suppose that the equality in item~1 holds for some~$n$.
If $s_n \in \prstates$ then we have:
\begin{align*}
& \probm_{\pmdp,s_0,\zstrat}(s_0 s_1 \cdots s_n s_{n+1} \states^\omega) \\
& = \probm_{\pmdp,s_0,\zstrat}(s_0 s_1 \cdots s_n \states^\omega) \cdot \pprobp(s_n)(s_{n+1}) \\
& = \probm_{\mdp,s_0,\zstrat}(s_0 s_1 \cdots s_n \states^\omega) \cdot \frac{\valueof{\mdp}{s_n}}{\valueof{\mdp}{s_0}} \cdot \pprobp(s_n)(s_{n+1}) && \text{ind.\ hyp.} \\
& = \probm_{\mdp,s_0,\zstrat}(s_0 s_1 \cdots s_n \states^\omega) \cdot \frac{\valueof{\mdp}{s_n}}{\valueof{\mdp}{s_0}} \cdot \probp(s_n)(s_{n+1}) \cdot \frac{\valueof{\mdp}{s_{n+1}}}{\valueof{\mdp}{s_n}} && \text{def.~of~$\pprobp$} \\
& = \probm_{\mdp,s_0,\zstrat}(s_0 s_1 \cdots s_n s_{n+1} \states^\omega) \cdot \frac{\valueof{\mdp}{s_{n+1}}}{\valueof{\mdp}{s_0}}
\end{align*}
Let now $s_n \in \pzstates$.
If $\zstrat(s_0 s_1 \ldots s_n)(s_{n+1}) = 0$ then the inductive step is trivial.
Otherwise we have:
\begin{align*}
& \probm_{\pmdp,s_0,\zstrat}(s_0 s_1 \cdots s_n s_{n+1} \states^\omega) \\
& = \probm_{\pmdp,s_0,\zstrat}(s_0 s_1 \cdots s_n \states^\omega) \cdot \zstrat(s_0 s_1 \ldots s_n)(s_{n+1}) \\
& = \probm_{\mdp,s_0,\zstrat}(s_0 s_1 \cdots s_n \states^\omega) \cdot \frac{\valueof{\mdp}{s_n}}{\valueof{\mdp}{s_0}} \cdot \zstrat(s_0 s_1 \ldots s_n)(s_{n+1}) && \text{ind.\ hyp.} \\
& = \probm_{\mdp,s_0,\zstrat}(s_0 s_1 \cdots s_n \states^\omega) \cdot \frac{\valueof{\mdp}{s_{n+1}}}{\valueof{\mdp}{s_0}} \cdot \zstrat(s_0 s_1 \ldots s_n)(s_{n+1}) && \text{def.~of~$\mathord{\ptransition}$} \\
& = \probm_{\mdp,s_0,\zstrat}(s_0 s_1 \cdots s_n s_{n+1} \states^\omega) \cdot \frac{\valueof{\mdp}{s_{n+1}}}{\valueof{\mdp}{s_0}}
\end{align*}
This completes the inductive step, and we have proved item~1.

Towards item~2, let $s_0 \in \pstates$ and $\zstrat \in \zstratset_{\mdp}$ such that $\probm_{\mdp,s_0,\zstrat}(\formula) = \valueof{\mdp}{s_0} > 0$.
Observe that $\zstrat$ can be applied also in the MDP~$\pmdp$.
Indeed, for any $s \in \pzstates$, if $t$ is a possible successor state of~$s$ under~$\zstrat$, then $\valueof{\mdp}{s} = \valueof{\mdp}{t}$ by Lemma~\ref{lem:prefix-ind-optimality}.3 and thus $t \in \pstates$.

Let again $n \ge 0$ and $s_0, s_1, \ldots, s_n \in \states$.
\begin{itemize}
\item
Suppose $s_0 s_1 \cdots s_n$ is a partial play in~$\pmdp$ induced by~$\zstrat$.
Then we have:
\begin{align*}
& \probm_{\pmdp,s_0,\zstrat}(s_0 s_1 \cdots s_n \states^\omega) \cdot \probm_{\mdp,s_0,\zstrat}(\formula) \\
& = \probm_{\mdp,s_0,\zstrat}(s_0 s_1 \cdots s_n \states^\omega) \cdot \frac{\valueof{\mdp}{s_n}}{\valueof{\mdp}{s_0}}
\cdot \probm_{\mdp,s_0,\zstrat}(\formula)
 && \text{item~1} \\
& = \probm_{\mdp,s_0,\zstrat}(s_0 s_1 \cdots s_n \states^\omega) \cdot \valueof{\mdp}{s_n}
 && \text{assumption~on~$\zstrat$} \\
& = \probm_{\mdp,s_0,\zstrat}(s_0 s_1 \cdots s_n \states^\omega) \cdot \probm_{\mdp,s_0,\zstrat}(\denotationof{\formula}{s_0} \mid s_0 s_1 \cdots s_n \states^\omega)
 && \text{Lemma~\ref{lem:prefix-ind-optimality}.1} \\
& = \probm_{\mdp,s_0,\zstrat}(\denotationof{\formula}{s_0} \cap s_0 s_1 \cdots s_n \states^\omega)
\end{align*}
\item
Suppose $s_0 s_1 \cdots s_n$ is not a partial play in~$\pmdp$ induced by~$\zstrat$.
Hence $\probm_{\pmdp,s_0,\zstrat}(s_0 s_1 \cdots s_n \states^\omega) = 0$.
If $s_0 s_1 \cdots s_n$ is not a partial play in~$\mdp$ induced by~$\zstrat$ then $\probm_{\mdp,s_0,\zstrat}(s_0 s_1 \cdots s_n \states^\omega) = 0$.
Otherwise, since $\zstrat$ is optimal, there is $i \le n$ with $\valueof{\mdp}{s_i} = 0$, hence $\probm_{\mdp,s_0,\zstrat}(\denotationof{\formula}{s_0} \cap s_0 s_1 \cdots s_n \states^\omega)$.
In either case we have
$
\probm_{\pmdp,s_0,\zstrat}(s_0 s_1 \cdots s_n \states^\omega) \cdot \probm_{\mdp,s_0,\zstrat}(\formula)
= 0 
= \probm_{\mdp,s_0,\zstrat}(\denotationof{\formula}{s_0} \cap s_0 s_1 \cdots s_n \states^\omega)
$.
\end{itemize}
In either case we have the equality $\probm_{\pmdp,s_0,\zstrat}(\playset) = \probm_{\mdp,s_0,\zstrat}(\playset \mid \denotationof{\formula}{s_0})$ for cylinders $\playset = s_0 s_1 \cdots s_n \states^\omega$.
Since probability measures extend uniquely from cylinders~\cite{billingsley-1995-probability}, the equality holds for all measurable $\playset \subseteq s_0 \states^\omega$.
Thus we have shown item~2.
\end{proof}


\begin{qlemma} {\ref{lem:measure-theory}}
\lemmeasuretheory
\end{qlemma}
\begin{proof}
Let $\classcyl = \{\cyl \subseteq s \states^\omega \mid \cyl \text{ cylinder}\}$ denote the class of cylinders.
This class generates an algebra $\classcyl_* \supseteq \classcyl$, which is the closure of~$\classcyl$ under finite union and complement.
The classes $\classcyl$ and $\classcyl_*$ generate the same $\sigma$-algebra $\sigma(\classcyl)$.
The class~$\classcyl_*$ is the set of finite disjoint unions of cylinders~\cite[Section~2]{billingsley-1995-probability}.
Hence $x \cdot \probm(\playset) \le \probm'(\playset)$ for all $\playset \in \classcyl_*$.

Define
\[
 \classmon = \{\playset \in \sigma(\classcyl) \mid x \cdot \probm(\playset) \le \probm'(\playset) \}\,.
\]
We have $\classcyl \subseteq \classcyl_* \subseteq \classmon \subseteq \sigma(\classcyl)$.
We show that $\classmon$ is a \emph{monotone} class, i.e., if $\playset_1, \playset_2, \ldots \in \classmon$, then $\playset_1 \subseteq \playset_2 \subseteq \cdots$ implies $\bigcup_i \playset_i \in \classmon$, and $\playset_1 \supseteq \playset_2 \supseteq \cdots$ implies $\bigcap_i \playset_i \in \classmon$.
Suppose $\playset_1, \playset_2, \ldots \in \classmon$ and $\playset_1 \subseteq \playset_2 \subseteq \cdots$.
Then:
\begin{align*}
x \cdot \probm\Big(\bigcup_i \playset_i\Big) 
& = \sup_i x \cdot \probm(\playset_i) && \text{measures are continuous from below} \\
& \le \sup_i \probm'(\playset_i) && \text{definition of~$\classmon$} \\
& = \probm'\Big(\bigcup_i \playset_i\Big) && \text{measures are continuous from below}
\end{align*}
So $\bigcup_i \playset_i \in \classmon$.
Using the fact that measures are continuous from above, one can similarly show that if $\playset_1, \playset_2, \ldots \in \classmon$ and $\playset_1 \supseteq \playset_2 \supseteq \cdots$ then $\bigcap_i \playset_i \in \classmon$.
Hence $\classmon$ is a monotone class.

Now the \emph{monotone class theorem} (see, e.g., \cite[Theorem~3.4]{billingsley-1995-probability}) implies that $\sigma(\classcyl) \subseteq \classmon$, thus $\classmon = \sigma(\classcyl)$.
Hence $x \cdot \probm(\playset) \le \probm'(\playset)$ for all $\playset \in \sigma(\classcyl)$.
\end{proof}

\subsection{Proofs of Section~\ref{sec-inf-MDPs}}

\begin{qtheorem}{\ref{thm:reach-opt}}
\lemreachopt
\end{qtheorem}
\begin{proof}
We can assume that $\reachset = \{t\}$ for some $t \in \states$.
We can also assume that all states are almost-surely winning, since in order to achieve an almost-sure winning objective, the player must forever remain in almost-surely winning states.

Let $\epsilon_1 := 1/2$.
By Theorem~\ref{thm:reach-eps} there exists an MD-strategy~$\zstrat_1$ such that $\probm_{\mdp,s_0,\zstrat_1}(\formula) \ge 1 - \epsilon_1$.
In fact, by the proof of Theorem~\ref{thm:reach-eps} there exists a finite subset $V_1 \subseteq \states$ such that $\probm_{\mdp,s_0,\zstrat_1}(\denotationof{\formula}{s_0} \cap V_1^\omega) \ge 1 - \epsilon_1$.
Let $U_1$ denote the states that occur in those plays that are both contained in $\denotationof{\formula}{s_0} \cap V_1^\omega$ and induced by~$\zstrat_1$.
Then $\probm_{\mdp,s_0,\zstrat_1}(\denotationof{\formula}{s_0} \cap U_1^\omega) = \probm_{\mdp,s_0,\zstrat_1}(\denotationof{\formula}{s_0} \cap V_1^\omega) \ge 1 - \epsilon_1$.
By the definition of~$U_1$, for all $s \in U_1$ the MD-strategy~$\zstrat_1$ induces a play from~$s_0$ to~$t$ via~$s$.
Hence we have $\probm_{\mdp,s,\zstrat_1}(\denotationof{\formula}{s} \cap U_1^\omega) > 0$.
Since $U_1 \subseteq V_1$ is finite, there are $c>0$ and $\ell \in \nat$ such that for all $s \in U_1$ we have $\probm_{\mdp,s,\zstrat_1}(s U_1^{\le\ell-1} \{t\}^\omega) \ge c$, i.e., from any state in~$U_1$ the probability that $t$ is reached in $\le\ell$~steps is at least~$c$.

Consider the MDP~$\mdp_1$ obtained from~$\mdp$ by fixing~$\zstrat_1$ on~$U_1$ (i.e., in~$\mdp_1$ the states in~$U_1$ are random states).
We argue that all states are almost-surely winning in~$\mdp_1$.
Indeed, let $s \in \states$ be any state.
Recall that $s$ is almost-surely winning in~$\mdp$.
Define an HR-strategy~$\zstrat$ in~$\mdp_1$ as follows:
first play a strategy that is almost-surely winning for~$s$ in~$\mdp$;
if and when $U_1$ is entered and then left again (entering some state $s' \in \states \setminus U_1$) then forget the history and play again a strategy that is almost-surely winning for~$s'$ in~$\mdp$; and so forth.
This strategy~$\zstrat$ reaches $\{t\}$ with probability~$1$ whenever the play
stays outside of~$U_1$.
I.e., almost all plays that eventually always avoid $U_1$ reach $\{t\}$.
Moreover, whenever the play enters~$U_1$, the probability that $t$ is reached
in $\le\ell$~steps is at least~$c$, i.e., there is a uniform bound.
Thus almost all plays that enter $U_1$ infinitely often reach $\{t\}$.
It follows that we have $\probm_{\mdp_1,\state,\zstrat}(\formula) = 1$.

Now we repeat the argument, but with $\mdp_1$ instead of~$\mdp$ and with $\epsilon_2 = 1/4$ instead of~$\epsilon_1$.
This yields a set $U_2 \supseteq U_1$ and an MD-strategy~$\zstrat_2$ that agrees with~$\zstrat_1$ on~$U_1$ so that $\probm_{\mdp,s_0,\zstrat_2}(\denotationof{\formula}{s_0} \cap U_2^\omega) = \probm_{\mdp_1,s_0,\zstrat_2}(\denotationof{\formula}{s_0} \cap U_2^\omega) = 1 - \epsilon_2$.
Similarly as before, obtain an MDP~$\mdp_2$ from~$\mdp_1$ by fixing $\zstrat_2$ on~$U_2$.
Then repeat again, and so forth, with $\epsilon_i = 1/2^i$ for $i = 1, 2, \ldots$

Define $U := \bigcup_{i \ge 1} U_i$.
Observe that on all $s \in U$ almost all (i.e., all except finitely many) strategies~$\zstrat_i$ agree.
Let $\hat \zstrat$ be an MD-strategy that on all states in~$U$ agrees with almost all MD-strategies~$\zstrat_i$.
By our construction we have $\probm_{\mdp,s_0,\hat\zstrat}(\denotationof{\formula}{s_0} \cap U^\omega) \ge 1 - \epsilon$ for all $\epsilon > 0$.
Hence $\probm_{\mdp,s_0,\hat\zstrat}(\denotationof{\formula}{s_0} \cap U^\omega) = 1$.
\end{proof}

\medskip

\begin{qproposition}{\ref{prop:as-Buchi}}
\propasBuchi
\end{qproposition}

\begin{proof}
We can assume that all states are almost-surely winning, since in order to achieve an almost-sure winning objective, the player must forever remain in almost-surely winning states.
We provide an MD-strategy~$\hat\zstrat$ such that for all states $s \in \states$ we have $\probm_{\mdp,\state,\hat\zstrat}(\formula)=1$.

Set $\formula' = \reach{\colorset \states = 2}$.
Note that $\denotationof{\formula}{} \subseteq \denotationof{\formula'}{}$.
Since all states are almost-surely winning for~$\formula$, all states are almost-surely winning for~$\formula'$.
By Theorem~\ref{thm:reach-opt} and Lemma~\ref{lem:MDP-as-uniform} there is an MD-strategy~$\hat\zstrat$ such that for all states $s \in \states$ we have $\probm_{\mdp,\state,\hat\zstrat}(\formula')=1$.
That is, $\hat\zstrat$ reaches the set $\colorset \states = 2$ with probability~$1$, regardless of the start state.
It follows that it reaches, with probability~$1$, the set $\colorset \states = 2$ infinitely often.
Hence $\probm_{\mdp,\state,\hat\zstrat}(\formula)=1$ holds for all $s \in \states$.
\end{proof}

%


\subsection{Proofs of Subsection~\ref{subsec-safety}}
\begin{qproposition}{\ref{prop:optimal-avoiding}}
\propoptimalavoiding
\end{qproposition}

\begin{proof}
We can assume that $\reachset$ is a sink.
Fix a state $s_0$.
Write $s_0 s_1 s_2 \cdots \in s S^\omega$ for a random run, i.e., $s_1, s_2, \ldots$ denote random states.
For any $n \in \nat$ let $[\next^n \neg \reachset] : s_0 \states^\omega \to \{0,1\}$ be the random variable that indicates if $s_n \not\in \reachset$.
Note that $[\next^n \neg \reachset] \ge \valueof{\mdp}{s_n}$.
Writing $\mathcal{E}_{\mdp,s_0,\optav}$ for the expectation with respect to~$\probm_{\mdp,s_0,\optav}$, we have:
\begin{align*}
& \probm_{\mdp,s_0,\optav}(\formula) \\
& = \probm_{\mdp,s_0,\optav}\Big(\bigcap_{i=0}^\infty \denotationof{\next^i \neg \reachset}{s_0}\Big)
 && \text{semantics of~$\formula$} \\
& = \lim_{n\to\infty} \probm_{\mdp,s_0,\optav}\Big(\bigcap_{i=0}^n \denotationof{\next^i \neg \reachset}{s_0}\Big)
 && \text{measures are continuous from above} \\
& = \lim_{n\to\infty} \probm_{\mdp,s_0,\optav}(\next^n \neg \reachset)
 && \text{$\reachset$ is a sink} \\
& = \lim_{n\to\infty} \mathcal{E}_{\mdp,s_0,\optav}([\next^n \neg \reachset])
 && \text{definition of $[\next^n \neg \reachset]$} \\
& \ge \liminf_{n\to\infty} \mathcal{E}_{\mdp,s_0,\optav}(\valueof{\mdp}{s_n})
 && \text{$[\next^n \neg \reachset] \ge \valueof{\mdp}{s_n}$}
\end{align*}
A straightforward induction on~$n$, using the definition of~$\optav$, shows that $\mathcal{E}_{\mdp,s_0,\optav}(\valueof{\mdp}{s_n}) = \valueof{\mdp}{s_0}$ holds for all $n \in \nat$.
Hence $\probm_{\mdp,s_0,\optav}(\formula) \ge \valueof{\mdp}{s_0}$.
The converse inequality holds by the definition of the value.
Hence $\probm_{\mdp,s_0,\optav}(\formula) = \valueof{\mdp}{s_0}$.
\end{proof} 



\begin{qlemma}{\ref{lem:as012-return-to-safe}}
\lemaszotreturntosafe
\end{qlemma}
\begin{proof}
For any $n \in \nat$ define $Z_n = \left(\colorset \states = 0\right)^n$.
That is, $Z_n S^\omega$ is the event that the first $n$ visited states have color~$0$.
For any state $\state \not\in \safe{\tau}$, let $n(\state) \in \nat$ be the smallest number such that $\probm_{\mdp,\state,\zstrat}(Z_{n(s)} S^\omega) \le (1+\tau)/2$.
This is well-defined.

Let $L \subseteq \states^*$ be the set of finite sequences $s_0 s_1 \cdots s_{n-1}$ such that $s_0 \not\in \safe{\tau}$ and $n = n(s_0)$ and $\forall i < n .\, s_i \in \colorset \states = 0 \setminus \safe{\tau}$.
We show for all $s \in \states \setminus \safe{\tau}$ and all $k \in \nat$ that $\probm_{\mdp,s,\zstrat}(L^k \states^\omega) \le \left(\frac{1+\tau}{2}\right)^k$.
We proceed by induction on~$k$.
The case $k=0$ is trivial.
For the induction step, let $k \ge 0$.
We have:
\begin{align*}
& \probm_{\mdp,s,\zstrat}(L^{k+1} \states^\omega) \\
& \le \probm_{\mdp,s,\zstrat}(Z_{n(s)} L^{k} \states^\omega) && \text{as $L \cap \{s\} S^* \subseteq Z_{n(s)}$}\\
& \le \probm_{\mdp,s,\zstrat}(Z_{n(s)} \states^\omega) \cdot \sup_{s' \in \states \setminus \safe{\tau}} \probm_{\mdp,s',\zstrat}(L^{k} \states^\omega) \\
& \le \probm_{\mdp,s,\zstrat}(Z_{n(s)} \states^\omega) \cdot \left(\frac{1+\tau}{2}\right)^k  && \text{induction hypothesis}\\
& \le \left(\frac{1+\tau}{2}\right)^{k+1}  && \text{definition of $n(s)$}
\end{align*}
This completes the induction proof.
Write $\formula := \always \neg\safe{\tau} \land \always \colorset \states = 0$.
We have for all $s \in \states$:
\begin{align*}
\probm_{\mdp,s,\zstrat}(\formula)
& = \probm_{\mdp,s,\zstrat}(L^\omega) && \text{$\denotationof{\formula}{} = L^\omega$}\\
& = \lim_{k \to \infty} \probm_{\mdp,s,\zstrat}(L^k \states^\omega) \\
& \le \lim_{k \to \infty} \left(\frac{1+\tau}{2}\right)^k && \text{as shown above} \\
& = 0 && \text{$\tau < 1$}
\end{align*}
It follows:
\[
\probm_{\mdp,s,\zstrat}(\next^j \formula) = 0 
\qquad \text{for all $s \in \states$ and all $j \in \nat$}
\]
Thus we have:
\begin{align*}
\probm_{\mdp,\state,\zstrat}(\eventually \always \neg\safe{\tau} \land \eventually \always \colorset \states = 0)
& \ = \ \probm_{\mdp,\state,\zstrat}(\eventually \formula) \\
& \ = \ \probm_{\mdp,\state,\zstrat}\Big(\bigcup_{j \in \nat} \denotationof{\next^j \formula}{s}\Big) \\
& \ \le \ \sum_{j \in \nat} \probm_{\mdp,\state,\zstrat}(\next^j \formula) \\
& \ = \ 0
\end{align*}
\end{proof}

\subsection{Proofs of Subsection~\ref{subsec-as012}}

\begin{qlemma} {\ref{lem-as-partition-scheme}}
\lemaspartitionscheme
\end{qlemma}
\begin{proof}
We have:
\begin{align*}
\probm(E)
& = \probm\left( \bigcup_{i \in I} (\playset_i \cap E) \right)
= \sum_{i \in I} \probm( \playset_i \cap E )
= \sum_{i \in I} \probm( \playset_i )
= \probm\left( \bigcup_{i \in I} \playset_i \right) \\
& = \probm(\Omega) = 1 
\end{align*}
\end{proof}

\end{document}